\newtheorem{theorem}{Theorem}
\newtheorem{corollary}{Corollary}
\newtheorem{lemma}{Lemma}
\newtheorem{proposition}{Proposition}
\newenvironment{proof}[1][Proof]{\noindent \textbf{#1.} }{\  \rule{0.5em}{0.5em}}
\definecolor{darkgreen}{rgb}{0.0, 0.5, 0.13}
\newcommand\IgnoreThisText[1]{}
\let\pdfoutput=\undefined\fi
\chardef\@x10\chardef\@xv60
\def\tcitime{
\def\@time{%
  \@minute\time\@hour\@minute\divide\@hour\@xv
  \ifnum\@hour<\@x 0\fi\the\@hour:%
  \multiply\@hour\@xv\advance\@minute-\@hour
  \ifnum\@minute<\@x 0\fi\the\@minute
  }}%
\def\x@hyperref#1#2#3{%
   \catcode`\~ = 12
   \catcode`\$ = 12
   \catcode`\_ = 12
   \catcode`\# = 12
   \catcode`\& = 12
   \catcode`\% = 12
   \y@hyperref{#1}{#2}{#3}%
}
\def\y@hyperref#1#2#3#4{%
   #2\ref{#4}#3
   \catcode`\~ = 13
   \catcode`\$ = 3
   \catcode`\_ = 8
   \catcode`\# = 6
   \catcode`\& = 4
   \catcode`\% = 14
}
\def\QCTOpt[#1]#2{%
  \def\QCTOptB{#1}
  \def\QCTOptA{#2}
}
\def\QCTNOpt#1{%
  \def\QCTOptA{#1}
  \let\QCTOptB\empty
}
\def\Qct{%
  \@ifnextchar[{%
    \QCTOpt}{\QCTNOpt}
}
\def\QCBOpt[#1]#2{%
  \def\QCBOptB{#1}%
  \def\QCBOptA{#2}%
}
\def\QCBNOpt#1{%
  \def\QCBOptA{#1}%
  \let\QCBOptB\empty
}
\def\Qcb{%
  \@ifnextchar[{%
    \QCBOpt}{\QCBNOpt}%
}
\def\PrepCapArgs{%
  \ifx\QCBOptA\empty
    \ifx\QCTOptA\empty
      {}%
    \else
      \ifx\QCTOptB\empty
        {\QCTOptA}%
      \else
        [\QCTOptB]{\QCTOptA}%
      \fi
    \fi
  \else
    \ifx\QCBOptA\empty
      {}%
    \else
      \ifx\QCBOptB\empty
        {\QCBOptA}%
      \else
        [\QCBOptB]{\QCBOptA}%
      \fi
    \fi
  \fi
}
\def\GRAPHICSPS#1{%
 \ifcase\GRAPHICSTYPE
   \special{ps: #1}%
 \or
   \special{language "PS", include "#1"}%
 \fi
}%
\def\graffile#1#2#3#4{%
    \bgroup
	   \@inlabelfalse
       \leavevmode
       \@ifundefined{bbl@deactivate}{\def~{\string~}}{\activesoff}%
        \raise -#4 \BOXTHEFRAME{%
           \hbox to #2{\raise #3\hbox to #2{\null #1\hfil}}}%
    \egroup
}%
\def\draftbox#1#2#3#4{%
 \leavevmode\raise -#4 \hbox{%
  \frame{\rlap{\protect\tiny #1}\hbox to #2%
   {\vrule height#3 width\z@ depth\z@\hfil}%
  }%
 }%
}%
\let\nographics=\@msidraft
\newif\ifwasdraft
\def\GRAPHIC#1#2#3#4#5{%
   \ifnum\@msidraft=\@ne\draftbox{#2}{#3}{#4}{#5}%
   \else\graffile{#1}{#3}{#4}{#5}%
   \fi
}
\def\addtoLaTeXparams#1{%
    \edef\LaTeXparams{\LaTeXparams #1}}%
\newif\ifBoxFrame \BoxFramefalse
\newif\ifOverFrame \OverFramefalse
\newif\ifUnderFrame \UnderFramefalse
\def\BOXTHEFRAME#1{%
   \hbox{%
      \ifBoxFrame
         \frame{#1}%
      \else
         {#1}%
      \fi
   }%
}
\def\doFRAMEparams#1{\BoxFramefalse\OverFramefalse\UnderFramefalse\readFRAMEparams#1\end}%
\def\readFRAMEparams#1{%
 \ifx#1\end%
  \let\next=\relax
  \else
  \ifx#1i\dispkind=\z@\fi
  \ifx#1d\dispkind=\@ne\fi
  \ifx#1f\dispkind=\tw@\fi
  \ifx#1t\addtoLaTeXparams{t}\fi
  \ifx#1b\addtoLaTeXparams{b}\fi
  \ifx#1p\addtoLaTeXparams{p}\fi
  \ifx#1h\addtoLaTeXparams{h}\fi
  \ifx#1X\BoxFrametrue\fi
  \ifx#1O\OverFrametrue\fi
  \ifx#1U\UnderFrametrue\fi
  \ifx#1w
    \ifnum\@msidraft=1\wasdrafttrue\else\wasdraftfalse\fi
    \@msidraft=\@ne
  \fi
  \let\next=\readFRAMEparams
  \fi
 \next
 }%
\def\IFRAME#1#2#3#4#5#6{%
      \bgroup
      \let\QCTOptA\empty
      \let\QCTOptB\empty
      \let\QCBOptA\empty
      \let\QCBOptB\empty
      #6%
      \parindent=0pt
      \leftskip=0pt
      \rightskip=0pt
      \setbox0=\hbox{\QCBOptA}%
      \@tempdima=#1\relax
      \ifOverFrame
          \typeout{This is not implemented yet}%
          \show\HELP
      \else
         \ifdim\wd0>\@tempdima
            \advance\@tempdima by \@tempdima
            \ifdim\wd0 >\@tempdima
               \setbox1 =\vbox{%
                  \unskip\hbox to \@tempdima{\hfill\GRAPHIC{#5}{#4}{#1}{#2}{#3}\hfill}%
                  \unskip\hbox to \@tempdima{\parbox[b]{\@tempdima}{\QCBOptA}}%
               }%
               \wd1=\@tempdima
            \else
               \textwidth=\wd0
               \setbox1 =\vbox{%
                 \noindent\hbox to \wd0{\hfill\GRAPHIC{#5}{#4}{#1}{#2}{#3}\hfill}\\%
                 \noindent\hbox{\QCBOptA}%
               }%
               \wd1=\wd0
            \fi
         \else
            \ifdim\wd0>0pt
              \hsize=\@tempdima
              \setbox1=\vbox{%
                \unskip\GRAPHIC{#5}{#4}{#1}{#2}{0pt}%
                \break
                \unskip\hbox to \@tempdima{\hfill \QCBOptA\hfill}%
              }%
              \wd1=\@tempdima
           \else
              \hsize=\@tempdima
              \setbox1=\vbox{%
                \unskip\GRAPHIC{#5}{#4}{#1}{#2}{0pt}%
              }%
              \wd1=\@tempdima
           \fi
         \fi
         \@tempdimb=\ht1
         \advance\@tempdimb by -#2
         \advance\@tempdimb by #3
         \leavevmode
         \raise -\@tempdimb \hbox{\box1}%
      \fi
      \egroup%
}%
\def\DFRAME#1#2#3#4#5{%
  \vspace\topsep
  \hfil\break
  \bgroup
     \leftskip\@flushglue
	 \rightskip\@flushglue
	 \parindent\z@
	 \parfillskip\z@skip
     \let\QCTOptA\empty
     \let\QCTOptB\empty
     \let\QCBOptA\empty
     \let\QCBOptB\empty
	 \vbox\bgroup
        \ifOverFrame 
           #5\QCTOptA\par
        \fi
        \GRAPHIC{#4}{#3}{#1}{#2}{\z@}%
        \ifUnderFrame 
           \break#5\QCBOptA
        \fi
	 \egroup
  \egroup
  \vspace\topsep
  \break
}%
\def\FFRAME#1#2#3#4#5#6#7{%
  \@ifundefined{floatstyle}
    {
     \begin{figure}[#1]%
    }
    {
	 \ifx#1h
      \begin{figure}[H]%
	 \else
      \begin{figure}[#1]%
	 \fi
	}
  \let\QCTOptA\empty
  \let\QCTOptB\empty
  \let\QCBOptA\empty
  \let\QCBOptB\empty
  \ifOverFrame
    #4
    \ifx\QCTOptA\empty
    \else
      \ifx\QCTOptB\empty
        \caption{\QCTOptA}%
      \else
        \caption[\QCTOptB]{\QCTOptA}%
      \fi
    \fi
    \ifUnderFrame\else
      \label{#5}%
    \fi
  \else
    \UnderFrametrue%
  \fi
  \begin{center}\GRAPHIC{#7}{#6}{#2}{#3}{\z@}\end{center}%
  \ifUnderFrame
    #4
    \ifx\QCBOptA\empty
      \caption{}%
    \else
      \ifx\QCBOptB\empty
        \caption{\QCBOptA}%
      \else
        \caption[\QCBOptB]{\QCBOptA}%
      \fi
    \fi
    \label{#5}%
  \fi
  \end{figure}%
 }%
\def\makeactives{
  \catcode`\"=\active
  \catcode`\;=\active
  \catcode`\:=\active
  \catcode`\'=\active
  \catcode`\~=\active
}
   \gdef\activesoff{%
      \def"{\string"}%
      \def;{\string;}%
      \def:{\string:}%
      \def'{\string'}%
      \def~{\string~}%
    }
\def\FRAME#1#2#3#4#5#6#7#8{%
 \bgroup
 \ifnum\@msidraft=\@ne
   \wasdrafttrue
 \else
   \wasdraftfalse%
 \fi
 \def\LaTeXparams{}%
 \dispkind=\z@
 \def\LaTeXparams{}%
 \doFRAMEparams{#1}%
 \ifnum\dispkind=\z@\IFRAME{#2}{#3}{#4}{#7}{#8}{#5}\else
  \ifnum\dispkind=\@ne\DFRAME{#2}{#3}{#7}{#8}{#5}\else
   \ifnum\dispkind=\tw@
    \edef\@tempa{\noexpand\FFRAME{\LaTeXparams}}%
    \@tempa{#2}{#3}{#5}{#6}{#7}{#8}%
    \fi
   \fi
  \fi
  \ifwasdraft\@msidraft=1\else\@msidraft=0\fi{}%
  \egroup
 }%
\def\TEXUX#1{"texux"}
\def\limfunc#1{\mathop{\rm #1}}%
\long\def\QQQ#1#2{%
     \long\expandafter\def\csname#1\endcsname{#2}}%
\long\def\QQA#1#2{}%
\def\QTR#1#2{{\csname#1\endcsname {#2}}}%
\def\EXPAND#1[#2]#3{}%
\def\NOEXPAND#1[#2]#3{}%
\def\LaTeXparent#1{}%
\def\ChildStyles#1{}%
\def\ChildDefaults#1{}%
\def\QTagDef#1#2#3{}%
  \providecommand{\UNICODE}[2][]{\protect\rule{.1in}{.1in}}
  \providecommand{\U}[1]{\protect\rule{.1in}{.1in}}
\def\QQfnmark#1{\footnotemark}
 \def\abstract{%
  \if@twocolumn
   \section*{Abstract (Not appropriate in this style!)}%
   \else \small 
   \begin{center}{\bf Abstract\vspace{-.5em}\vspace{\z@}}\end{center}%
   \quotation 
   \fi
  }%
   \def\registered{\relax\ifmmode{}\r@gistered
                    \else$\m@th\r@gistered$\fi}%
 \def\r@gistered{^{\ooalign
  {\hfil\raise.07ex\hbox{$\scriptstyle\rm\text{R}$}\hfil\crcr
  \mathhexbox20D}}}}{}%
\newdimen\theight
\def\newfmtname{LaTeX2e}
  \DeclareOldFontCommand{\rm}{\normalfont\rmfamily}{\mathrm}
  \DeclareOldFontCommand{\sf}{\normalfont\sffamily}{\mathsf}
  \DeclareOldFontCommand{\tt}{\normalfont\ttfamily}{\mathtt}
  \DeclareOldFontCommand{\bf}{\normalfont\bfseries}{\mathbf}
  \DeclareOldFontCommand{\it}{\normalfont\itshape}{\mathit}
  \DeclareOldFontCommand{\sl}{\normalfont\slshape}{\@nomath\sl}
  \DeclareOldFontCommand{\sc}{\normalfont\scshape}{\@nomath\sc}
\def\alpha{{\Greekmath 010B}}%
\def\beta{{\Greekmath 010C}}%
\def\gamma{{\Greekmath 010D}}%
\def\delta{{\Greekmath 010E}}%
\def\epsilon{{\Greekmath 010F}}%
\def\zeta{{\Greekmath 0110}}%
\def\eta{{\Greekmath 0111}}%
\def\theta{{\Greekmath 0112}}%
\def\iota{{\Greekmath 0113}}%
\def\kappa{{\Greekmath 0114}}%
\def\lambda{{\Greekmath 0115}}%
\def\mu{{\Greekmath 0116}}%
\def\nu{{\Greekmath 0117}}%
\def\xi{{\Greekmath 0118}}%
\def\pi{{\Greekmath 0119}}%
\def\rho{{\Greekmath 011A}}%
\def\sigma{{\Greekmath 011B}}%
\def\tau{{\Greekmath 011C}}%
\def\upsilon{{\Greekmath 011D}}%
\def\phi{{\Greekmath 011E}}%
\def\chi{{\Greekmath 011F}}%
\def\psi{{\Greekmath 0120}}%
\def\omega{{\Greekmath 0121}}%
\def\varepsilon{{\Greekmath 0122}}%
\def\vartheta{{\Greekmath 0123}}%
\def\varpi{{\Greekmath 0124}}%
\def\varrho{{\Greekmath 0125}}%
\def\varsigma{{\Greekmath 0126}}%
\def\varphi{{\Greekmath 0127}}%
\def\nabla{{\Greekmath 0272}}
\def\FindBoldGroup{%
   {\setbox0=\hbox{$\mathbf{x\global\edef\theboldgroup{\the\mathgroup}}$}}%
}
\def\Greekmath#1#2#3#4{%
    \if@compatibility
        \ifnum\mathgroup=\symbold
           \mathchoice{\mbox{\boldmath$\displaystyle\mathchar"#1#2#3#4$}}%
                      {\mbox{\boldmath$\textstyle\mathchar"#1#2#3#4$}}%
                      {\mbox{\boldmath$\scriptstyle\mathchar"#1#2#3#4$}}%
                      {\mbox{\boldmath$\scriptscriptstyle\mathchar"#1#2#3#4$}}%
        \else
           \mathchar"#1#2#3#4%
        \fi 
    \else 
        \FindBoldGroup
        \ifnum\mathgroup=\theboldgroup 
           \mathchoice{\mbox{\boldmath$\displaystyle\mathchar"#1#2#3#4$}}%
                      {\mbox{\boldmath$\textstyle\mathchar"#1#2#3#4$}}%
                      {\mbox{\boldmath$\scriptstyle\mathchar"#1#2#3#4$}}%
                      {\mbox{\boldmath$\scriptscriptstyle\mathchar"#1#2#3#4$}}%
        \else
           \mathchar"#1#2#3#4%
        \fi     	    
	  \fi}
\newif\ifGreekBold  \GreekBoldfalse
\let\SAVEPBF=\pbf
\def\pbf{\GreekBoldtrue\SAVEPBF}%
  \newcounter{equationnumber}  
  \def\mathletters{%
     \addtocounter{equation}{1}
     \edef\@currentlabel{\theequation}%
     \setcounter{equationnumber}{\c@equation}
     \setcounter{equation}{0}%
     \edef\theequation{\@currentlabel\noexpand\alph{equation}}%
  }
    \def\BibTeX{{\rm B\kern-.05em{\sc i\kern-.025em b}\kern-.08em
                 T\kern-.1667em\lower.7ex\hbox{E}\kern-.125emX}}}{}%
\def\AmS{{\protect\usefont{OMS}{cmsy}{m}{n}%
                A\kern-.1667em\lower.5ex\hbox{M}\kern-.125emS}}}{}%
\def\@@eqncr{\let\@tempa\relax
    \ifcase\@eqcnt \def\@tempa{& & &}\or \def\@tempa{& &}%
      \else \def\@tempa{&}\fi
     \@tempa
     \if@eqnsw
        \iftag@
           \@taggnum
        \else
           \@eqnnum\stepcounter{equation}%
        \fi
     \fi
     \global\tag@false
     \global\@eqnswtrue
     \global\@eqcnt\z@\cr}
\def\TCItag{\@ifnextchar*{\@TCItagstar}{\@TCItag}}
\def\@TCItag#1{%
    \global\tag@true
    \global\def\@taggnum{(#1)}%
    \global\def\@currentlabel{#1}}
\def\@TCItagstar*#1{%
    \global\tag@true
    \global\def\@taggnum{#1}%
    \global\def\@currentlabel{#1}}
\def\tint{\msi@int\textstyle\int}%
\def\tiint{\msi@int\textstyle\iint}%
\def\tiiint{\msi@int\textstyle\iiint}%
\def\tiiiint{\msi@int\textstyle\iiiint}%
\def\tidotsint{\msi@int\textstyle\idotsint}%
\def\toint{\msi@int\textstyle\oint}%
\newtoks\temptoksa
\newtoks\temptoksb
\newtoks\temptoksc
\def\msi@int#1#2{%
 \def\@temp{{#1#2\the\temptoksc_{\the\temptoksa}^{\the\temptoksb}}}%
 \futurelet\@nextcs
 \@int
}
\def\@int{%
   \ifx\@nextcs\limits
      \typeout{Found limits}%
      \temptoksc={\limits}%
	  \let\@next\@intgobble%
   \else\ifx\@nextcs\nolimits
      \typeout{Found nolimits}%
      \temptoksc={\nolimits}%
	  \let\@next\@intgobble%
   \else
      \typeout{Did not find limits or no limits}%
      \temptoksc={}%
      \let\@next\msi@limits%
   \fi\fi
   \@next   
}%
\def\@intgobble#1{%
   \typeout{arg is #1}%
   \msi@limits
}
\def\msi@limits{%
   \temptoksa={}%
   \temptoksb={}%
   \@ifnextchar_{\@limitsa}{\@limitsb}%
}
\def\@limitsa_#1{%
   \temptoksa={#1}%
   \@ifnextchar^{\@limitsc}{\@temp}%
}
\def\@limitsb{%
   \@ifnextchar^{\@limitsc}{\@temp}%
}
\def\@limitsc^#1{%
   \temptoksb={#1}%
   \@ifnextchar_{\@limitsd}{\@temp}%
}
\def\@limitsd_#1{%
   \temptoksa={#1}%
   \@temp
}
\def\dint{\msi@int\displaystyle\int}%
\def\diint{\msi@int\displaystyle\iint}%
\def\diiint{\msi@int\displaystyle\iiint}%
\def\diiiint{\msi@int\displaystyle\iiiint}%
\def\didotsint{\msi@int\displaystyle\idotsint}%
\def\doint{\msi@int\displaystyle\oint}%
\def\ExitTCILatex{\makeatother }
\if@compatibility\message{amsmath already loaded}\fi\aftergroup\ExitTCILatex}
\if@compatibility\message{amstex already loaded}\fi\aftergroup\ExitTCILatex}
\if@compatibility\message{amsgen already loaded}\fi\aftergroup\ExitTCILatex}
\let\DOTSI\relax
\def\RIfM@{\relax\ifmmode}%
\def\FN@{\futurelet\next}%
\def\iint{\DOTSI\intno@\tw@\FN@\ints@}%
\def\iiint{\DOTSI\intno@\thr@@\FN@\ints@}%
\def\iiiint{\DOTSI\intno@4 \FN@\ints@}%
\def\idotsint{\DOTSI\intno@\z@\FN@\ints@}%
\def\ints@{\findlimits@\ints@@}%
\newif\iflimtoken@
\newif\iflimits@
\def\findlimits@{\limtoken@true\ifx\next\limits\limits@true
 \else\ifx\next\nolimits\limits@false\else
 \limtoken@false\ifx\ilimits@\nolimits\limits@false\else
 \ifinner\limits@false\else\limits@true\fi\fi\fi\fi}%
\def\multint@{\int\ifnum\intno@=\z@\intdots@                          
 \else\intkern@\fi                                                    
 \ifnum\intno@>\tw@\int\intkern@\fi                                   
 \ifnum\intno@>\thr@@\int\intkern@\fi                                 
 \int}
\def\multintlimits@{\intop\ifnum\intno@=\z@\intdots@\else\intkern@\fi
 \ifnum\intno@>\tw@\intop\intkern@\fi
 \ifnum\intno@>\thr@@\intop\intkern@\fi\intop}%
\def\intic@{%
    \mathchoice{\hskip.5em}{\hskip.4em}{\hskip.4em}{\hskip.4em}}%
\def\negintic@{\mathchoice
 {\hskip-.5em}{\hskip-.4em}{\hskip-.4em}{\hskip-.4em}}%
\def\ints@@{\iflimtoken@                                              
 \def\ints@@@{\iflimits@\negintic@
   \mathop{\intic@\multintlimits@}\limits                             
  \else\multint@\nolimits\fi                                          
  \eat@}
 \else                                                                
 \def\ints@@@{\iflimits@\negintic@
  \mathop{\intic@\multintlimits@}\limits\else
  \multint@\nolimits\fi}\fi\ints@@@}%
\def\intkern@{\mathchoice{\!\!\!}{\!\!}{\!\!}{\!\!}}%
\def\plaincdots@{\mathinner{\cdotp\cdotp\cdotp}}%
\def\intdots@{\mathchoice{\plaincdots@}%
 {{\cdotp}\mkern1.5mu{\cdotp}\mkern1.5mu{\cdotp}}%
 {{\cdotp}\mkern1mu{\cdotp}\mkern1mu{\cdotp}}%
 {{\cdotp}\mkern1mu{\cdotp}\mkern1mu{\cdotp}}}%
\def\RIfM@{\relax\protect\ifmmode}
\def\text{\RIfM@\expandafter\text@\else\expandafter\mbox\fi}
\let\nfss@text\text
\def\text@#1{\mathchoice
   {\textdef@\displaystyle\f@size{#1}}%
   {\textdef@\textstyle\tf@size{\firstchoice@false #1}}%
   {\textdef@\textstyle\sf@size{\firstchoice@false #1}}%
   {\textdef@\textstyle \ssf@size{\firstchoice@false #1}}%
   \glb@settings}
\def\textdef@#1#2#3{\hbox{{%
                    \everymath{#1}%
                    \let\f@size#2\selectfont
                    #3}}}
\newif\iffirstchoice@
\def\Let@{\relax\iffalse{\fi\let\\=\cr\iffalse}\fi}%
\def\vspace@{\def\vspace##1{\crcr\noalign{\vskip##1\relax}}}%
\def\multilimits@{\bgroup\vspace@\Let@
 \baselineskip\fontdimen10 \scriptfont\tw@
 \advance\baselineskip\fontdimen12 \scriptfont\tw@
 \lineskip\thr@@\fontdimen8 \scriptfont\thr@@
 \lineskiplimit\lineskip
 \vbox\bgroup\ialign\bgroup\hfil$\m@th\scriptstyle{##}$\hfil\crcr}%
\def\Sb{_\multilimits@}%
\def\endSb{\crcr\egroup\egroup\egroup}%
\def\Sp{^\multilimits@}%
\newdimen\ex@
\def\rightarrowfill@#1{$#1\m@th\mathord-\mkern-6mu\cleaders
 \hbox{$#1\mkern-2mu\mathord-\mkern-2mu$}\hfill
 \mkern-6mu\mathord\rightarrow$}%
\def\leftarrowfill@#1{$#1\m@th\mathord\leftarrow\mkern-6mu\cleaders
 \hbox{$#1\mkern-2mu\mathord-\mkern-2mu$}\hfill\mkern-6mu\mathord-$}%
\def\leftrightarrowfill@#1{$#1\m@th\mathord\leftarrow
\mkern-6mu\cleaders
 \hbox{$#1\mkern-2mu\mathord-\mkern-2mu$}\hfill
 \mkern-6mu\mathord\rightarrow$}%
\def\overrightarrow{\mathpalette\overrightarrow@}%
\def\overrightarrow@#1#2{\vbox{\ialign{##\crcr\rightarrowfill@#1\crcr
 \noalign{\kern-\ex@\nointerlineskip}$\m@th\hfil#1#2\hfil$\crcr}}}%
\def\overleftarrow{\mathpalette\overleftarrow@}%
\def\overleftarrow@#1#2{\vbox{\ialign{##\crcr\leftarrowfill@#1\crcr
 \noalign{\kern-\ex@\nointerlineskip}$\m@th\hfil#1#2\hfil$\crcr}}}%
\def\overleftrightarrow{\mathpalette\overleftrightarrow@}%
\def\overleftrightarrow@#1#2{\vbox{\ialign{##\crcr
   \leftrightarrowfill@#1\crcr
 \noalign{\kern-\ex@\nointerlineskip}$\m@th\hfil#1#2\hfil$\crcr}}}%
\def\underrightarrow{\mathpalette\underrightarrow@}%
\def\underrightarrow@#1#2{\vtop{\ialign{##\crcr$\m@th\hfil#1#2\hfil
  $\crcr\noalign{\nointerlineskip}\rightarrowfill@#1\crcr}}}%
\def\underleftarrow{\mathpalette\underleftarrow@}%
\def\underleftarrow@#1#2{\vtop{\ialign{##\crcr$\m@th\hfil#1#2\hfil
  $\crcr\noalign{\nointerlineskip}\leftarrowfill@#1\crcr}}}%
\def\underleftrightarrow{\mathpalette\underleftrightarrow@}%
\def\underleftrightarrow@#1#2{\vtop{\ialign{##\crcr$\m@th
  \hfil#1#2\hfil$\crcr
 \noalign{\nointerlineskip}\leftrightarrowfill@#1\crcr}}}%
\def\qopnamewl@#1{\mathop{\operator@font#1}\nlimits@}
\let\nlimits@\displaylimits
\def\setboxz@h{\setbox\z@\hbox}
\def\varlim@#1#2{\mathop{\vtop{\ialign{##\crcr
 \hfil$#1\m@th\operator@font lim$\hfil\crcr
 \noalign{\nointerlineskip}#2#1\crcr
 \noalign{\nointerlineskip\kern-\ex@}\crcr}}}}
 \def\rightarrowfill@#1{\m@th\setboxz@h{$#1-$}\ht\z@\z@
  $#1\copy\z@\mkern-6mu\cleaders
  \hbox{$#1\mkern-2mu\box\z@\mkern-2mu$}\hfill
  \mkern-6mu\mathord\rightarrow$}
\def\leftarrowfill@#1{\m@th\setboxz@h{$#1-$}\ht\z@\z@
  $#1\mathord\leftarrow\mkern-6mu\cleaders
  \hbox{$#1\mkern-2mu\copy\z@\mkern-2mu$}\hfill
  \mkern-6mu\box\z@$}
\def\projlim{\qopnamewl@{proj\,lim}}
\def\injlim{\qopnamewl@{inj\,lim}}
\def\varinjlim{\mathpalette\varlim@\rightarrowfill@}
\def\varprojlim{\mathpalette\varlim@\leftarrowfill@}
\def\varliminf{\mathpalette\varliminf@{}}
\def\varliminf@#1{\mathop{\underline{\vrule\@depth.2\ex@\@width\z@
   \hbox{$#1\m@th\operator@font lim$}}}}
\def\varlimsup{\mathpalette\varlimsup@{}}
\def\varlimsup@#1{\mathop{\overline
  {\hbox{$#1\m@th\operator@font lim$}}}}
\def\align{\@verbatim \frenchspacing\@vobeyspaces \@alignverbatim
You are using the "align" environment in a style in which it is not defined.}
\let\csname endalign*\endcsname =\endtrivlist
\def\alignat{\@verbatim \frenchspacing\@vobeyspaces \@alignatverbatim
You are using the "alignat" environment in a style in which it is not defined.}
\let\csname endalignat*\endcsname =\endtrivlist
\def\xalignat{\@verbatim \frenchspacing\@vobeyspaces \@xalignatverbatim
You are using the "xalignat" environment in a style in which it is not defined.}
\let\csname endxalignat*\endcsname =\endtrivlist
\def\gather{\@verbatim \frenchspacing\@vobeyspaces \@gatherverbatim
You are using the "gather" environment in a style in which it is not defined.}
\let\csname endgather*\endcsname =\endtrivlist
\def\multiline{\@verbatim \frenchspacing\@vobeyspaces \@multilineverbatim
You are using the "multiline" environment in a style in which it is not defined.}
\let\csname endmultiline*\endcsname =\endtrivlist
\def\arrax{\@verbatim \frenchspacing\@vobeyspaces \@arraxverbatim
You are using a type of "array" construct that is only allowed in AmS-LaTeX.}
\def\tabulax{\@verbatim \frenchspacing\@vobeyspaces \@tabulaxverbatim
You are using a type of "tabular" construct that is only allowed in AmS-LaTeX.}
\let\csname endarrax*\endcsname =\endtrivlist
\let\csname endtabulax*\endcsname =\endtrivlist
 \def\endequation{%
     \ifmmode\ifinner 
      \iftag@
        \addtocounter{equation}{-1} 
        $\hfil
           \displaywidth\linewidth\@taggnum\egroup \endtrivlist
        \global\tag@false
        \global\@ignoretrue   
      \else
        $\hfil
           \displaywidth\linewidth\@eqnnum\egroup \endtrivlist
        \global\tag@false
        \global\@ignoretrue 
      \fi
     \else   
      \iftag@
        \addtocounter{equation}{-1} 
        \eqno \hbox{\@taggnum}
        \global\tag@false%
        $$\global\@ignoretrue
      \else
        \eqno \hbox{\@eqnnum}
        $$\global\@ignoretrue
      \fi
     \fi\fi
 } 
 \newif\iftag@ \tag@false
 \def\TCItag{\@ifnextchar*{\@TCItagstar}{\@TCItag}}
 \def\@TCItag#1{%
     \global\tag@true
     \global\def\@taggnum{(#1)}%
     \global\def\@currentlabel{#1}}
 \def\@TCItagstar*#1{%
     \global\tag@true
     \global\def\@taggnum{#1}%
     \global\def\@currentlabel{#1}}
     \def\tag{\@ifnextchar*{\@tagstar}{\@tag}}
     \def\@tag#1{%
         \global\tag@true
         \global\def\@taggnum{(#1)}}
     \def\@tagstar*#1{%
         \global\tag@true
         \global\def\@taggnum{#1}}
\begin{document}

\title{{\bf Dynamic Ordered Panel Logit Models}\thanks{
This research was
supported by the Gregory C. Chow Econometric Research Program at Princeton
University, by the National Science Foundation (Grant Number SES-1530741),
by the Economic and Social Research Council through the ESRC Centre for
Microdata Methods and Practice (grant numbers RES-589-28-0001, RES-589-28-0002 and ES/P008909/1),
by the Social Sciences and Humanities Research Council of Canada (grant number IG 435-2021-0778),
and by the European Research Council grants ERC-2014-CoG-646917-ROMIA and
ERC-2018-CoG-819086-PANEDA. We are grateful to seminar participants at the Universities of Leuven and Toronto, to Paul Contoyannis for helpful conversations about the BHPS data used in \citet{contoyannis_dynamics_2004}, and to Riccardo D'Adamo, Geert Dhaene, Sharada Dharmasankar and Runtong Ding for useful comments and discussion. We also thank Limor Golan and four anonymous referees for constructive feedback and suggestions.}
}
\author{\setcounter{footnote}{2}Bo E. Honor{\'e}\thanks{%
Princeton University and  The Dale T. Mortensen
Centre at the University of Aarhus, \texttt{honore@princeton.edu} }
 \and 
Chris Muris%
\thanks{%
McMaster University, \texttt{muerisc@mcmaster.ca}} 
\and 
 Martin Weidner%
\thanks{%
University of Oxford, \texttt{martin.weidner@economics.ox.ac.uk} } }
\date{April 2024}

\maketitle
\thispagestyle{empty}
\setcounter{page}{0}

\begin{abstract}
\noindent 
This paper studies a dynamic ordered logit model for panel data with fixed effects. The main contribution of the paper is to construct a set of valid moment conditions that are free of the fixed effects. The moment functions can be computed using four or more periods of data, and the paper presents sufficient conditions for the moment conditions to identify the common parameters of the model, namely the regression coefficients, the autoregressive parameters, and the threshold parameters. The availability of moment conditions suggests that these common parameters can be estimated using the generalized method of moments, and the paper documents the performance of this estimator using Monte Carlo simulations and an empirical illustration to self-reported health status using the British Household Panel Survey.
\end{abstract}

\newpage

\section{Introduction}

Panel surveys routinely collect data on an ordinal scale. For example, many nationally representative surveys ask respondents to rate their health or life satisfaction on an ordinal scale.\footnote{One example is the British Household Panel Survey in our empirical application. Others include the U.S. Health and Retirement Study and  Medical Expenditure Panel Survey, the Canadian Longitudinal Study on Ageing and the National Longitudinal Survey of Children and Youth, the Australian Longitudinal Study on Women's Health, the European Union Statistics on Income and Living Conditions, the Survey on Health, Ageing, and Retirement in Europe, among many others.} Other examples include test results in longitudinal data sets gathered for studying education.

We are interested in regression models for ordinal outcomes that allow for lagged dependent variables as well as fixed effects. In the  model that we propose, the ordered outcome depends on a fixed effect, a lagged dependent variable, regressors, and a logistic error term. We study identification and estimation of the finite-dimensional parameters in this model when only a small number ($\geq 4$) of time periods is available.

For other types of outcome variables (continuous outcomes in linear models, binary and multinomial outcomes), results for regression models with fixed effects and lagged dependent variables are already available. Such results are of great importance for applied practice, as they allow researchers to distinguish unobserved heterogeneity from state dependence, and to control for both when estimating the effect of regressors. The demand for such methods is evidenced by the popularity of existing approaches for the linear model, such as those proposed by \cite{arellano_tests_1991} and \cite{blundell_initial_1998}. In contrast, for ordinal outcomes, almost no results are available.

The challenge of accommodating unobserved heterogeneity in nonlinear models is well understood, especially when the researcher also wants to allow for lagged dependent variables. For example, while recent developments (\citealt{kitazawa2021transformations} and \citealt{honore2020dynamic}) relax these requirements, early work on the dynamic binary logit model with fixed effects either assumed no regressors, or restricted their joint distribution (cf. \citealt{Chamberlain1985} and \citealt{honore2000panel}). The challenge of accommodating unobserved heterogeneity in the ordered logit model seems even greater than in the binary model. The reason is that even the static version of the model is not in the exponential family (\citealt{hahn_note_1997}). As a result, one cannot directly appeal to a sufficient statistic approach. An alternative approach in the static ordered logit model is to reduce it to a set of binary choice models (cf. \citealt{das_panel_1999}, \citealt{JohnsonThesis2004}, \citealt{Baetschmann2015}, \citealt{Muris2017}, and \citealt{bmp2021}). Unfortunately, the dynamic ordered logit model cannot be similarly reduced to a dynamic binary choice model (see \citealt{muris2020dynamic}). Therefore, a new approach is needed. The contribution of this paper is to develop such an approach.
To do this, we follow the functional differencing approach in \cite{bonhomme2012functional} to obtain moment conditions for the finite-dimensional parameters in this model, namely the autoregressive parameters (one for each level of the lagged dependent variable), the threshold parameters in the underlying latent variable formulation, and the regression coefficients. Our approach is closely related to \cite{honore2020dynamic}, and can be seen as the extension of their method to the case of an ordered response variable.

This paper contributes to the literature on dynamic ordered logit models. We are aware of only one paper that studies a fixed-$T$ version of this model while allowing for fixed effects. The approach in \cite{muris2020dynamic} builds on methods for dynamic binary choice models in \cite{honore2000panel} by restricting how past values of the dependent variable enter the model. In particular, in \cite{muris2020dynamic}, the lagged dependent variable $Y_{i,t-1}$ enters the model only via $\mathbbm{1}\{Y_{i,t-1}\geq k\}$ for some known $k$. We do not impose such a restriction, and allow the effect of $Y_{i,t-1}$ to vary freely with its level.
Other existing work on dynamic panel models for ordered outcomes uses a random effects approach (\citealt{contoyannis_dynamics_2004}, \citealt{albarran_estimation_2019}) or requires a large number of time periods for consistency (\citealt{carro_state_2014}, \citealt{fernandez-val_evaluating_2017}). An earlier version of \cite{aristodemou2018semiparametric} contained partial identification results for a dynamic ordered choice model without logistic errors. Our approach places no restrictions on the dependence between fixed effects and regressors, requires only four periods of data for consistency, and delivers point identification and estimates.

More broadly, this paper contributes to the literature on fixed-$T$ identification and estimation in nonlinear panel models with fixed effects (see  \citealt{honore_nonlinear_2002}, \citealt{arellano2003discrete},  and \citealt{ArellanoBonhomme2011} for overviews). The literature contains results for several models adjacent to ours. For example, the static panel ordered logit model with fixed effects was studied by \cite{das_panel_1999}, \cite{JohnsonThesis2004}, \cite{Baetschmann2015}, and \cite{Muris2017}; results for static and dynamic binomial and multinomial choice models are in \cite{chamberlain_analysis_1980}, \cite{honore2000panel}, \cite{magnac2000subsidised}, \cite{shi_estimating_2018}, \cite{aguirregabiriaSufficientStatisticsUnobserved2021},
\cite{aguirregabiriaIdentificationAverageMarginal},
\cite{PakesPorterShepardCalderWang2022} and \cite{khan_inference_2021}.

Our main contribution is to obtain novel moment conditions for the common parameters in the dynamic ordered logit model with fixed effects. Additionally, we  obtain conditions under which these moment conditions identify those parameters. Finally, we discuss the  implied generalized method of moments (GMM) estimator and demonstrate its performance in both a Monte Carlo study and an empirical application to self-reported health status in the British Household Panel Study.

The remainder of this paper is organized as follows. Section~\ref{sec:model} introduces the model and the moment conditions that are free of fixed effects. Section~\ref{sec:Identification} presents identification results for the common model parameters based on those moment conditions.
Section~\ref{Section: Implication for estimation} discusses how to use the moment conditions for estimation and inference.
Section~\ref{PracticalPerformance}  explores the practical performance of the resulting estimation method through Monte Carlo simulations, including comparison with a correlated random effects approach. 
That section also provides an empirical illustration to health data. Section~\ref{Conclusions} concludes the paper. The appendix provides proofs and further computational details.

\section{Model and moment conditions}
\label{sec:model}

In this section, we first describe the panel ordered logit model that 
is used throughout the paper, and then present moment
conditions for the model that can be used to estimate the common
parameters of the model without imposing any knowledge of the 
individual-specific effects.

\subsection{Model and notation}

We consider panel data with cross-sectional units  
$i=1,\ldots,n$ and time periods $t=0,\ldots,T$. 
For each pair $(i,t)$, 
we observe the discrete outcome $Y_{it}\in \{1,2,\ldots,Q\}$,
which can take $Q \in \{2,3,4,\ldots\}$ different values,
and the strictly exogenous regressors $X_{it}\in \mathbb{R}^{K}$.
We discuss unbalanced panels in Section~\ref{sec:MomentsTlarger3}, but for now, we assume a balanced panel where outcomes
are observed for all $t \geq 0$ and regressors for all $t \geq 1$.
Thus, the total number of time periods for which outcomes are
observed is $T+1$. For $t \geq 1$, the observed discrete outcomes depend on
an unobserved latent variable $Y^*_{it} \in \mathbb{R}$ as follows:
\begin{align}
     Y_{it} = \left\{ 
     \begin{array}{ll}
          1 & \text{if \; $  Y^*_{it}  \; \in \; (-\infty \,,\lambda_{1}]$,}
          \\
          2 & \text{if \; $  Y^*_{it}   \; \in \; (\lambda_{1} \, ,\lambda_{2}]$,}
          \\
          \vdots
          \\
          Q &   \text{if \; $ Y^*_{it}   \; \in \; (\lambda_{Q-1} \, , \infty)$,}
     \end{array}
     \right.
   \label{ModelOutcomes}     
\end{align}
where $\lambda=(\lambda_1,\ldots,\lambda_{Q-1}) \in \mathbb{R}^{Q-1}$ is a vector of unknown parameters with $\lambda_1 < \lambda_2 < \ldots < \lambda_{Q-1}$.
The  latent variable is generated by the model
\begin{align}
    Y^*_{it} = X_{it}^{\prime }\,\beta  +  \sum_{q=1}^{Q} \gamma_q  \, \mathbbm{1}\left\{ Y_{i,t-1} = q \right\} + A_i + \varepsilon_{it} ,
    \label{ModelYstar}
\end{align}
with unknown parameters $\beta \in \mathbb{R}^K$ and $\gamma = (\gamma_1,\ldots,\gamma_Q) \in \mathbb{R}^{Q}$. Here, $A_i \in \mathbb{R} \cup \{\pm \infty\}$ is an unobserved individual-specific effect
whose distribution is not specified, and $A_i$ is allowed to be arbitrarily correlated with the regressors $X_{it}$ and the initial conditions $Y_{i0}$. Let $X_{i}:=(X_{i1},\ldots ,X_{iT})$.
Conditional on $Y_{i0}$, $X_i$, 
and $A_i$, 
the idiosyncratic error term $ \varepsilon_{it}$ is assumed to 
be independent and identically distributed over $t$ with 
cumulative distribution function  $ \Lambda(\varepsilon) :=  [1+\exp(-\varepsilon)]^{-1}$.
Thus, $ \varepsilon_{it}$ is a logistic error term.
For the cross-sectional sampling, we assume that
$(Y_{i0}, X_{i1},\ldots ,X_{iT}, A_{i}, \varepsilon_{i1},\ldots ,\varepsilon_{iT})$ are independent and identically distributed across $i$.

The model described by \eqref{ModelOutcomes} and \eqref{ModelYstar} is a dynamic ordered panel logit model, where an arbitrary function $\gamma_{Y_{i,t-1}}$ of the lagged dependent variable $Y_{i,t-1}$ is allowed to enter additively into the latent variable
$Y^*_{it}$. 
This model strikes a balance between
a general functional form and a parsimonious parameter structure. 
We discuss possible generalizations of the 
model for $Y^*_{it}$ in Section~\ref{sec:GeneralModel}, but 
otherwise impose \eqref{ModelYstar} throughout the paper.\footnote{If the observed 
$Y_{it}$ is a discretized version of a continuous variable with a natural economic interpretation, then it would be more natural to model the state dependence in \eqref{ModelYstar} as $Y_{i,t-1}^{\star}\gamma$. A numerical investigation suggests that it is not possible to develop moment conditions for such a model. This suggests that the common parameters in this model are not point-identified, or are only point-identified under strong support assumptions on the covariates and hence not generally $\sqrt{n}$ estimable. 
We explore this alternative model in Appendix \ref{app:alternative_specification}.}

Our ultimate goal is to estimate the unknown parameters  $\theta = (\beta,\gamma,\lambda) \in \Theta := \mathbb{R}^{K+2Q-1}$ without imposing any assumptions on the individual-specific effect $A_i$. This requires  two normalizations, because 
 common additive shifts of all the parameters  $\gamma_q$ or  of all the parameters $\lambda_q$ can 
be absorbed into $A_i$. For example, we could impose 
the normalizations $\gamma_1=0$ and $\lambda_1=0$, but in this section there is no need to specify such normalizations.

It is convenient to define $\lambda_{0} := -\infty$, and $\lambda_{Q} := \infty$,
and  
\begin{align}
    z(Y_{i,t-1},X_{it},\theta) :=  X_{it}^{\prime }\,\beta  +  \sum_{q=1}^{Q} \gamma_q  \, \mathbbm{1}\left\{ Y_{i,t-1} = q \right\} .
    \label{DefSingleIndex}
\end{align}    
With this notation, the  model assumptions imposed so far imply that the distribution of $Y_{it}$ conditional on the regressors $X_{i}$,
past outcomes $Y_{i}^{t-1}=(Y_{i,t-1},Y_{i,t-2},\ldots )$, and fixed effects $A_i$,
is given by
\begin{equation}
 \mathrm{Pr}\left( Y_{it}=q \, \Big| \, Y_{i}^{t-1},X_{i},A_{i}, \theta
\right) 
 =  \Lambda\Big[ z(Y_{i,t-1},X_{it},\theta) + A_{i} - \lambda_{q-1}  \Big] -  \Lambda\Big[  z(Y_{i,t-1},X_{it},\theta) + A_{i} - \lambda_{q}  \Big] 
\label{model}
\end{equation}
for all $i\in \{1,\ldots ,n\}$, $t\in \{1,2,\ldots ,T\}$, and $q \in  \{1,2,\ldots,Q\}$. 
Let $Y_{i}=(Y_{i1},\ldots ,Y_{iT})$, and let the true model parameters be denoted by $\theta^0 = (\beta^0,\gamma^0,\lambda^0)$. In the following, all probabilistic statements are for the model distribution generated under $\theta^0$. For
example, we have $\mathrm{Pr}\big(Y_{i}=y_{i}\,\big|\,Y_{i0}=y_{i0},\,X_{i}=x_{i},%
\,A_{i}=\alpha _{i}\big)=p_{y_{i0}}(y_{i},x_{i},\theta^0,\alpha _{i})$, where
\begin{equation}
p_{y_{i0}}(y_{i},x_{i},\theta ,\alpha _{i}) :=\prod_{t=1}^{T}
\left\{  \Lambda\Big[ z(y_{i,t-1},x_{it},\theta) + \alpha_{i} - \lambda_{y_{it}-1}  \Big] -  \Lambda\Big[  z(y_{i,t-1},x_{it},\theta) + \alpha_{i} - \lambda_{y_{it}}  \Big]
\right\} .
 \label{DefProb}
\end{equation}%
Below, we drop the index $i$ until we discuss estimation;
instead of $Y_{i0}$, $Y_i$, $X_i$, $A_i$, we just write $Y_0$, $Y$, $X$, $A$ for those random variables and random vectors.

\subsection{Moment condition approach}
\label{subsec:momentApproach}

In the next subsection, we present moment functions for the ordered logit model discussed above. These are functions
$m:\{1,\ldots,Q\} \times \{1,\ldots,Q\}^T \times \mathbb{R}^{T \times K} \times \Theta
\rightarrow \mathbb{R}$ such that 
\begin{align}
\mathbb{E} \left[ m_{Y_0}(Y,X,\theta^0) \, \Big| \, Y_0=y_0,
\, X=x, \, A=\alpha \right] &= 0 
   \label{MomentFunctionIdea}
\end{align}
for all $y_0 \in \{1,\ldots,Q\}$, $x \in \mathbb{R}^{T \times K}$,
and $\alpha \in \mathbb{R} \cup \{\pm \infty\}$. We write the first argument $y_0$ of the moment function as an index, but that is purely
for notational convenience.
Conditional on 
$Y_0=y_0$, $X=x$, and $A=\alpha$, the distribution of 
$Y=(Y_1,\ldots,Y_T) \in \{1,\ldots,Q\}^T$ is given by
\eqref{DefProb}. The model assumptions in the last subsection are therefore sufficient to evaluate the conditional expectation
in \eqref{MomentFunctionIdea}.

If we can establish 
the conditional moment condition \eqref{MomentFunctionIdea} then,
by the law of iterated  expectations, we also have 
the unconditional moment conditions
\begin{align}
\mathbb{E} \left[ h(Y_0,X,\theta^0) \,  m_{Y_0}(Y,X,\theta^0)   \right] &= 0 
   \label{MomentFunctionUnconditional}
\end{align}
for any function $h : \{1,\ldots,Q\} \times \mathbb{R}^{T \times K} \times \Theta
\rightarrow \mathbb{R}$ such that the expectation is well-defined.
Those unconditional moment conditions can then be used to estimate
the model parameters $\theta^0$ by the generalized method of moments (GMM).
Such an estimation approach solves the incidental parameter
problem (\citealt{neyman1948consistent}), because the moment condition
\eqref{MomentFunctionUnconditional} does not feature the
individual-specific effect $A$ at all. No 
 assumptions are imposed on the distribution of those nuisance parameters,
and they need not be estimated. On the flip-side, this implies
that we do not learn anything about the distribution of $A$. Notice, however, that if one is interested in (functions of)
the individual-specific effects such as average partial effects, then the estimation of the common parameters $\theta$ will always be
a key first step in any inference procedure.

The moment condition approach just described eliminates
the individual-specific effect $A$ from the estimation, because
\eqref{MomentFunctionIdea} is assumed to hold for all $\alpha \in \mathbb{R} \cup \{\pm \infty\}$, but the moment function
$m_{Y_0}(Y,X,\theta^0)$ does not depend on $A$ at all.
The existence of moment functions with this property is quite 
miraculous:
for any given values of $Y_0=y_0$, $X=x$, and $\theta^0$, the moment function
$m_{y_0}(\cdot,x,\theta^0) : \{1,\ldots,Q\}^T \rightarrow \mathbb{R}$
can be viewed as a finite-dimensional vector (with $Q^T$ real numbers),
but \eqref{MomentFunctionIdea} imposes an infinite number of linear constraints -- one for each $\alpha \in \mathbb{R} \cup \{\pm \infty\}$.
The logistic assumption on $\varepsilon_{it}$ is important for finding
solutions 
of this infinite-dimensional linear system in a finite number
of variables, and for most choices of error distributions  (e.g.\ 
normally distributed error),
we do not expect such solutions to exist. It seems 
likely that for the error distributions in \cite{johnson2004identification} and \cite{davezies2022fixed},
and also for a mixture of logistics (briefly discussed in
\citealt{honore2020dynamic}), one could also find 
valid moment conditions, if a sufficient number of
time periods are available, but we focus purely
on logistic errors in this paper.

In the following, we present moment functions $m_{Y_0}(Y,X,\theta^0)$
that satisfy \eqref{MomentFunctionIdea}. 
We  derive those moment functions for the dynamic panel ordered
logit model analogously to the results for the dynamic panel
 binary choice logit model in \cite{honore2020dynamic}. 
 Indeed, for the binary choice case ($Q=2$), our moment functions below
 exactly coincide with those in \cite{honore2020dynamic}, and
 we refer to that paper for more details on the derivation, which is closely related to the functional differencing method in 
 \cite{bonhomme2012functional}.
Once we have obtained expressions for the moment functions,  their derivation is no longer relevant and we can focus on showing that they are valid -- i.e. that \eqref{MomentFunctionIdea} holds -- and on their implications for the identification and estimation of $\theta$. 

\subsection{Moment conditions for $T=3$}

We first introduce our moment functions for $T=3$. In our convention, this means that outcomes $Y_t$ are observed
for the four time periods $t =0,1,2,3$ (including the initial conditions $Y_0$).
We have verified numerically
that no moment functions satisfying \eqref{MomentFunctionIdea} 
for general parameter and regressor values
exist for $T<3$,
and for the binary choice case ($Q=2$)   a proof of this non-existence is given in \cite{honore2020dynamic}. Thus, $T=3$ is the smallest
number of time periods that we can consider.

We use lower case letters for generic arguments (as opposed to
random variables) of the moment function
$m_{y_0}(y,x,\theta)$, where $y_0 \in \{1,\ldots,Q\}$,
$y \in \{1,\ldots,Q\}^T$, $x \in \mathbb{R}^{T \times K}$
and $\theta = (\beta,\gamma,\lambda) \in \Theta$. 
The
$t$'th row of $x$ is denoted by $x_t' \in \mathbb{R}^K$,
and we define $x_{ts} := x_t - x_s$, $\gamma_{qr} := \gamma_q - \gamma_r$
and $\lambda_{qr} := \lambda_q - \lambda_r$.

We find multiple moment functions 
$m_{y_0,q_1,q_2,q_3} \allowbreak (y,x,\theta)$
which are distinguished by the additional indices $q_1 \in \{1,\ldots,Q-1\}$, $q_2 \in \{1,\ldots,Q\}$, $q_3 \in \{1,\ldots,Q-1\}$.
For the moment function labelled by $q_1$, $q_2$, $q_3$, the dependence
on $y$ is only through the coarser outcome $\widetilde y_{q_1,q_2,q_3}(y)
\in \{0,1\} \times \{1,2,3\} \times \{0,1\}$, which is a vector
with three components $ \widetilde y_{t,q_1,q_2,q_3}(y) $, $t \in \{1,2,3\}$,
given by
 \begin{align*}
        \widetilde y_{2,q_1,q_2,q_3}(y) &:= \left\{   
              \begin{array}{ll}  1 & \text{if } y_2 < q_2 ,
                           \\
                                     2 & \text{if } y_2 = q_2 ,
                           \\
                                     3 & \text{if } y_2 > q_2 .
           \end{array} \right.
   &
        \widetilde y_{t,q_1,q_2,q_3}(y) &:=   \mathbbm{1}\left\{  y_r > q_r \right\} ,
        \quad \text{for $t \in \{1,3\}$}.
\end{align*}
The moment functions presented below have the property that
\begin{align}
    m_{y_0,q_1,q_2,q_3}(y,x,\theta) = m_{y_0,q_1,q_2,q_3}(\widetilde y_{q_1,q_2,q_3}(y),x,\theta) .
    \label{BinarizationTrinarization}
\end{align}
Thus, for given $q_1$, $q_2$, $q_3$, it would be sufficient to observe the outcome
$\widetilde Y = \widetilde y_{q_1,q_2,q_3}(Y)$ to implement this moment function.
Notice that $ \widetilde y_{t,q_1,q_2,q_3}(y)$, for  time periods
$t=1$ and $t=3$ is just a binarization, as in
\cite{Muris2017} and \cite{muris2020dynamic}, but for
$t=2$ we crucially deviate from those existing papers, because
for $q_2 \in \{2,\ldots,Q-1\}$ the coarser outcome
$\widetilde y_{2,q_1,q_2,q_3}(y)$ is a trinarization of the second
period outcome, not a binarization. It turns out that this is a crucial 
extension to obtain all the valid moment conditions in our model.

The moment functions presented below were obtained
using Mathematica by following
 the methods described in Section~2 of
\cite{honore2020dynamic}, which builds on 
ideas in \cite{bonhomme2012functional}. 
Once derived, we can prove by hand that the moment functions are valid (proof of Theorem~\ref{th:momentsT3} in the appendix),
but we do not have
any useful explanation or intuition for the detailed functional
form of these moment functions. 
However, 
the binarized/trinarized outcome $\widetilde Y$
and equation \eqref{BinarizationTrinarization}
help to appreciate some aspects of the structure
of the moment functions (see also  Lemma~\ref{lemma:MASTER} in the appendix).
Furthermore, while the functional form of the 
moment functions is mysterious, one can show
the existence of valid moment functions much more easily,
see Appendix~\ref{app:LowerBound}.

For $q_1,q_3 \in \{1,\ldots,Q-1\}$ and $q_2 \in \{2,\ldots,Q-1\}$, we define
 \begin{align}
  &m_{y_0,q_1,q_2,q_3}(y,x,\theta) 
 \nonumber \\[10pt]
   &:= 
    \left\{  \begin{array}{ll}
       \text{\footnotesize $   \exp\left(   x_{13}^{\prime }\,\beta  +   \gamma_{y_0,q_2}  + \lambda_{q_3,q_1}   \right) 
             \frac     {\displaystyle  \exp(x_{32}' \, \beta + \gamma_{q_2,y_1} + \lambda_{q_2,q_3}) - 1   }
             {\displaystyle \exp\left(   \lambda_{q_2,q_2-1}  \right)  - 1  }  
             $}
             & \text{if  $ y_1 \leq q_1, \, y_2=q_2 , y_3 \leq q_3$}, 
            \\[6pt]
         \text{\footnotesize $  \exp\left(   x_{13}^{\prime }\,\beta  +   \gamma_{y_0,q_2}  + \lambda_{q_3,q_1}   \right) 
              \frac     {\displaystyle 1 - \exp(x_{23}' \, \beta + \gamma_{y_1,q_2} + \lambda_{q_3,q_2})  }
            {\displaystyle 1 -  \exp\left(   \lambda_{q_2-1,q_2}  \right)  } 
             $}
             & \text{if }   y_1 \leq q_1, \, y_2=q_2 , \, y_3 > q_3,            
            \\[6pt]
          \text{\footnotesize $     \exp\left(   x_{13}^{\prime }\,\beta  +   \gamma_{y_0,q_2}  + \lambda_{q_3,q_1}   \right) 
             $}
             & \text{if }   y_1 \leq q_1, \,  y_2>q_2 ,
            \\[6pt]
        \text{\footnotesize $  -1  $}
         & \text{if }   y_1 > q_1, \, y_2<q_2 ,                        
            \\[6pt]
 \text{\footnotesize $ - \,   \frac{\displaystyle      1 - \exp(x_{32}' \, \beta + \gamma_{q_2,y_1}  + \lambda_{q_2-1,q_3})   }
            {\displaystyle 1 -  \exp\left(   \lambda_{q_2-1,q_2}  \right)  } 
             $}
             & \text{if }   y_1 > q_1, \, y_2=q_2 , \, y_3 \leq q_3, 
            \\[6pt]
 \text{\footnotesize $ - \,   \frac{ \displaystyle       \exp( x_{23}' \, \beta + \gamma_{y_1,q_2}  + \lambda_{q_3,q_2-1}) - 1    }
             {\displaystyle  \exp\left(   \lambda_{q_2,q_2-1}    \right) - 1  } 
             $}
               & \text{if }   y_1 > q_1, \, y_2=q_2 , \,y_3 > q_3, 
            \\[6pt]         
             \text{\footnotesize $ 0 $}
            & \text{otherwise}.
       \end{array} \right.
     \label{MomentFunctionsT3}  
\end{align}
Any valid moment function satisfying \eqref{MomentFunctionIdea} 
can be multiplied by an arbitrary constant and remain a valid
moment function. In \eqref{MomentFunctionsT3}
we used that rescaling freedom to
 normalize the entry for the case 
 $(y_1 > q_1, y_2<q_2)$ to be equal to $-1$. 
If, alternatively, we  normalize the entry for $(y_1 \leq q_1, y_2>q_2)$ to be equal to $-1$, then we obtain the equally valid moment function
\begin{align*}
    \widetilde m_{y_0,q_1,q_2,q_3}(y,x,\theta) &= - \frac{ m_{y_0,q_1,q_2,q_3}(y,x,\theta)} {  \exp\left(   x_{13}^{\prime }\,\beta  +   \gamma_{y_0,q_2}  + \lambda_{q_3,q_1}   \right) } .
\end{align*}
This rescaling is interesting, because
if we reverse the order of the outcome labels (i.e.\ $Y_t \mapsto Q+1-Y_{t}$),  the model remains unchanged  except for the parameter
transformations $\beta \mapsto - \beta$, $\gamma_q \mapsto - \gamma_{Q+1-q}$, and $\lambda_q \mapsto - \lambda_{Q-q}$. 
Under this transformation, the moment function $m_{y_0,q_1,q_2,q_3}(y,x,\theta) $
becomes $  \widetilde m_{\widetilde y_0,\widetilde q_1,\widetilde q_2,\widetilde q_3}(y,x,\theta) $
with $\widetilde y_0 = Q+1- y_0$ and 
$(\widetilde q_1,\widetilde q_2,\widetilde q_3) = (Q-q_1, Q+1-q_2, Q-q_3)$. This transformation therefore does not deliver any new moment functions,
which are not already (up to rescaling) given by~\eqref{MomentFunctionsT3}.

Equation \eqref{MomentFunctionsT3} does not define $m_{y_0,q_1,q_2,q_3}(y,x,\theta)$ for $q_2=1$ and $q_2 = Q$.
If we  plug those values of $q_2$ into \eqref{MomentFunctionsT3}, then various undefined terms appear
since $\lambda_{0} = -\infty$ and $\lambda_{Q} = \infty$. However, 
if for $q_2=1$ we properly evaluate the limit of
$\widetilde m_{y_0,q_1,q_2,q_3}(y,x,\theta)$ as
$\lambda_{0}  \rightarrow -\infty$,
then we obtain
\begin{align}
  m_{y_0,q_1,1,q_3}(y,x,\theta)  &:= 
    \left\{  \begin{array}{ll}
             \exp\left(  x_{23}' \, \beta + \gamma_{y_1,1} + \lambda_{q_3,1}  \right)     -  1 & \text{if }  y_1 \leq q_1, \, y_2=1 , \, y_3>q_3, \\
             - 1 & \text{if }   y_1 \leq q_1, \, y_2>1,\\
                 \exp\left( x_{31}' \, \beta + \gamma_{1,y_0} + \lambda_{q_1,q_3}  \right)  & \text{if }  y_1 > q_1, \, y_2=1 , \,y_3\leq q_3,\\
                \exp\left( x_{21}' \, \beta + \gamma_{y_1,y_0} + \lambda_{q_1,1}  \right)  & \text{if } y_1 > q_1, \, y_2=1 , \, y_3>q_3,\\
             0 & \text{otherwise}.
       \end{array} \right.
     \label{MomentFunctionsT3b}  
\end{align}
Similarly, if for $q_2 = Q$ we properly evaluate the limit of  $m_{y_0,q_1,q_2,q_3}(y,x,\theta)$ as $\lambda_{Q}  \rightarrow \infty$,
then we obtain
\begin{align}
   m_{y_0,q_1,Q,q_3}(y,x,\theta)  &:= 
    \left\{  \begin{array}{ll}
           \exp\left( x_{12}' \, \beta + \gamma_{y_0,y_1} +   \lambda_{Q-1,q_1}  \right) & \text{if }   y_1 \leq q_1, \, y_2=Q , \, y_3\leq q_3, \\
              \exp\left(  x_{13}' \, \beta + \gamma_{y_0,Q} + \lambda_{q_3,q_1}  \right)   & \text{if }  y_1 \leq q_1, \, y_2=Q , \, y_3>q_3,\\
               -1  & \text{if } y_1 > q_1, \, y_2<Q,\\
              \exp\left( x_{32}' \, \beta + \gamma_{Q,y_1} + \lambda_{Q-1,q_3}    \right)  -1  & \text{if } y_1 > q_1, \, y_2=Q , \, y_3\leq q_3,\\
             0 & \text{otherwise}.
       \end{array} \right.
     \label{MomentFunctionsT3c}  
\end{align}

Notice that the moment functions for 
$q_2=1$ and $q_2=Q$ also satisfy \eqref{BinarizationTrinarization}, but
here $\widetilde y_{q_1,q_2,q_3}(y)$ corresponds to a binarization of
the outcome in all time periods, because $\widetilde y_{t,q_1,q_2,q_3}(y)$
also only takes two values for those values of $q_2$. Those moment
functions are therefore conceptually much closer
to \cite{Muris2017} and \cite{muris2020dynamic}, and they also 
incorporate the moment functions for the dynamic binary choice model
in \cite{honore2020dynamic} as special cases.

Together, the  formulas \eqref{MomentFunctionsT3}, \eqref{MomentFunctionsT3b}, and \eqref{MomentFunctionsT3c}
provide one moment function for every value of $(y_0,q_1,q_2,q_3) \in \{1,\ldots,Q\} \times \{1,\ldots,Q-1\} \times \{1,\ldots,Q\} \times  \{1,\ldots,Q-1\} $, and these constitute all our moment
functions for the dynamic ordered logit model with $T=3$.%
\footnote{By the limiting arguments ($\lambda_{0}  \rightarrow -\infty$ and $\lambda_{Q}  \rightarrow \infty$) described
above, all of those moment functions are already implicitly defined
via \eqref{MomentFunctionsT3} alone.}
The following theorem states that these are indeed valid moment functions for the dynamic panel ordered logit model, independent of the value
of the fixed effect $A$.

\begin{theorem}
\label{th:momentsT3}   
If the outcomes $Y=(Y_1,Y_2,Y_3)$ are generated
from model \eqref{model} with $Q \geq 2$, $T=3$ and true parameters $\theta^0 = (\beta^0,\gamma^0,\lambda^0)$, then we have for all
$y_0 \in \{1,\ldots,Q\}$, $q_1,q_3 \in \{1,\ldots,Q-1\}$, $q_2 \in \{1,\ldots,Q\}$, $x \in \mathbb{R}^{K
\times 3}$, and $\alpha \in \mathbb{R}  \cup \{\pm \infty\}$ that
\begin{align*}
\mathbb{E} \left[ m_{y_0,q_1,q_2,q_3}(Y,X,\theta^0) \, \big| \, Y_0=y_0,
\, X=x, \, A=\alpha \right] &= 0 .
\end{align*}
\end{theorem}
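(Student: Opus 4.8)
The plan is to verify the identity by direct computation. Conditional on $Y_0=y_0$, $X=x$, $A=\alpha$, the law of $(Y_1,Y_2,Y_3)$ is the explicit product \eqref{DefProb}, so the conditional expectation is the finite sum $\sum_{y\in\{1,\dots,Q\}^3} m_{y_0,q_1,q_2,q_3}(y,x,\theta^0)\,p_{y_0}(y,x,\theta^0,\alpha)$, and I would evaluate it by summing out the periods in the order $Y_3$, then $Y_2$, then $Y_1$. The only analytic ingredient is the telescoping identity $\sum_{q=1}^{k}\big[\Lambda(u-\lambda_{q-1})-\Lambda(u-\lambda_q)\big]=\Lambda(\lambda_k-u)$, which yields the cumulative transition probabilities $\Pr(Y_t\le k\mid Y^{t-1},X,A)=\Lambda(\lambda_k-z_t-A)$, $\Pr(Y_t>k\mid\cdot)=\Lambda(z_t+A-\lambda_k)$ and $\Pr(Y_t<k\mid\cdot)=\Lambda(\lambda_{k-1}-z_t-A)$, where $z_t=z(Y_{t-1},x_t,\theta^0)=x_t'\beta^0+\gamma^0_{Y_{t-1}}$.

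First, summing over $Y_3$: by inspection of \eqref{MomentFunctionsT3} the moment function depends on $y_3$ only through the events $\{y_3\le q_3\}$ and $\{y_3>q_3\}$ (and not at all in cases~3 and~4), so the $Y_3$-sum replaces each entry by a combination of its two sub-case values weighted by $\Pr(Y_3\le q_3\mid Y_2)$ and $\Pr(Y_3>q_3\mid Y_2)$, with $z_3=x_3'\beta^0+\gamma^0_{y_2}$. Next, summing over $Y_2$, I would split into the blocks $\{y_2<q_2\}$, $\{y_2=q_2\}$, $\{y_2>q_2\}$: on $\{y_2>q_2\}$ only case~3 survives and contributes $P\,\Pr(Y_1\le q_1,Y_2>q_2\mid Y_0)$ with $P:=\exp(x_{13}'\beta^0+\gamma^0_{y_0,q_2}+\lambda^0_{q_3,q_1})$; on $\{y_2<q_2\}$ only case~4 survives and contributes $-\Pr(Y_1>q_1,Y_2<q_2\mid Y_0)$; and on the diagonal block $\{y_2=q_2\}$, after rewriting the cross-exponentials $\exp(x_{32}'\beta+\gamma_{q_2,y_1}+\lambda_{q_2,q_3})$ etc.\ and the threshold-gap denominators in terms of $E:=\exp(x_3'\beta^0+\gamma^0_{q_2}+\alpha-\lambda^0_{q_3})$, $O:=\exp(x_2'\beta^0+\gamma^0_{y_1}+\alpha-\lambda^0_{q_2})$ and $R:=\exp(\lambda^0_{q_2}-\lambda^0_{q_2-1})$, multiplying the case~1/2 entries (resp.\ the case~5/6 entries) by $\Pr(Y_2=q_2\mid Y_1=y_1)=\frac{O(R-1)}{(1+RO)(1+O)}$ collapses the expression to $\frac{P(E-O)}{(1+E)(1+O)}$ (resp.\ $-\frac{RO-E}{(1+E)(1+RO)}$). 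In probability terms these are $P\big[\Pr(Y_3>q_3\mid Y_2=q_2)\Pr(Y_2\le q_2\mid Y_1)-\Pr(Y_3\le q_3\mid Y_2=q_2)\Pr(Y_2>q_2\mid Y_1)\big]$ and minus the analogue with $\Pr(Y_2\ge q_2\mid Y_1)$ and $\Pr(Y_2<q_2\mid Y_1)$.

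Finally, summing over $Y_1$: the two $y_1\le q_1$ blocks combine, via $\Pr(Y_1\le q_1,Y_2\le q_2\mid Y_0)+\Pr(Y_1\le q_1,Y_2>q_2\mid Y_0)=\Pr(Y_1\le q_1\mid Y_0)$, into $P\,\Pr(Y_3>q_3\mid Y_2=q_2)\,\Pr(Y_1\le q_1\mid Y_0)$, and the two $y_1>q_1$ blocks combine into $-\Pr(Y_3\le q_3\mid Y_2=q_2)\,\Pr(Y_1>q_1\mid Y_0)$, so the conditional expectation equals $\frac{1}{1+E}\big[PE\,\Pr(Y_1\le q_1\mid Y_0)-\Pr(Y_1>q_1\mid Y_0)\big]$. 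This vanishes because $\Pr(Y_1>q_1\mid Y_0=y_0)/\Pr(Y_1\le q_1\mid Y_0=y_0)=\exp(x_1'\beta^0+\gamma^0_{y_0}+\alpha-\lambda^0_{q_1})$ and $PE$ equals exactly this quantity by the definition of $P$ — which is precisely why the prefactor $\exp(x_{13}'\beta+\gamma_{y_0,q_2}+\lambda_{q_3,q_1})$ appears in \eqref{MomentFunctionsT3}. The boundary cases $q_2=1$ and $q_2=Q$ follow from the identical computation under the conventions $\lambda^0_0=-\infty$, $\lambda^0_Q=\infty$, i.e.\ $\Pr(Y_2<1\mid\cdot)=\Pr(Y_2>Q\mid\cdot)=0$, which is how \eqref{MomentFunctionsT3b} and \eqref{MomentFunctionsT3c} were obtained; and $\alpha=\pm\infty$ is immediate since then $Y$ is concentrated on $(Q,Q,Q)$ or $(1,1,1)$, where the moment function vanishes. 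The one genuinely computational step, and the main obstacle, is the diagonal block: organizing the products of cross-exponentials with the $\exp(\lambda_{q_2}-\lambda_{q_2-1})-1$ denominators so that the collapse to $\frac{P(E-O)}{(1+E)(1+O)}$ becomes transparent is where the precise coefficients of \eqref{MomentFunctionsT3} are actually used.
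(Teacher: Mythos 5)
Your proposal is correct and is in substance the same argument as the paper's: the paper packages the computation ``sum out $Y_3$ and $Y_2$, observe that the result depends on $Y_1$ only through $\mathbbm{1}\{Y_1>q_1\}$, then sum out $Y_1$ using the logistic odds'' into Lemmas~\ref{lemma:MASTER} and~\ref{lemma:MASTER2} about coarsened three-period chains, whereas you carry it out directly on the ordered-logit probabilities. Your intermediate identities check out and coincide exactly with the paper's --- the per-block values $\frac{PE}{1+E}$ and $-\frac{1}{1+E}$ are the lemma's $g(0,w)$ and $g(1,w)$, and the final cancellation $PE\,\Pr(Y_1\le q_1\mid Y_0)=\Pr(Y_1>q_1\mid Y_0)$ is the last step of its proof --- while your uniform treatment of $q_2\in\{1,Q\}$ via the $\lambda_0=-\infty$, $\lambda_Q=\infty$ conventions (which I verified goes through) replaces the paper's separate Lemma~\ref{lemma:MASTER2} and label-reversal argument.
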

The proof of the theorem is given in the appendix.
For any fixed value of $Q$ one could, in principle, show by direct calculation that
\begin{align*}
\sum_{y \in \{ 1,2,\ldots,Q\}^3} \, p_{y_0}(y,x,\theta^0,\alpha) \;
m_{y_0,q_1,q_2,q_3}(y,x,\theta^0) = 0
\end{align*}
for the model probabilities $p_{y_0}(y,x,\theta^0,\alpha)$ given by \eqref{DefProb},
but our proof in the appendix does not rely on such a brute force calculation and is valid for any $Q \geq 2$.

For each initial condition $y_0$ we thus have $\ell = Q (Q-1)^2$ available moment conditions.
For example, for $Q=2,3,4,5$ there are respectively $\ell=2$, $12$, $36$, $80$ available moment conditions for each initial condition. 
For those values of $Q$ we have verified numerically that our $\ell$ moment conditions are linearly independent, and that they constitute all the valid moment conditions that are available for the dynamic panel ordered logit model with $T=3$, for generic values of $\gamma$.\footnote{If some of the parameters $\gamma_q$ are equal to each other, then additional moment conditions become available.}
 
We believe that this is true for all $Q \geq 2$, but a proof of this completeness result is beyond the scope of this paper.
For the special case of dynamic binary choice ($Q=2$), the moment conditions here are identical to those in \cite{honore2020dynamic}
and \cite{kitazawa2021transformations},
and the completeness of those binary choice moment conditions is discussed in \cite{kruiniger2020further}
and \cite{dobronyi2021identification}.

\subsection{Moment conditions for $T>3$}
\label{sec:MomentsTlarger3}

We now consider the case where the econometrician has data for more than three time periods (in addition to the period that gives the initial condition).
Obviously, all the moment conditions above for $T=3$ are still
valid when applied to three consecutive periods, but additional moment conditions become available
for $T>3$. We first consider moment conditions that are based on the outcome in three periods, where the last two are consecutive.
Let $z_t := z(y_{t-1},x_{t},\theta)$, with $z(\cdot,\cdot,\cdot)$
defined in \eqref{DefSingleIndex},
and define $z_{ts} := z_t - z_s$.
For $y_0 \in \{1,\ldots,Q\}$, $q_1,q_3 \in \{1,\ldots,Q-1\}$, $q_2 \in \{2,\ldots,Q-1\}$, 
and $t,s \in \{1,2,\ldots,T-1\}$ with $t<s$ we define  
 \begin{align}
  m^{(t,s,s+1)}_{y_0,q_1,q_2,q_3}(y,x,\theta)  &:= 
    \left\{  \begin{array}{ll}
           \exp\left(   z_{t,s+1} + \lambda_{q_3,q_1}  \right)  \frac     {\displaystyle \exp(z_{s+1,s} + \lambda_{q_2,q_3}) - 1   }
             {\displaystyle  \exp\left(   \lambda_{q_2,q_2-1}  \right) - 1   }  
             & \text{if }  y_t \leq q_1, \, y_s=q_2 , y_{s+1} \leq q_3, 
            \\[10pt]
           \exp\left(   z_{t,s+1} + \lambda_{q_3,q_1}  \right)     \frac     {\displaystyle 1 - \exp(z_{s,s+1} + \lambda_{q_3,q_2})  }
            {\displaystyle 1 -  \exp\left(   \lambda_{q_2-1,q_2}  \right)  } 
             & \text{if }   y_t \leq q_1, \, y_s=q_2 , \, y_{s+1} > q_3,            
            \\[10pt]
               \exp\left(   z_{t,s+1}  + \gamma_{y_s,q_2} + \lambda_{q_3,q_1} \right)  
             & \text{if }   y_t \leq q_1, \,  y_s>q_2 ,
            \\[10pt]
         -1  
         & \text{if }   y_t > q_1, \, y_s<q_2 ,                        
            \\[10pt]
 - \,   \frac{\displaystyle      1 - \exp(z_{s+1,s} + \lambda_{q_2-1,q_3})   }
            {\displaystyle 1 -  \exp\left(   \lambda_{q_2-1,q_2}  \right)  } 
             & \text{if }   y_t > q_1, \, y_s=q_2 , \, y_{s+1} \leq q_3, 
            \\[10pt]
 - \,  \frac{ \displaystyle       \exp( z_{s,s+1} + \lambda_{q_3,q_2-1})  -  1  }
             {\displaystyle  \exp\left(   \lambda_{q_2,q_2-1}  \right) - 1   } 
               & \text{if }   y_t > q_1, \, y_s=q_2 , \,y_{s+1} > q_3, 
            \\[10pt]          0 
            & \text{otherwise}.
       \end{array} \right.
     \label{MomentFunctionsTgeneral}  
\end{align}
For  $T=3$, $t=1$, and $s=2$, it is straightforward to verify that  $ m^{(t,s,s+1)}_{y_0,q_1,q_2,q_3}(y,x,\theta) $ in equation \eqref{MomentFunctionsTgeneral}
equals 
the moment function in equation \eqref{MomentFunctionsT3}.
For larger values of $T$,
the moment function in \eqref{MomentFunctionsTgeneral}
can be implemented as long as outcomes
are observed for the time periods $\{t-1,t,s-1,s,s+1\}$
and covariates are observed for time periods
$\{t,s,s+1\}$.

Since $\lambda_0= -\infty$ and $\lambda_Q=\infty$, equation \eqref{MomentFunctionsTgeneral} can not be used to  define a moment function when $q_2$ equals 1 or $Q$. We next define moment functions for these cases.
For $y_0 \in \{1,\ldots,Q\}$, $q_1,q_3 \in \{1,\ldots,Q-1\}$, $t,s,r\in \{1,2,\ldots ,T\}$, and $t<s<r$, we
define
\begin{align}
  m^{(t,s,r)}_{y_0,q_1,1,q_3}(y,x,\theta)  &:= 
    \left\{  \begin{array}{ll}
             \exp\left(  z_{sr} + \lambda_{q_3,1}  \right)     -  1 & \text{if }  y_t \leq q_1, \, y_s=1 , \, y_r>q_3, \\
             - 1 & \text{if }   y_t \leq q_1, \, y_s>1,\\
                 \exp\left( z_{rt} + \lambda_{q_1,q_3}  \right)  & \text{if }  y_t > q_1, \, y_s=1 , \,y_r\leq q_3,\\
                \exp\left( z_{st}   + \lambda_{q_1,1}  \right)  & \text{if } y_t > q_1, \, y_s=1 , \, y_r>q_3,\\
             0 & \text{otherwise},
       \end{array} \right.
\nonumber \\[10pt]
   m^{(t,s,r)}_{y_0,q_1,Q,q_3}(y,x,\theta)  &:= 
    \left\{  \begin{array}{ll}
           \exp\left( z_{ts}  +   \lambda_{Q-1,q_1}  \right) & \text{if }   y_t \leq q_1, \, y_s=Q , \, y_r\leq q_3, \\
              \exp\left(  z_{tr} + \lambda_{q_3,q_1}  \right)   & \text{if }  y_t \leq q_1, \, y_s=Q , \, y_r>q_3,\\
               -1  & \text{if } y_t > q_1, \, y_s<Q,\\
              \exp\left( z_{rs}  + \lambda_{Q-1,q_3}    \right)  -1  & \text{if } y_t > q_1, \, y_s=Q , \, y_r\leq q_3,\\
             0 & \text{otherwise}.
       \end{array} \right.
    \label{MomentFunctionsTgeneralBC}   
\end{align}
When $T$ equals 3 and $(t,s,r)=(1,2,3)$, these moment functions agree with the ones in equations \eqref{MomentFunctionsT3b} and \eqref{MomentFunctionsT3c}, where all the arguments were made explicit.
For $r=s+1$,
analogous to \eqref{MomentFunctionsT3b} and \eqref{MomentFunctionsT3c} for $T=3$, the two moment conditions in \eqref{MomentFunctionsTgeneralBC} for $T \geq 3$ can be obtained from 
\eqref{MomentFunctionsTgeneral} by setting $q_2=1$ and carefully evaluating the limit $\lambda_{0}  \rightarrow -\infty$
(after normalizing the value for $y_t \leq q_1$, $y_s>1$ to be $-1$),
or setting  $q_2=Q$ and taking the limit $\lambda_{Q}  \rightarrow \infty$. It is therefore 
appropriate to think of \eqref{MomentFunctionsTgeneral} as our master equation, which summarizes all
the moment conditions provided in this paper.
In \eqref{MomentFunctionsTgeneralBC} we can choose more general $r\geq s+1$,
but otherwise the structure of \eqref{MomentFunctionsTgeneralBC} can be derived from \eqref{MomentFunctionsTgeneral}.

It turns out that the moment functions with $r > s+1$ are not actually needed to span all possible valid moment functions of the dynamic ordered choice logit model (see our discussion of independence and completeness below). However, since implementation of these moment functions requires only that we observe three pairs $(y_{t-1},y_t)$, $(y_{s-1},y_s)$, $(y_{r-1},y_r)$ of consecutive outcomes, they may be empirically relevant for the case where observations for some time periods are (exogenously) missing.%
\footnote{For example, an estimator that allows for selection to be correlated with $(Y_{i0},X_i,A_i)$ can be constructed using the results on GMM estimation with incomplete data in \cite{muris_incomplete}.}
We also include $r > s+1$ in our discussion
here to ensure that our results in this paper
contain those for the dynamic binary choice logit model studied in \cite{honore2020dynamic} as a special case
--- notice that for $Q=2$ we always have $q_2=1$
or $q_2=Q$, that is, for the binary choice case all available moment functions are stated in \eqref{MomentFunctionsTgeneralBC}.

The following theorem establishes that the moment functions in \eqref{MomentFunctionsTgeneral} and \eqref{MomentFunctionsTgeneralBC} do indeed deliver valid moment conditions.

\begin{theorem}
\label{th:momentsTgeneralA}   
 If the outcomes $Y=(Y_1,\ldots,Y_T)$ are generated
from model \eqref{model} with $Q \geq 2$, $T\geq 3$ and true parameters $\theta^0 = (\beta^0,\gamma^0,\lambda^0)$, then we have for all 
$t,s,r \in \{1,2,\ldots,T\}$ with $t<s<r$, 
$y_0 \in \{1,\ldots,Q\}$,
$q_1,q_3 \in \{1,\ldots,Q-1\}$, $x\in \mathbb{R}^{K\times T}$, $\alpha
\in \mathbb{R}  \cup \{\pm \infty\}$, and $w:\{1,\ldots,Q\}^{t-1}\rightarrow \mathbb{R}$   that
\begin{align*}
\mathbb{E}\left[ w(Y_{1},\ldots ,Y_{t-1})\,m^{(t,s,s+1)}_{y_0,q_1,q_2,q_3}(Y,X,\theta^0)\,\big|\,Y_{0}=y_{0},\,X=x,\,A=\alpha \right] & =0,
\quad \text{for $q_2 \in \{2,\ldots,Q-1\}$,}
\\
\mathbb{E}\left[ w(Y_{1},\ldots ,Y_{t-1})\,m^{(t,s,r)}_{y_0,q_1,q_2,q_3}(Y,X,\theta)\,\big|\,Y_{0}=y_{0},\,X=x,\,A=\alpha \right] & =0,
\quad \text{for $q_2 \in \{1,Q\}$.}
\end{align*}
\end{theorem}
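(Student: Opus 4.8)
The plan is to reduce Theorem~\ref{th:momentsTgeneralA} to Theorem~\ref{th:momentsT3} by combining the law of iterated expectations with the Markov structure of the model: conditionally on $(X,A)$ the sequence $(Y_{t})_{t\ge0}$ is a Markov chain with one-step transition \eqref{model}, and I will use this in three places --- to dispose of the weight $w$ and the ``outer'' periods, to treat the four-consecutive-period case, and to collapse the ``gaps'' between $t$, $s$ and $r$. For \emph{Step~1}, since $w$ depends only on $(Y_{1},\ldots,Y_{t-1})$, iterated expectations give
\begin{align*}
&\mathbb{E}\!\left[ w(Y_{1},\ldots ,Y_{t-1})\,m^{(t,s,s+1)}_{y_0,q_1,q_2,q_3}(Y,X,\theta^0)\,\big|\,Y_{0},X,A\right] \\
&\qquad =\mathbb{E}\!\left[ w(Y_{1},\ldots ,Y_{t-1})\,\mathbb{E}\!\left[ m^{(t,s,s+1)}_{y_0,q_1,q_2,q_3}(Y,X,\theta^0)\,\big|\,Y_{0},\ldots ,Y_{t-1},X,A\right]\,\Big|\,Y_{0},X,A\right] ,
\end{align*}
so, $w$ being arbitrary, it suffices that the inner conditional expectation vanish for every realization of $(Y_{0},\ldots,Y_{t-1})$; by the Markov property it depends on these only through $Y_{t-1}$, and $m^{(t,s,s+1)}$ does not involve $Y_{s+2},\ldots,Y_{T}$ (nor $m^{(t,s,r)}$ any outcome after period $r$), so those periods integrate out to one. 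Hence the theorem reduces to showing
\begin{align*}
\mathbb{E}\!\left[ m^{(t,s,s+1)}_{y_0,q_1,q_2,q_3}(Y,X,\theta^0)\,\big|\,Y_{t-1}=y_{t-1},\,X=x,\,A=\alpha\right]=0
\end{align*}
for all $y_{t-1},x,\alpha$, together with the analogue for $m^{(t,s,r)}$.

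\emph{Step~2 (the no-gap case).} When $s=t+1$ (and, in \eqref{MomentFunctionsTgeneralBC}, also $r=s+1$), the moment function depends only on the four consecutive outcomes $Y_{t-1},Y_{t},Y_{t+1},Y_{t+2}$, and, writing the single indices as $z_{\tau}=x_{\tau}'\beta+\gamma_{y_{\tau-1}}$, it equals --- after relabeling $(y_{t-1},y_t,y_{t+1},y_{t+2})\mapsto(y_0,y_1,y_2,y_3)$ and the regressor rows correspondingly --- the $T=3$ moment function of \eqref{MomentFunctionsT3}--\eqref{MomentFunctionsT3c}; this is the identification already pointed out in the text for $t=1$. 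By the Markov property and strict exogeneity, the conditional law of $(Y_{t},Y_{t+1},Y_{t+2})$ given $(Y_{t-1}=y_{t-1},X=x,A=\alpha)$ is exactly the model law \eqref{DefProb} of a four-period panel with initial value $y_{t-1}$, regressor rows $(x_t,x_{t+1},x_{t+2})$ and fixed effect $\alpha$. Theorem~\ref{th:momentsT3} then applies verbatim and settles this case.

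\emph{Step~3 (the gaps, and the main obstacle).} For $s>t+1$, inspection of \eqref{MomentFunctionsTgeneral} shows that $m^{(t,s,s+1)}$ depends on $Y$ only through $Y_{t-1},Y_{t},Y_{s-1},Y_{s},Y_{s+1}$, on $Y_{t-1}$ only through $z_{t}$, and on $Y_{t}$ only through $\mathbbm{1}\{Y_{t}\leq q_1\}$. Conditioning in addition on $Y_{s-1}$ and integrating out $Y_{s},Y_{s+1}$ (whose conditional law, by the Markov property, depends only on $Y_{s-1}$) yields
\begin{align*}
&\mathbb{E}\!\left[ m^{(t,s,s+1)}_{y_0,q_1,q_2,q_3}(Y,X,\theta^0)\,\big|\,Y_{t-1},Y_{t},Y_{s-1},X,A\right] \\
&\qquad =\mathbbm{1}\{Y_{t}\leq q_1\}\,\exp(z_{t})\,\Psi^{+}(Y_{s-1})+\mathbbm{1}\{Y_{t}>q_1\}\,\Psi^{-}(Y_{s-1}) ,
\end{align*}
where $\Psi^{\pm}(Y_{s-1})$ are the conditional expectations over $(Y_{s},Y_{s+1})$ of the two ``branches'' of the moment function. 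The crucial step, which I expect to be the main obstacle, is a structural lemma: $\Psi^{+}(\cdot)$ and $\Psi^{-}(\cdot)$ do not depend on $Y_{s-1}$. Granting this, the right-hand side above depends on $(Y_{t-1},Y_{t})$ only, and it coincides term by term with the analogous expression for the no-gap moment function $m^{(t,t+1,t+2)}$ of Step~2 evaluated on a four-period panel whose period-$(t+2)$ regressor is set to $x_{s+1}$ (the remaining regressor being immaterial, by the same lemma); taking the expectation over $Y_{t}$ given $Y_{t-1}$ and invoking Step~2 gives the claim. The lemma itself would be proved by a direct summation over the regions $Y_{s}<q_2$, $Y_{s}=q_2$, $Y_{s}>q_2$ and $Y_{s+1}\leq q_3$, $Y_{s+1}>q_3$ with the explicit logistic transition probabilities, checking that the dependence on $z_{s}=z(Y_{s-1},x_{s},\theta)$ cancels out --- a computation in the same spirit as the proof of Theorem~\ref{th:momentsT3}, and where the real work lies.

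\emph{Step~4 (boundary values and general $r$).} For $m^{(t,s,r)}$ with $q_2\in\{1,Q\}$ and possibly $r>s+1$ there is an additional gap between periods $s$ and $r-1$; it is collapsed by the analogous structural lemma (integrating out the outcomes downstream of period $s$ again removes the dependence on the intervening outcome), which reduces matters to $r=s+1$, handled as in Steps~2--3. Equivalently, since the $q_2\in\{1,Q\}$ moment functions are the limits $\lambda_{0}\to-\infty$ and $\lambda_{Q}\to\infty$ of those for $q_2\in\{2,\ldots,Q-1\}$ (exactly as \eqref{MomentFunctionsT3b}--\eqref{MomentFunctionsT3c} and \eqref{MomentFunctionsTgeneralBC} were obtained), one may pass to these limits in the identity already established. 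In the binary-choice case $Q=2$ only $q_2\in\{1,Q\}$ occurs, and the statement then recovers the moment conditions of \cite{honore2020dynamic}.
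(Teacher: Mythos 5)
Your overall strategy is the same as the paper's: condition on the past, use the Markov structure to reduce to the conditional law of $(Y_t,\ldots,Y_{s+1})$ given $Y_{t-1}$, and hinge everything on the structural fact that the conditional expectation of the ``downstream'' part of the moment function given $Y_{s-1}$ does not depend on $Y_{s-1}$. That fact is exactly the content of the paper's Lemma~\ref{lemma:MASTER}, whose proof consists of showing that $g(\widetilde y_1,w):=\mathbb{E}[m(\widetilde Y,W)\mid \widetilde Y_1=\widetilde y_1, W=w]$ equals $-[1-\Lambda(\pi_3)]$ for $\widetilde y_1=1$ and $\exp(\pi_1-\pi_3)\Lambda(\pi_3)$ for $\widetilde y_1=0$, independently of $w=Y_{s-1}$; the identity collapsing the $q_2$-dependent ratios into $\{\Lambda[\pi_{2,1}(w)]-\Lambda(\pi_3)\}/\{\Lambda[\pi_{2,1}(w)]-\Lambda[\pi_{2,2}(w)]\}$ is where the logistic functional form is actually used. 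You correctly identify this as ``the main obstacle'' and ``where the real work lies'' --- but you do not carry it out. Since this cancellation is the entire mathematical content of the theorem (everything else is iterated expectations and bookkeeping), the proposal as written is an outline with the central step asserted rather than proved. That is a genuine gap.

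A second, more localized problem is your treatment of $q_2\in\{1,Q\}$ with $r>s+1$. You offer two routes: an ``analogous structural lemma'' for the gap between $s$ and $r$, or passing to the limits $\lambda_0\to-\infty$, $\lambda_Q\to\infty$ ``in the identity already established.'' The limit route cannot work for $r>s+1$: there is no valid interior-$q_2$ moment function with third index different from $s+1$ to take a limit of --- the paper explicitly notes (and numerically verifies) that the $q_2\in\{2,\ldots,Q-1\}$ conditions do \emph{not} generalize beyond $r=s+1$. So the $r>s+1$ case genuinely requires the separate argument of the paper's Lemma~\ref{lemma:MASTER2}, whose key input is that, conditional on $Y_s=Q$ exactly (respectively $Y_s=1$), the law of $Y_{r-1}$ is independent of $Y_{s-1}$ by the Markov property, together with the fact that the $v$-dependent entries of that moment function are supported on $\{\widetilde y_2=1\}$. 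Your first route gestures at this but again leaves the verification undone. To complete the proof you would need to supply both computations explicitly.
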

The proof is provided in the appendix. Notice that for $q_2 \in \{1,Q\}$ we can choose the time indices $t<s<r$ freely. 
By contrast, for $q_2 \in \{2,\ldots,Q-1\}$ we can only choose $t<s$ freely, but the third time index that appears in the definition
of the moment function needs to be equal to $s+1$, otherwise we do not obtain a valid moment function for those values of $q_2$.

This distinction between $q_2 \in \{1,Q\}$ and $q_2 \in \{2,\ldots,Q-1\}$ is also reflected in the proof of Theorem~\ref{th:momentsTgeneralA}. 
The moment functions in \eqref{MomentFunctionsTgeneralBC} for $q_2 \in \{1,Q\}$ 
only depend on $Y_1$, $Y_2$, $Y_3$ through the binarized variables
$\widetilde Y_1 =   \mathbbm{1}\left\{  Y_1 > q_1 \right\}$,
$\widetilde Y_2 =  \mathbbm{1}\left\{  Y_2 = q_2 \right\}$,
$\widetilde Y_3  =  \mathbbm{1}\left\{  Y_3 > q_3 \right\}$,
and the proof relies on Lemma~\ref{lemma:MASTER2} in the appendix, which
provides a general set of valid moment functions for such binary variables,
very closely related to the dynamic binary choice results in 
\cite{honore2020dynamic}.
By contrast, the moment functions in \eqref{MomentFunctionsTgeneral}
for $q_2 \in \{2,\ldots,Q-1\}$ cannot be expressed through binarized variables only,
because there the dependence on $Y_2$ requires distinguishing three cases
($Y_s < q_2 $, $Y_s = q_2$, $Y_s > q_2$). The proof, in this case, relies on
Lemma~\ref{lemma:MASTER} in the appendix which is completely novel to the
current paper. However, that proof strategy for $q_2 \in \{2,\ldots,Q-1\}$
does not work for $s>r+1$, and we have also numerically verified that our
moment conditions for $q_2 \in \{2,\ldots,Q-1\}$ indeed do not
generalize to $s>r+1$.

\subsubsection*{Conjecture on the completeness of the moment conditions}

Theorem~\ref{th:momentsTgeneralA} states that  the moment functions
in \eqref{MomentFunctionsTgeneral} and \eqref{MomentFunctionsTgeneralBC}
are valid, but it is natural to ask whether they are also linearly
independent, and whether they constitute all possible valid moment 
functions of the dynamic panel ordered logit model. 
We do not aim to formally prove such a linear independence and completeness result in this paper, and the 
following statement should accordingly be understood as a conjecture, which we have numerically confirmed for various combinations of $Q$ and $T$  and for
many different numerical values of the regressors and model parameters:

Let the outcomes $Y=(Y_1,\ldots,Y_T)$ be generated
from model \eqref{model} with $Q \geq 2$, $T\geq 3$, and let the true parameters
$\theta^0 = (\beta^0,\gamma^0,\lambda^0)$ be
such that $\gamma^0_{q_1} \neq \gamma^0_{q_2}$ for
all $q_1 \neq q_2$. For given $y_0 \in \{1,\ldots,Q\}$ and 
$x\in \mathbb{R}^{K\times T}$, let $m_{y_0}(y,x,\theta^0) \in \mathbb{R}$ be a
moment function 
that satisfies \eqref{MomentFunctionIdea} for all $\alpha \in \mathbb{R} \cup \{\pm \infty\}$. Our calculations suggest that there exist unique 
weights $w_{y_{0}}(q_1,q_2,q_3,s,y_{1},\ldots ,y_{t-1},x,\theta^0) \in \mathbb{R}$ such that for all $y \in \{1,\ldots,Q\}^T$ we have
\begin{align}
    m_{y_0}(y,x,\theta^0)  = 
    \sum_{q_1=1}^{Q-1}
     \sum_{q_2=1}^{Q}
      \sum_{q_3=1}^{Q-1}
    \sum_{t=1}^{T-2} \sum_{s=t+1}^{T-1}
  w_{y_{0}}(q_1,q_2,q_3,t,s,y_{1},\ldots ,y_{t-1},x,\theta^0)  \;
    m^{(t,s,s+1)}_{y_0,q_1,q_2,q_3}(y,x,\theta^0),
    \label{completeness}
\end{align}
where $m^{(t,s,s+1)}_{y_0,q_1,q_2,q_3}(y,x,\theta^0)$ 
are the moment functions defined in \eqref{MomentFunctionsTgeneral} and \eqref{MomentFunctionsTgeneralBC}. In other words, 
we conjecture that
every valid 
moment condition in this model is a unique linear combination of the moment conditions in
Theorem~\ref{th:momentsTgeneralA} with $r=s+1$. Notice that
the uniqueness of the linear combination implies that the moment
functions involved in this linear combination are linearly independent.

In equation \eqref{completeness}, the function 
$m^{(t,s,s+1)}_{y_0,q_1,q_2,q_3}(y,x,\theta^0)$ is
multiplied with an arbitrary function of $y_1,\ldots,y_{t-1}$.
Those functions of $y_1,\ldots,y_{t-1}$ constitute a
$Q^{t-1}$ dimensional space. Thus, \eqref{completeness} suggests that the total number of available moment conditions for each value of the covariates $x$ and initial conditions $y_0$ is equal to
\begin{align}
    \ell &= \sum_{q_1=1}^{Q-1}
     \sum_{q_2=1}^{Q}
      \sum_{q_3=1}^{Q-1}
    \sum_{t=1}^{T-2} \sum_{s=t+1}^{T-1} Q^{t-1}
    = (Q-1) \, Q\, (Q-1) \sum_{t=1}^{T-2} \, (T-t-1) \, Q^{t-1}
  \nonumber  \\
    &= Q^T - (T-1) \, Q^2 + (T-2) \, Q .
    \label{NumberMoment}
\end{align}
As explained in Section~\ref{subsec:momentApproach}, the function $m_{y_0}(\cdot,x,\theta^0) : \{1,\ldots,Q\}^T \rightarrow \mathbb{R}$ is a vector in a $Q^T$ dimensional space.
The condition \eqref{MomentFunctionIdea}, for all $\alpha$, 
then imposes  $Q^T - \ell = (T-1) \, Q^2 + (T-2) \, Q$ linear restrictions on this vector, leaving an $\ell$-dimensional linear subspace 
of valid moment functions, a basis representation
of which is given by
\eqref{completeness}. For fixed values
of $y_0$, $x$, $\theta_0$, $T$, $Q$, one can numerically verify 
the dimension of the solution space of the system of linear equations \eqref{MomentFunctionIdea}, and thereby check 
\eqref{NumberMoment} numerically. In Appendix~\ref{app:LowerBound} we furthermore show that 
the total number of linearly independent 
conditional moment conditions for our model
is at least 
the number obtained in \eqref{app:LowerBound},
but that argument in the appendix still allows for the possibility
that there could be more, although we do not believe that there are.

The condition 
$\gamma^0_{q_1} \neq \gamma^0_{q_2}$ for
all $q_1 \neq q_2$ is important for this result. For example,
if all the $\gamma^0_q$ are the same, then the parameter $\gamma^0$ can be absorbed into
the fixed effects, and we are left with a static ordered logit model
as in \cite{Muris2017}, for which
one finds an additional $(T-1) (Q-1)^2$ moment conditions to be available, bringing the total
number of linearly independent valid moment conditions (for each value of covariates and parameters)  in the static model
to
$\ell = Q^T -T (Q-1) - 1$.

We reiterate that the discussion of linear independence and completeness 
 of the moment functions presented above are  conjectures  which we
do not aim to prove in this paper. A proof for the special case $Q=2$ (dynamic binary choice logit models) is provided in \cite{kruiniger2020further}
and  \cite{dobronyi2021identification}.
 We also note that the counting of moment conditions as above  does not consider whether the resulting moment conditions actually contain
 information about (all) the parameters $\theta$. Some of the valid moment
 functions may not depend on (all of) those model parameters. Identification of the model
 parameters through the moment conditions is discussed in Section~\ref{sec:Identification}.

\subsection{More general regressors} 
\label{sec:GeneralModel}

The model probabilities in \eqref{DefProb} and the moment functions in \eqref{MomentFunctionsTgeneral}
and \eqref{MomentFunctionsTgeneralBC}
only depend on the regressors and the parameters $\beta$ and $\gamma$
through the single index $z_t=z(y_{t-1},x_{t},\theta)$.\footnote{
As written, the moment condition in \eqref{MomentFunctionsTgeneral} 
depends explicitly on the model parameter $\gamma$  for 
the case that $y_t \leq q_1$ and $y_s>q_2$. However, that is a notational
artefact, because in that line of the moment condition we could have 
written 
$ \exp\left[  z(y_{t-1},x_{t},\theta) - z(q_2,x_{s+1},\theta) + \lambda_{q_3,q_1}   \right]$
instead of
$ \exp\left(   z_{t,s+1}  + \gamma_{y_s,q_2} + \lambda_{q_3,q_1} \right)$; that is, the explicit
 dependence on $\gamma$ can be fully absorbed into the single index, but one needs to evaluate $z_{s+1}=z(y_s,x_{s+1},\theta)$ at $q_2$ instead of $y_s$. 
}
So far, we have only explicitly discussed the linear specification in \eqref{DefSingleIndex} for this single index,
but Theorem~\ref{th:momentsTgeneralA} is valid independently of the functional form 
of $z(y_{t-1},x_{t},\theta)$.\footnote{%
The parameters $\lambda$ can also be absorbed into the single index. One just needs to 
define $\widetilde z_q(y_{t-1},x_{t},\theta) := z(y_{t-1},x_{t},\theta)  - \lambda_{q} $
and rewrite \eqref{DefProb} as
\begin{equation*}
p_{y_{0}}(y_{},x_{},\theta ,\alpha _{}) =\prod_{t=1}^{T}
\left\{  \Lambda\Big[ \widetilde{z}_{y_{t}-1}(y_{t-1},x_{t},\theta) + \alpha  \Big] -  \Lambda\Big[  \widetilde{z}_{y_{t}}(y_{t-1},x_{t},\theta) + \alpha    \Big]
\right\} .
\end{equation*}
The moment functions in \eqref{MomentFunctionsTgeneral}
and \eqref{MomentFunctionsTgeneralBC} then remain valid for arbitrary functional forms 
of $\widetilde z_q(y_{t-1},x_{t},\theta)$. We just need to replace $z_t - \lambda_{q_1}$, $z_s - \lambda_{q_2}$,
and $z_r - \lambda_{q_3}$ (with $r=s+1$ in \eqref{MomentFunctionsTgeneral}) by
$\widetilde z_{q_1}(y_{t-1},x_{t},\theta)$, $\widetilde z_{q_2}(y_{s-1},x_{s},\theta)$,
and $\widetilde z_{q_3}(y_{r-1},x_{r},\theta)$, respectively. The proof of Theorem~\ref{th:momentsTgeneralA} remains 
valid under that replacement.
}
In other words, if we replace the latent variable specification 
in \eqref{ModelYstar} by
\begin{align*}
    Y^*_{it} = z \left( Y_{i,t-1},X_{it},\theta \right)
    + A_i + \varepsilon_{it} 
\end{align*}
for an arbitrary function $z(\cdot,\cdot,\cdot)$, then 
the moment functions~\eqref{MomentFunctionsT3}, \eqref{MomentFunctionsT3b},
\eqref{MomentFunctionsT3c}, and Theorem~\ref{th:momentsTgeneralA} remain 
fully valid.

We believe that the linear specification in \eqref{ModelYstar} is the most relevant in practice, but one could certainly consider other 
specifications as well. In particular, it is possible to include regressors that are interactions between the
observed regressors
and the lagged dependent variable:
\begin{align}
    z(Y_{i,t-1},X_{it},\theta) :=  X_{it}^{\prime }\,\beta  +  \sum_{q=1}^{Q} \gamma_q  
    \left[ \mathbbm{1}\left\{ Y_{i,t-1} = q \right\} 
       +   \mathbbm{1}\left\{ Y_{i,t-1} = q \right\}  X_{it}^{\prime }\,\delta_q \right],
    \label{AlternativeSpecification}   
\end{align}    
where $\delta_q \in \mathbb{R}^K$ are the additional unknown parameters to be included in $\theta$. This specification allows
the effect of the regressors $X_{it}$ on the outcome $Y_{it}$ to be arbitrarily dependent on the current state $Y_{i,t-1}$.
While a GMM estimator based on moment functions developed in this paper could be employed in applications with the more general state dependence as in \eqref{AlternativeSpecification}, we  do not consider these more general models further.

\section{Identification}
\label{sec:Identification}

This section presents identification results for the parameters $\theta = (\beta,\gamma,\lambda)$ based on the moment conditions for $T=3$
in Theorem~\ref{th:momentsT3}. 
All  results in this section impose the following model assumption.
\newtheorem{IDassumption}{Assumption}
           \renewcommand{\theIDassumption}{ID}
\begin{IDassumption}
    \label{ass:ID}
   The outcomes $Y=(Y_1,Y_2,Y_3)$ are generated
from model \eqref{model} with $z(\cdot,\cdot,\cdot)$
defined in \eqref{DefSingleIndex}, $Q \geq 2$,
$T=3$, and true parameters $\theta^0 = (\beta^0,\gamma^0,\lambda^0)$.
Furthermore, for all $y_0 \in \{1,\ldots,Q\}$,
there exists a non-empty set ${\cal X}^{\rm reg}_{y_0} \subset \mathbb{R}^{K\times 3}$ such that
for all $x \in  {\cal X}^{\rm reg}_{y_0}$, the conditional probability 
   $ {\rm Pr}(A \in \{ \pm \infty\} \mid  Y_0=y_0, \, X=x )$ is well-defined and smaller than one.
\end{IDassumption}

We  impose the assumption   $ {\rm Pr}(A \in \{ \pm \infty\} \mid  Y_0=y_0, \, X=x ) < 1$  for some $x$  in order to ensure that
the model probabilities in \eqref{DefProb} are strictly positive for all possible outcomes.  If $ {\rm Pr}(A \in \{ \pm \infty\} \mid  Y_0=y_0, \, X=x ) = 1$ 
for all $x$,
then only the outcomes $Y_t=1$ and $Y_t=Q$ would be generated by the model. A violation of this assumption on the fixed effects $A$ would 
therefore be readily observable from the data. 
All the propositions below also impose that $X \in {\cal X}^{\rm reg}_{y_0}$
occurs with non-zero probability.

The aim is to  identify the parameter vector $\theta^0$ from the distribution of $Y$ conditional on $Y_0$ and $X$ under Assumption~\ref{ass:ID}.
The model for that conditional distribution is semi-parametric:
The distribution of $Y$ conditional on  $Y_0$, $X$,
and $A$ is specified parametrically, but only weak regularity conditions are
imposed on the unknown distribution of $A$ conditional on $Y_0$ and $X$.
The main challenge in the identification problem is how 
to deal with the unspecified conditional distribution of $A$, which is
an infinite-dimensional component of the parameter space of the model. Fortunately, the moment conditions in Theorem~\ref{th:momentsT3} already partly
solve this challenge, because they give us implications of the model
that do not depend on  $A$. The remaining question is whether $\theta^0$ is point-identified from those moment conditions.

\subsubsection*{Identification of $\gamma$}

In order to identify the parameters $\gamma=(\gamma_1,\ldots,\gamma_Q)$  up to   normalization, we condition on the event $X_1=X_2=X_3$.
For $x=(x_1,x_1,x_1)$ and $q_1=q_2=q_3=1$, the moment function in \eqref{MomentFunctionsT3b} reads
\begin{align}
   m_{y_0}(y,\gamma) := \exp(\gamma_{y_0} )  \, m_{y_0,1,1,1}(y,x,\theta)  &= 
    \left\{  \begin{array}{ll}
             - \exp(\gamma_{y_0}) & \text{if }   y_1 = 1, \, y_2>1,\\
                 \exp\left(  \gamma_{1}   \right)  & \text{if }  y_1 > 1, \, y_2=1 , \,y_3= 1,\\
                \exp\left(   \gamma_{y_1}      \right)  & \text{if } y_1 > 1, \, y_2=1 , \, y_3>1,\\
             0 & \text{otherwise}.
       \end{array} \right.
    \label{MomentFunctionID}   
\end{align}
Theorem~\ref{th:momentsT3} implies that 
$\mathbb{E} \left[  m_{y_0}(Y,\gamma^0) \, \big| \, Y_0=y_0, \, X=(x_1,x_1,x_1) \right] = 0$.
The following lemma states that these moment conditions are sufficient to uniquely identify $\gamma$ up to a normalization.

\begin{proposition}
     \label{prop:IDgamma}
     Let Assumption~\ref{ass:ID} hold, and let $x_1 \in \mathbb{R}$ be such that 
     $${\rm Pr}\left( 
     Y_0=y_0
     \; \& \; 
     X \in {\cal X}^{\rm reg}_{y_0} 
     \; \& \;
     \left\| X- (x_1,x_1,x_1) \right\| \leq \epsilon    \right) >0 
     \qquad \text{for all  $y_0 \in \{1,\ldots,Q\}$ and $\epsilon>0$.}
     $$
     
     Then, if $\gamma \in \mathbb{R}^Q$ satisfies\footnote{
      Here, we implicitly assume that 
      $ \mathbb{E} \left[  m_{y_0}(Y,\gamma) \, \big| \, Y_0=y_0, \, X=(x_1,x_1,x_1) \right]$ is uniquely 
      defined. 
      This can be guaranteed, for example,
      by demanding that this conditional expectation is
      continuous in $x_1$.
     }
     \begin{align}
     \mathbb{E} \left[  m_{y_0}(Y,\gamma) \, \big| \, Y_0=y_0, \, X=(x_1,x_1,x_1) \right] = 0 
     \qquad \text{for all $y_0 \in \{1,\ldots,Q\}$},
        \label{MomentConditionIDprop1}
     \end{align}
     for $m_{y_0}(y,\gamma)$ as defined in \eqref{MomentFunctionID},
      we have $\gamma = \gamma^0 + c$ for some $c \in \mathbb{R}$.
      Thus,
     if we normalize $\gamma^0_1=0$, then $ \gamma^0$ is uniquely identified from the data.
\end{proposition}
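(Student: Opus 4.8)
The plan is to combine the hypothesised moment equality for $\gamma$ with the moment equality for the true $\gamma^0$ — which Theorem~\ref{th:momentsT3} supplies — so as to cancel the unspecified conditional distribution of $A$ given $(Y_0,X)$, reducing the problem to a finite linear fixed‑point system for the ratios $t_q := \exp(\gamma_q)/\exp(\gamma^0_q)$.

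Concretely, I would first write out \eqref{MomentConditionIDprop1} using \eqref{MomentFunctionID}. Abbreviating the three relevant conditional probabilities as $P^{(y_0)}_a := \Pr(Y_1=1,Y_2>1\mid Y_0=y_0,X=(x_1,x_1,x_1))$, $P^{(y_0)}_b := \Pr(Y_1>1,Y_2=1,Y_3=1\mid\cdots)$, and $P^{(y_0)}_{c,j} := \Pr(Y_1=j,Y_2=1,Y_3>1\mid\cdots)$ for $j\in\{2,\dots,Q\}$, the condition \eqref{MomentConditionIDprop1} reads
\[
-e^{\gamma_{y_0}}\,P^{(y_0)}_a + e^{\gamma_1}\,P^{(y_0)}_b + \sum_{j=2}^Q e^{\gamma_j}\,P^{(y_0)}_{c,j} = 0,\qquad y_0=1,\dots,Q .
\]
Because $(x_1,x_1,x_1)\in{\cal X}^{\rm reg}_{y_0}$, the conditional law of $(Y_1,Y_2,Y_3)$ given $Y_0=y_0$, $X=(x_1,x_1,x_1)$ puts strictly positive mass on every outcome, so $P^{(y_0)}_b>0$ and $P^{(y_0)}_{c,j}>0$; the support hypothesis $\Pr(\|X-(x_1,x_1,x_1)\|\le\epsilon\mid Y_0=y_0)>0$ is what makes this conditioning legitimate. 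By Theorem~\ref{th:momentsT3}, the displayed identity also holds with every $\gamma$ replaced by $\gamma^0$, so I solve that version for $P^{(y_0)}_a$ and substitute. Writing $e^{\gamma_q}=t_q e^{\gamma^0_q}$ throughout, everything collapses to
\[
e^{\gamma^0_1}\,P^{(y_0)}_b\,(t_1-t_{y_0}) + \sum_{j=2}^Q e^{\gamma^0_j}\,P^{(y_0)}_{c,j}\,(t_j-t_{y_0}) = 0,\qquad y_0=1,\dots,Q .
\]

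Dividing by the strictly positive sum of the coefficients, this says that $t_{y_0}$ is a convex combination with strictly positive weights of $t_1,\dots,t_Q$, for every $y_0$; in matrix form $\mathbf t=P\mathbf t$ with $P$ a $Q\times Q$ row‑stochastic matrix all of whose entries are strictly positive. I would then finish by the standard maximum argument: choosing $k^\star$ with $t_{k^\star}=\max_q t_q=:M$, the $k^\star$‑th equation gives $M=\sum_q P_{k^\star q}\,t_q\le M$, and equality, together with $P_{k^\star q}>0$ for all $q$, forces $t_q=M$ for every $q$. Hence $\exp(\gamma_q-\gamma^0_q)=M$ for all $q$, i.e.\ $\gamma=\gamma^0+c$ with $c=\log M$; imposing the normalisation $\gamma_1=\gamma^0_1=0$ then gives $c=0$ and $\gamma=\gamma^0$, which is point identification.

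The crux is the first step: it is only because the \emph{same} moment function is known to be valid at $\gamma^0$ (Theorem~\ref{th:momentsT3}) that the unspecified mixing distribution of $A$ drops out, turning an apparently semiparametric problem into a $Q\times Q$ linear one; after that, the argument uses only the elementary fact that a strictly positive stochastic matrix has a one‑dimensional invariant space, spanned by the constant vector. The only other thing to handle carefully is the measure‑theoretic status of conditioning on $X=(x_1,x_1,x_1)$ and the strict positivity of $P^{(y_0)}_b$ and $P^{(y_0)}_{c,j}$, which is exactly what the two support conditions in the statement are there to guarantee.
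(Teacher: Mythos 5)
Your proof is correct, and it follows the same overall strategy as the paper: write the moment condition at $X=(x_1,x_1,x_1)$ as a $Q\times Q$ linear system in the quantities $\exp(\gamma_q)$, use Assumption~ID to get strict positivity of the relevant conditional outcome probabilities, and conclude that the system pins down $(\exp(\gamma_1),\ldots,\exp(\gamma_Q))$ up to a positive scalar. Where you diverge is in the linear-algebra endgame. The paper keeps the term $-e^{\gamma_{y_0}}P^{(y_0)}_a$ on the diagonal of a single matrix $B$ with strictly positive off-diagonal entries, and invokes its Lemma~\ref{lemma:HelpProp1}, which shows by a minimum-ratio contradiction (applied to $h=g^0-\bigl(\min_q g^0_q/g_q\bigr)g$) that such a $B$ can annihilate at most one positive direction. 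You instead use the validity of the moment condition at $\gamma^0$ (Theorem~\ref{th:momentsT3}) to eliminate $P^{(y_0)}_a$ by substitution, normalize by $g^0$, and land on a fixed-point equation $\mathbf{t}=P\mathbf{t}$ for a strictly positive row-stochastic matrix, finishing with the standard maximum principle. The two endgames are equivalent Perron--Frobenius-type arguments; yours is arguably the more transparent and self-contained of the two, since it needs only the elementary fact that a strictly positive stochastic matrix has constant invariant vectors, while the paper's lemma avoids the preliminary stochastic normalization and is stated so that it applies directly to the matrix $B$ as it arises from the moment condition. Your algebra checks out (the coefficients $e^{\gamma^0_1}P^{(y_0)}_b$ and $e^{\gamma^0_j}P^{(y_0)}_{c,j}$ are exactly the off-diagonal entries of the paper's $B$ for row $y_0$), and you correctly identify that the strict positivity of $P^{(y_0)}_b$ and $P^{(y_0)}_{c,j}$ comes from ${\rm Pr}(A\in\{\pm\infty\}\mid Y_0=y_0,X=x)<1$ on ${\cal X}^{\rm reg}_{y_0}$, while the support condition on $X$ near $(x_1,x_1,x_1)$ is what makes the conditional moment recoverable from the data.
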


The proof is given in the appendix. This identification result requires observed data for every initial condition  $y_0 \in \{1,\ldots,Q\}$.
If this is not available, but we observe $T=4$ time periods of data after the initial condition, then we can instead apply 
Proposition~\ref{prop:IDgamma} to the data shifted by one time period.

In addition to Assumption~\ref{ass:ID}, the proposition 
demands that covariate values $X \in {\cal X}^{\rm reg}_{y_0}$ 
in any $\epsilon$-ball around $(x_1,x_1,x_1)$ occur with positive probability. This condition, in particular, guarantees that 
the conditional expectation in \eqref{MomentConditionIDprop1} 
is well-defined, and that conditional on
$X=(x_1,x_1,x_1)$ the event $A \in \{ \pm \infty\}$ occurs
with probability less than one for every value of the initial condition
$Y_0$.

\subsubsection*{Identification of $\beta$}

Taking the identification result for $\gamma$ as given, we now turn to the problem of identifying $\beta$.
We again
consider the moment function in \eqref{MomentFunctionsT3b}
with $q_1=q_2=q_3=1$, but now for general regressor values 
\begin{align}
 m_{y_0,1,1,1}(y,x,\beta,\gamma) 
 :=
  m_{y_0,1,1,1}(y,x,\theta)  & = 
    \left\{  \begin{array}{ll}
             \exp\left(  x_{23}' \, \beta     \right)     -  1 & \text{if }  y_1 = 1, \, y_2=1 , \, y_3>1, \\
             - 1 & \text{if }   y_1 = 1, \, y_2>1,\\
                 \exp\left( x_{31}' \, \beta + \gamma_1 - \gamma_{y_0}   \right)  & \text{if }  y_1 > 1, \, y_2=1 , \,y_3 = 1,\\
                \exp\left( x_{21}' \, \beta + \gamma_{y_1} - \gamma_{y_0}    \right)  & \text{if } y_1 > 1, \, y_2=1 , \, y_3>1,\\
             0 & \text{otherwise}.
       \end{array} \right.
     \label{momIdentify}  
\end{align}
For $k\in \{1,\ldots ,K\}$ we define 
\begin{align*}
\mathcal{X}_{k,+}& :=\{x\in  {\cal X}^{\rm reg}_{y_0} \,:\,x_{k,1}\leq
x_{k,3}<x_{k,2}\;\;\text{or}\;\;x_{k,1}<x_{k,3}\leq x_{k,2}\}, \\
\mathcal{X}_{k,-}& :=\{x\in  {\cal X}^{\rm reg}_{y_0}\,:\,x_{k,1}\geq
x_{k,3}>x_{k,2}\;\;\text{or}\;\;x_{k,1}>x_{k,3}\geq x_{k,2}\}.
\end{align*}
Here, the set $\mathcal{X}_{k,+}$ is the
set of possible regressor values $x\in \mathbb{R}^{K\times 3}$ such that
$x_{k,1}\leq x_{k,3}\leq x_{k,2}$ with at least one of the inequalities being strict.
For the set $\mathcal{X}_{k,-}$ those inequalities are reversed.
Therefore, if $x \in \mathcal{X}_{k,+}$, then $m_{y_0,1,1,1}(y,x,\beta,\gamma)$ is strictly 
increasing in $\beta_k$,
and if $x \in \mathcal{X}_{k,-}$, then $m_{y_0,1,1,1}(y,x,\beta,\gamma)$ is strictly 
decreasing in $\beta_k$.

For any vector $s\in \{-,+\}^{K}$, we furthermore define the set $\mathcal{X}_{s}=\bigcap_{k\in
\{1,\ldots ,K\}}\mathcal{X}_{k,s_{k}}$. If $x \in \mathcal{X}_{s}$, then for all $k\in
\{1,\ldots ,K\}$ we have that $\beta_k$ is strictly increasing (or strictly decreasing) in $m_{y_0,1,1,1}(y,x,\beta,\gamma)$
if $s_k=+$ (or $s_k=-$). These monotonicity properties allow us to uniquely identify $\beta$
from the moment conditions $\mathbb{E}\left[ m_{y_0,1,1,1}(Y,X,\beta^0,\gamma^0)\,\Big|\,Y_{0}=y_{0},\;X\in \mathcal{X}_{s}\right] = 0$,
which are valid moment conditions according to Theorem~\ref{th:momentsT3}. The following proposition formalizes this.

\begin{proposition}
     \label{prop:IDbeta}
     Let Assumption~\ref{ass:ID} hold and let $y_0 \in  \{1,\ldots,Q\}$ be such that
     $${\rm Pr}\left( 
     Y_0=y_0
     \; \& \;
     X\in \mathcal{X}_{s}  \right) >0 
     \qquad \text{for all $s \in  \{-,+\}^{K}$ with $s_K=+$.}
     $$
     Then, if $\beta \in \mathbb{R}^K$ satisfies
     \begin{align}
    \mathbb{E}\left[ m_{y_0,1,1,1}(Y,X,\beta,\gamma^0)\,\Big|\,Y_{0}=y_{0},\;X\in \mathcal{X}_{s}\right] = 0 
     \qquad \text{for all $s \in  \{-,+\}^{K}$ with $s_K=+$,}
     \label{Condition:prop:IDbeta}
     \end{align}
     we have $\beta= \beta^0$.
    Thus, since $\gamma^0$ is already identified from Proposition~\ref{prop:IDgamma},
    we find that $\beta^0$ is also  uniquely identified from the data.
\end{proposition}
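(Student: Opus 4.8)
The plan is to convert the coordinatewise monotonicity of $m_{y_0,1,1,1}(y,x,\beta,\gamma^0)$ in $\beta$ --- noted just above the proposition --- into a squeeze argument that pins $\beta$ to the known root $\beta^0$. The anchor is that $\beta^0$ solves the moment equations: by Theorem~\ref{th:momentsT3} with $q_1=q_2=q_3=1$, $\mathbb{E}[m_{y_0,1,1,1}(Y,X,\beta^0,\gamma^0)\mid Y_0=y_0,X=x]=0$ for every $x$, so averaging over $A$ and then over $x\in\mathcal{X}_s$ gives $G_s(\beta^0):=\mathbb{E}[m_{y_0,1,1,1}(Y,X,\beta^0,\gamma^0)\mid Y_0=y_0,X\in\mathcal{X}_s]=0$ for each $s$ with $\Pr(X\in\mathcal{X}_s\mid Y_0=y_0)>0$, in particular for the $s$ with $s_K=+$ in \eqref{Condition:prop:IDbeta}.

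Next I would make the monotonicity precise. In \eqref{momIdentify} the coordinate $\beta_k$ enters only through the factors $\exp(x_{23}'\beta)$, $\exp(x_{31}'\beta)$, $\exp(x_{21}'\beta)$ (times $\beta$-free constants), whose $\beta_k$-derivatives have the signs of the $k$-th entries of $x_{23}$, $x_{31}$, $x_{21}$ respectively; the only nonzero branch carrying the $x_{21}$ factor is $\{y_1>1,\,y_2=1,\,y_3>1\}$ (nonempty since $Q\ge 2$). On $\mathcal{X}_{k,+}$ the $k$-th entries of $x_{23}$ and $x_{31}$ are $\ge 0$ and that of $x_{21}$ is $>0$, so $\beta_k\mapsto m_{y_0,1,1,1}(y,x,\beta,\gamma^0)$ is nondecreasing for every $y$ and strictly increasing for $y$ in that branch; on $\mathcal{X}_{k,-}$ the inequalities reverse. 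To upgrade this to strict monotonicity of $G_s$ I would use Assumption~\ref{ass:ID}: since $\mathcal{X}_s\subseteq\mathcal{X}^{\rm reg}_{y_0}$, on $\mathcal{X}_s$ one has $\Pr(A\in\{\pm\infty\}\mid Y_0=y_0,X=x)<1$, and for finite $\alpha$ every factor of $p_{y_0}(y,x,\theta^0,\alpha)$ in \eqref{DefProb} is strictly positive ($\Lambda$ strictly increasing, $\lambda_{y_t-1}<\lambda_{y_t}$), hence $\Pr(Y_1>1,Y_2=1,Y_3>1\mid Y_0=y_0,X=x)>0$; together with $\Pr(X\in\mathcal{X}_s\mid Y_0=y_0)>0$ this gives that $\beta_k\mapsto G_s(\beta)$ is strictly increasing when $s_k=+$ and strictly decreasing when $s_k=-$, the other coordinates held fixed. (Finiteness of the conditional expectations is taken for granted, as the statement does implicitly.)

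The endgame is then routine. Suppose $\beta\neq\beta^0$ also satisfies \eqref{Condition:prop:IDbeta}, and define $\sigma\in\{-,+\}^K$ by $\sigma_k=+$ if $\beta_k\ge\beta^0_k$ and $\sigma_k=-$ otherwise. If $\sigma_K=+$, set $s=\sigma$: moving from $\beta^0$ to $\beta$ one coordinate at a time, each step weakly increases $G_s$ and at least one step (a coordinate with $\beta_k\neq\beta^0_k$) strictly increases it, so $G_s(\beta)>G_s(\beta^0)=0$, contradicting \eqref{Condition:prop:IDbeta}. If $\sigma_K=-$, i.e. $\beta_K<\beta^0_K$, run the same argument with $s=-\sigma$, which satisfies $s_K=+$: each coordinate step now weakly decreases $G_s$ with at least one strict decrease, so $G_s(\beta)<0$, again a contradiction. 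Hence $\beta=\beta^0$, and since $\gamma^0$ is already identified by Proposition~\ref{prop:IDgamma}, $\beta^0$ is identified.

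The main obstacle is not a computation but the sign bookkeeping in the last step --- in particular the reflection $\sigma\mapsto-\sigma$ used when $\beta_K<\beta^0_K$, which is exactly why it is enough to impose the moment conditions only for $s$ with $s_K=+$ rather than for all $2^K$ sign patterns. A secondary point demanding care is that the monotonicity must be strict for the $\{X\in\mathcal{X}_s\}$-conditional expectation, not merely pointwise in $x$ and $y$, which is where the two positivity facts drawn from Assumption~\ref{ass:ID} do their work.
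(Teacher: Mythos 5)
Your proof is correct and follows the same overall structure as the paper's: anchor the moment equations at $\beta^0$ via Theorem~\ref{th:momentsT3}, establish that $\beta_k\mapsto g_s(\beta):=\mathbb{E}\left[m_{y_0,1,1,1}(Y,X,\beta,\gamma^0)\mid Y_0=y_0,\,X\in\mathcal{X}_s\right]$ is strictly increasing when $s_k=+$ and strictly decreasing when $s_k=-$, and deduce uniqueness of the root. The one substantive difference is in the last step: the paper simply invokes Lemma~\ref{lemma:INVERSION} (cited from Honor\'e and Weidner, 2020, without proof here), whereas your coordinate-by-coordinate squeeze --- with the sign pattern $\sigma$ and the reflection $s=-\sigma$ when $\beta_K<\beta^0_K$ --- is in effect a self-contained and correct proof of that lemma, and moreover one that does not even require the continuity hypothesis the lemma states. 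Your verification of the strict monotonicity of $g_s$ (positivity of $\Pr(Y_1>1,Y_2=1,Y_3>1\mid Y_0=y_0,X=x)$ on $\mathcal{X}_s\subseteq\mathcal{X}^{\rm reg}_{y_0}$ from Assumption~\ref{ass:ID}, combined with $\Pr(X\in\mathcal{X}_s\mid Y_0=y_0)>0$ and the sign facts $(x_{21})_k>0$, $(x_{23})_k\geq 0$, $(x_{31})_k\geq 0$ on $\mathcal{X}_{k,+}$) spells out precisely what the paper asserts in one sentence; both treatments share the same implicit integrability caveat, which you rightly flag.
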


The proof is given in the appendix. 
Again, in addition
to Assumption~\ref{ass:ID}, the additional
condition in Proposition~\ref{prop:IDbeta}
simply guarantees that
the conditional expectation 
in \eqref{Condition:prop:IDbeta} is well-defined.

\subsubsection*{Identification of $\lambda$}

Having identified $\gamma$ and $\beta$, we now turn to the problem of identifying $\lambda$  up to a normalization.
The moment function in \eqref{MomentFunctionsT3b}
with $q_2=q_3=1$ and   $q_1 \in \{2,\ldots,Q-1\}$ can be written as
\begin{align}
m_{y_0,q_1,1,1}(y,x,\beta,\gamma,\lambda) =
    \left\{  \begin{array}{ll}
             \exp\left(  x_{23}' \, \beta + \gamma_{y_1,1}    \right)     -  1 & \text{if }  y_1 \leq q_1, \, y_2=1 , \, y_3>1, \\
             - 1 & \text{if }   y_1 \leq q_1, \, y_2>1,\\
                 \exp\left( x_{31}' \, \beta + \gamma_{1,y_0} + \lambda_{q_1} - \lambda_1 \right)  & \text{if }  y_1 > q_1, \, y_2=1 , \,y_3 =1,\\
                \exp\left( x_{21}' \, \beta + \gamma_{y_1,y_0} + \lambda_{q_1} - \lambda_1  \right)  & \text{if } y_1 > q_1, \, y_2=1 , \, y_3>1,\\
             0 & \text{otherwise}.
       \end{array} \right.
       \label{Moments:prop:IDlambda}
\end{align} 
The expected value of this moment function 
only depends on $\lambda$ through $\lambda_{q_1} - \lambda_1$,
and is strictly increasing in $\lambda_{q_1} - \lambda_1$.
This implies that this moment function identifies $\lambda_{q_1} - \lambda_1$ uniquely. 
By applying this argument to all $q_1  \in \{2,\ldots,Q-1\}$,
we can therefore identify $\lambda$ up to an additive constant.
This is summarized in the following proposition.
 
\begin{proposition}
     \label{prop:IDlambda}
     Let Assumption~\ref{ass:ID} hold. Let $y_0 \in  \{1,\ldots,Q\}$ be such that
     ${\rm Pr}\big(
     Y_0=y_0 \; \& \;
     X\in {\cal X}^{\rm reg}_{y_0} \big) > 0$.
     Then, if $\lambda$ satisfies
     $$
    \mathbb{E}\left[ m_{y_0,q_1,1,1}(Y,X,\beta^0,\gamma^0,\lambda) \,\Big|\,Y_{0}=y_{0} \right] = 0 
     \qquad \text{for all $q_1  \in \{2,\ldots,Q-1\}$,}
     $$
     we have $\lambda = \lambda^0 + c$ for some $c \in \mathbb{R}$.
    Thus, if we normalize $\lambda^0_1=0$,
    and    
    since $\gamma^0$ and $\beta^0$ are already identified from Proposition~\ref{prop:IDgamma} and \ref{prop:IDbeta},
    we find that $\lambda^0$ is also  uniquely identified from the data.
\end{proposition}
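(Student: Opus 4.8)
The plan is to use the special structure of the moment function in \eqref{Moments:prop:IDlambda}: with $\beta$ and $\gamma$ fixed at their true values, it depends on $\lambda$ only through the scalar $\lambda_{q_1}-\lambda_1$, and that scalar enters only multiplicatively through a factor $\exp(\lambda_{q_1}-\lambda_1)$ attached to nonnegative terms. First I would decompose, for each $q_1\in\{2,\dots,Q-1\}$,
\[
m_{y_0,q_1,1,1}(y,x,\beta^0,\gamma^0,\lambda)=g_{1,q_1}(y,x)+g_{2,q_1}(y,x)\,\exp(\lambda_{q_1}-\lambda_1),
\]
where $g_{1,q_1}$ collects the two $\lambda$-free cases ($y_1\le q_1$) in \eqref{Moments:prop:IDlambda} and
\[
g_{2,q_1}(y,x)=\exp\!\big(x_{31}'\beta^0+\gamma^0_{1,y_0}\big)\,\mathbbm{1}\{y_1>q_1,y_2=1,y_3=1\}+\exp\!\big(x_{21}'\beta^0+\gamma^0_{y_1,y_0}\big)\,\mathbbm{1}\{y_1>q_1,y_2=1,y_3>1\}\geq 0 .
\]
Taking the conditional expectation given $Y_0=y_0$ then gives $\mathbb{E}\big[m_{y_0,q_1,1,1}(Y,X,\beta^0,\gamma^0,\lambda)\mid Y_0=y_0\big]=a_{q_1}+b_{q_1}\exp(\lambda_{q_1}-\lambda_1)$, with $a_{q_1},b_{q_1}$ not depending on $\lambda$ and $b_{q_1}\ge0$ (finiteness of $a_{q_1},b_{q_1}$ being implicit in the hypothesis that the moment condition is well-defined).

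The substantive step is to show $b_{q_1}>0$. Since $g_{2,q_1}(y,x)$ is strictly positive exactly when $y_1>q_1$ and $y_2=1$, it suffices to show $\mathrm{Pr}(Y_1>q_1,\,Y_2=1\mid Y_0=y_0)>0$. This is where Assumption~\ref{ass:ID} is used: by hypothesis $\mathrm{Pr}(X\in\mathcal{X}^{\rm reg}_{y_0}\mid Y_0=y_0)>0$, and on $\mathcal{X}^{\rm reg}_{y_0}$ we have $\mathrm{Pr}(A\in\{\pm\infty\}\mid Y_0=y_0,X=x)<1$, so $A$ is finite with positive probability; for finite $A$ every factor in \eqref{DefProb} is strictly positive, whence $\mathrm{Pr}(Y=y\mid Y_0=y_0,X=x)>0$ for all $y\in\{1,\dots,Q\}^3$. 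Because $q_1\le Q-1$, the outcome pattern $y_1=Q,\ y_2=1$ lies in the event $\{Y_1>q_1,\,Y_2=1\}$, so that event has positive probability and $b_{q_1}>0$. I expect this positivity argument — essentially the same one that underlies Propositions~\ref{prop:IDgamma} and~\ref{prop:IDbeta} — to be the only genuine obstacle; the remainder is routine monotonicity.

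Given $b_{q_1}>0$, the expectation $a_{q_1}+b_{q_1}\exp(\lambda_{q_1}-\lambda_1)$ is strictly increasing in, hence an injective function of, $\lambda_{q_1}-\lambda_1$; and by Theorem~\ref{th:momentsT3} together with the law of iterated expectations it equals $0$ when $\lambda_{q_1}-\lambda_1=\lambda^0_{q_1}-\lambda^0_1$ (take $\lambda=\lambda^0$ inside the moment function). Therefore any $\lambda$ satisfying the moment conditions of the proposition must have $\lambda_{q_1}-\lambda_1=\lambda^0_{q_1}-\lambda^0_1$. Doing this for every $q_1\in\{2,\dots,Q-1\}$ pins down all the differences $\lambda_q-\lambda_1$, $q=2,\dots,Q-1$, so $\lambda=\lambda^0+c$ with $c=\lambda_1-\lambda^0_1$; for $Q=2$ the index set is empty and $\lambda$ has a single unrestricted component, so the conclusion holds trivially. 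Finally, under the normalization $\lambda^0_1=0$ (so that $\lambda_1=0$ as well) we get $c=0$, and combined with the already-established identification of $\gamma^0$ and $\beta^0$ from Propositions~\ref{prop:IDgamma} and~\ref{prop:IDbeta} this yields exact identification of $\lambda^0$.
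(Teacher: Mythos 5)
Your proof is correct and takes essentially the same route as the paper's: both arguments rest on the fact that $\mathbb{E}\left[ m_{y_0,q_1,1,1}(Y,X,\beta^0,\gamma^0,\lambda)\mid Y_0=y_0\right]$ is strictly increasing in $\lambda_{q_1}-\lambda_1$ and vanishes at the true value by Theorem~\ref{th:momentsT3}, which pins down each difference $\lambda_{q_1}-\lambda_1$. Your affine decomposition and the positivity argument for $b_{q_1}$ simply make explicit the strict-monotonicity step that the paper asserts in a single sentence.
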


The proof is given in the appendix.

Combining Proposition~\ref{prop:IDgamma}, \ref{prop:IDbeta},
and \ref{prop:IDlambda}, we find that $\theta^0$ is uniquely identified 
from the data. Under the regularity conditions
of those propositions, we can recover $\theta^0 = (\beta^0,\gamma^0,\lambda^0)$ uniquely from the distribution of $Y$ conditional on $Y_0$ and $X$.

Our identification arguments in this section are constructive.
However, they condition on special values of the regressors. In particular,
Proposition~\ref{prop:IDgamma} conditions on the event $X_1=X_2=X_3$, 
which is a zero-probability event if $X$ is continuously distributed 
(and may happen rarely even for discrete $X$). An estimator based on
the identification strategy in this section would therefore in general
be quite inefficient. Hence, in our Monte Carlo simulations and empirical
application, we construct more general GMM estimators based
on our moment conditions.

\section{Implication for estimation and specification testing\label{Section:
Implication for estimation}}

The moment conditions in Section\ \ref{sec:model} are conditional on the
initial condition $Y_{i0}$ and the strictly exogenous explanatory variables $%
X_{i}$. It is tempting to try to mimic the identification argument in
Section \ref{sec:Identification} in order to turn these moment conditions
into an estimator. The problem with such an approach is that the
conditioning set in Proposition \ref{prop:IDgamma} will often have
probability 0. Alternatively, one can form a set of unconditional moment
functions by constructing
\begin{equation*}
M(Y_{i0},Y_{i},X_{i},\beta ,\gamma ,\lambda )=g\left( Y_{i0},X_{i}\right)
\otimes m_{Y_{i0}}\left( Y_{i},X_{i},\beta ,\gamma ,\lambda \right)
\end{equation*}%
where the vector-valued function, $m_{Y_{i0}}$, is composed of linear
combinations of the moment functions in \eqref{MomentFunctionsT3}, %
\eqref{MomentFunctionsT3b}, and \eqref{MomentFunctionsT3c}, and $g$ is a
vector-valued function of the initial condition $Y_{i0}$ and the strictly
exogenous $X_{i}$. Let $\theta =\left( \beta ^{\prime },\gamma ^{\prime
},\lambda ^{\prime }\right) ^{\prime }$. A generalized method of moments
(GMM) estimator can then be defined by\footnote{%
As mentioned in Section \ref{sec:model}, it is necessary to normalize one of
the $Q$ elements of $\gamma $ and one of the $Q-1$ elements of $\lambda $.}
\begin{multline*}
\widehat{\theta }=%
\begin{pmatrix}
\widehat{\beta } \\
\widehat{\gamma } \\
\widehat{\lambda }%
\end{pmatrix}%
=\limfunc{argmin}_{\beta \in \mathbb{R}^{K},\,\gamma \in \mathbb{R}%
^{Q-1},\,\lambda \in \mathbb{R}^{Q-2}}\left(
\sum_{i=1}^{n}M(Y_{i0},Y_{i},X_{i},\beta ,\gamma ,\lambda )\right) ^{\prime }
\\
\widehat{W}_{n}\left( \sum_{i=1}^{n}M(Y_{i0},Y_{i},X_{i},\beta ,\gamma
,\lambda )\right) ,
\end{multline*}%
where the weighting matrix $\widehat{W}_{n}$ converges to a positive
definite matrix, $W_{0}$. Assuming that $\mathbb{E}\left[
M(Y_{i0},Y_{i},X_{i},\theta )\right] =0$ is \textsl{uniquely} satisfied at $%
\theta =\theta ^{0}$, and that mild regularity conditions (see %
\citealt{Hansen1982}) are satisfied, $\widehat{\theta }$ will be consistent
and asymptotically normally distributed.

One limitation of the GMM\ approach is that it is often difficult to know
whether the moment condition $\mathbb{E}\left[ M(Y_{i0},Y_{i},X_{i},\theta )%
\right] =0$ is uniquely satisfied at the true parameter value. When the
strictly exogenous explanatory variables, $X_{i}$, are discrete, sufficient
conditions for this can be obtained from the identification results in
Section~\ref{sec:Identification} by defining $g\left( Y_{i0},X_{i}\right) $
to be a vector of indicator functions for values in the support of $\left(
Y_{i0},X_{i}\right) $. If $X_{i}$ is not discrete, it may be possible to
define a root-$n$ consistent estimator by combining nonparametrically
estimated conditional moment conditions with the unconditional moment
conditions. See, for example, \cite{HonoreHu2004a} for such an approach.
Whether or not $\mathbb{E}\left[ M(Y_{i0},Y_{i},X_{i},\theta )\right] =0$ is
uniquely satisfied at the true parameter value, one can calculate valid
confidence sets for $\theta _{0}$ based on moment conditions like $\mathbb{E}%
\left[ M(Y_{i0},Y_{i},X_{i},\theta )\right] =0$. See, for example, \cite%
{ChenChristensenTamer2018}.

A second limitation of the GMM\ approach is that even if one ignores the
issue of identification, there are many ways to form a finite set of
unconditional moment conditions from the expressions in %
\eqref{MomentFunctionsT3}, \eqref{MomentFunctionsT3b}, and %
\eqref{MomentFunctionsT3c}. Moreover, the most natural ad hoc ways to do this, such as 
considering all interactions between the conditional moment and the explanatory 
variables and the initial condition, can lead to a very large number of moment conditions, 
which in turn can result in poor small sample performance. 
It is in principle known how to most efficiently
turn a set of conditional moment conditions into a set of moment conditions
of the same dimensionality as the parameter to be estimated. See, for
example, the discussion in \cite{NeweyMcFadden94:HoE}. Specifically, with a
conditional moment condition $\mathbb{E}\left[ \left. m_{Y_{0}}\left(
Y,X,\theta \right) \right\vert X,Y_{0}\right] =0$ when $\theta $ takes its
true value, $\theta _{0}$, the optimal unconditional moment function is $%
A\left( X,Y_{0}\right) m_{Y_{0}}\left( Y,X,\theta \right) $, where $A\left(
X,Y_{0}\right) =\mathbb{E}\left[ \left. \nabla _{\theta }m_{Y_{0}}\left(
Y,X,\theta _{0}\right) \right\vert X,Y_{0}\right] ^{\prime }V\left[ \left.
m_{Y_{0}}\left( Y,X,\theta _{0}\right) \right\vert X,Y_{0}\right] ^{-1}$.
Unfortunately, the construction of estimators of these efficient moments
depends heavily on the distribution of $Y$ given $\left( X,Y_{0}\right) $.
On the other hand, the moment conditions are still valid if $A\left(
X,Y_{0}\right) $ is misspecified. One approach therefore is to estimate a
flexible reduced form model for the distribution of $Y$ given $\left(
X,Y_{0}\right) $, and then use this reduced form for the distribution of $Y$
given $X$ to construct an estimate of $A\left( X,Y_{0}\right) $. In the
simulations and the empirical illustration below, we take this approach
using a correlated random effects approach to obtain the reduced form for
the distribution of $Y$ given $\left( X,Y_{0}\right) $. The dimensionality
of the moment function $A\left( X,Y_{0}\right) m_{Y_{0}}\left( Y,X,\theta
\right) $ is the same as that of the parameter vector, and the asymptotic
distribution of the estimator therefore follows from the theory of nonlinear
method of moments estimators.

The moment conditions derived in this paper can also be used for
specification testing. Suppose that a researcher has estimated the
parameters of interest, $\theta _{0}=\left( \beta _{0},\gamma _{0},\lambda
_{0}\right) $, by an estimator, $\widehat{\theta }$, that solves a moment
condition of the type $\frac{1}{n}\sum_{i=1}^{n}\psi \left( Y_{i},X_{i},%
\widehat{\theta }\right) =0$. For example, she might have estimated a model
without individual-specific heterogeneity or a model in which the
heterogeneity is captured parametrically by a random effects approach, and
she might be interested in testing her parametric assumptions against the
less parametric fixed effects model. Let $\widehat{M}=\frac{1}{n}%
\sum_{i=1}^{n}M\left( Y_{i},X_{i},\widehat{\theta }\right) $ where $M$ is
defined as above. $\widehat{M}$ is then a standard two-step estimator, and
it is straightforward to test whether $\widehat{M}$ is statistically
different from 0.

\section{Practical performance of a method of moments estimator}
\label{PracticalPerformance}

In the next subsection, we present the results from a small Monte Carlo
experiment designed to illustrate the performance of the method of moments
estimator based on the discussion in Section \ref{Section: Implication for
estimation}, and we compare the performance of the estimator to its
asymptotic distribution as well as to a correlated random effects estimator.
We then illustrate the use of the method of moments estimator in an
empirical example.  

\subsection{Monte Carlo illustration\label{Section: Monte Carlo Results}}

We illustrate the performance of the GMM\ estimator described above through
a Monte Carlo study that considers three data generating processes,
two with a fixed effect and one without a fixed effect. The two data
generating processes that include a fixed effect are chosen such that one
satisfies the assumptions underpinning the correlated random effects
estimator proposed by \cite{wooldridge2005}, while the other does not. We consider
sample sizes of $N=500$, $1000$, and $2000$ with five time periods for each
individual. This includes the initial observations, so $T=4$ using the
notation above. There are $k=3$ explanatory variables and the dependent
variable can take $Q=4$ values. The true parameters are $\beta =\left(
1,0,0\right) ^{\prime },$ $\gamma =\left( -1,0,0,1\right) ^{\prime }$ and $%
\lambda =\left( -2,0,2\right) ^{\prime }$ and we normalize $\gamma _{2}=\lambda
_{2}=0$.

The explanatory variables are drawn as follows. First, let $\tilde{A}_{i}$
be a discrete random variable with $E\left[ \tilde{A}_{i}\right] =0$ and $V%
\left[ \tilde{A}_{i}\right] =3$. The exact distribution of $\tilde{A}_{i}$
differs across specifications.\ Secondly, let $Z_{ijt}$ ($j=1,...,k$, $%
t=0,...,4$) be independent normal random variables with mean 0 and variance
3, and let the first explanatory variable be $X_{i1t}=\left( Z_{i1t}+\tilde{A%
}_{i}\right) /\sqrt{2}$. The second through $k$'th explanatory variables are
given by $X_{ijt}=\left. \left( Z_{ijt}+X_{i1t}\right) \right/ \sqrt{2}$.
This implies that all the explanatory variables and $%
X_{it}^{\prime }\beta $ have mean 0 and variance 3. This is
comparable to the magnitude of the logistic distribution, which has mean 0
and variance $\left. \pi ^{2}\right/ 3$. \ 

For the two data generating
processes with a fixed effect, one (Design B) has $\tilde{A}_{i}$ normally
distributed while the other\ (Design C) has $P\left( \tilde{A}_{i}=\sqrt{6}%
\right) =\frac{1}{3}$ and $P\left( \tilde{A}_{i}=-\left. \sqrt{6}\right/
2\right) =\frac{2}{3}$. For both of these specifications, the fixed effect, $%
A_{i}$, equals $\tilde{A}_{i}$. The data generating process without fixed
effects (Design A) has the same distribution of $\tilde{A}_{i}$ as Design C,
but here $A_{i}=0$. 
For Design C, the initial dependent variable, $Y_{i0}$
is generated from the ordered logit model where the only explanatory
variable is $A_{i}$. For Designs A and B, $Y_{i0}$ is generated from the
ordered logit model without explanatory variables. From a fixed effects
perspective, this makes Design B\ a little special, but it makes it fit the
assumptions for the correlated random effects approach. Design A\ is without
fixed effects, so it also satisfies the assumptions for the correlated
random effects approach, but in this case the true parameter value of one of
the parameters (the variance of the error in the specification of the fixed
effect) is at the boundary of the parameter space, which could render standard inference problematic.

We perform 400 Monte Carlo replications. The results are presented in Tables %
\ref{Design A}, \ref{Design B}\ and \ref{Design C}. For comparison, we also
include the results for the correlated random effects estimator (cf. %
\citealt{wooldridge2005}) that specifies the distribution of the unobserved
heterogeneity as $A=\sum_{t=0}^{4}X_{t}^{\prime }\theta \left( t\right)
+\sum_{q=1}^{4}1\left\{ Y_{0}=q\right\} \theta \left( q\right) +\sigma Z$,
where $Z\sim N\left( 0,1\right) $. Design C violates the implicit assumption behind the correlated random effects
approach. Most importantly, the distribution of $A_{i}$ is discrete. On the
other hand, the relationship between the explanatory variables and $A_{i}$
is linear, so the violation is not extreme.

For each design and for each sample size, Tables \ref{Design A}, \ref{Design
B}\ and \ref{Design C} report the true values of the parameters, the median
bias of the method of moments estimator and the correlated random effects
estimator, the interquartile range of the estimators, and the median
absolute errors of the estimators. For each parameter, we also report the
ratio of the median absolute error of the correlated random effects
estimator relative to the method of moment estimator. Values of this ratio
greater than one suggest that the method of moments estimator is more
precise than the correlated random effects estimator. In Designs A and B,
the correlated random effects estimator is the correctly specified maximum
likelihood estimator. It is therefore not surprising that the median
absolute error ratio is less than one for all parameters and all sample
sized in this case. On the other hand, the ratio is above 0.65 in all cases,
suggesting that the loss of efficiency from the method of moments estimator
is not too large for this design. For these designs, both
estimators appear to be close to median unbiased, and the relative
performance of the estimators is driven by the difference in their
variability. For Design C with non-normal heterogeneity, the relative
performance of the two estimators is different for the different parameters.
The correlated random effects estimator is always less variable in terms of
interquartile range, but the biases in the estimates of the $\gamma $'s and $%
\delta $'s are large enough that the method of moments estimator tends to be
more precise when the sample size is large. For the coefficients on the
explanatory variables, $\beta $, the correlated random effects estimator is almost unbiased in Design C despite the
misspecification of the model for the fixed effect. This makes sense, because the specified model for the unobserved
heterogeneity will tend to control for any linear dependence between
explanatory variables and the level of the fixed effect. The specific results
for each of the $\gamma $'s and for each of the $\lambda $'s should be
considered with some care since the calculations are done under the specific
normalization that $\gamma _{2}=0$ and $\lambda _{2}=0$. With different
normalizations, the pattern of the results would have been different.
However, it is clear from Table \ref{Design C} that in the design with
heterogeneity, the misspecification embedded in the correlated random
effects approach generally speaking leads to biased estimates of the $\gamma
$'s and the $\lambda $, and that these biases will make the method of
moments estimator more precise for large sample sizes.

\begin{table}[tbp]
\caption{Design A}
\label{Design A}
\begin{center}
{\footnotesize \begin{tabular}{lrrrrrrrrrr} 
\multicolumn{11}{c}{Correlated Random Effects.  Sample Size: 500. (nrep = 400)} \\ 
& \multicolumn{1}{c}{$\beta _{1}$} & \multicolumn{1}{c}{$\beta _{2}$} &
  \multicolumn{1}{c}{$\beta _{3}$} & \multicolumn{1}{c}{$\gamma _{1}$} &
  \multicolumn{1}{c}{$\gamma _{2}$} & \multicolumn{1}{c}{$\gamma _{3}$} &
  \multicolumn{1}{c}{$\gamma _{4}$} & \multicolumn{1}{c}{$\lambda _{1}$} &
  \multicolumn{1}{c}{$\lambda _{2}$} & \multicolumn{1}{c}{$\lambda _{3}$} \\ 
True   &   1.000 &   0.000 &   0.000 &  -1.000 &   0.000 &   0.000 &   1.000 &  -2.000 &   0.000 &   2.000  \\
Bias   &   0.008 &   0.004 &   0.001 &   0.014 &   0.000 &  -0.036 &  -0.027 &  -0.027 &   0.000 &   0.024  \\
IRQ    &   0.096 &   0.057 &   0.055 &   0.181 &   0.000 &   0.174 &   0.201 &   0.097 &   0.000 &   0.108  \\
MAE    &   0.046 &   0.028 &   0.028 &   0.097 &   0.000 &   0.092 &   0.104 &   0.051 &   0.000 &   0.060  \\
 \\\multicolumn{11}{c}{Method of Moments.  Sample Size: 500. (nrep = 400)} \\ 
& \multicolumn{1}{c}{$\beta _{1}$} & \multicolumn{1}{c}{$\beta _{2}$} &
  \multicolumn{1}{c}{$\beta _{3}$} & \multicolumn{1}{c}{$\gamma _{1}$} &
  \multicolumn{1}{c}{$\gamma _{2}$} & \multicolumn{1}{c}{$\gamma _{3}$} &
  \multicolumn{1}{c}{$\gamma _{4}$} & \multicolumn{1}{c}{$\lambda _{1}$} &
  \multicolumn{1}{c}{$\lambda _{2}$} & \multicolumn{1}{c}{$\lambda _{3}$} \\ 
True   &   1.000 &   0.000 &   0.000 &  -1.000 &   0.000 &   0.000 &   1.000 &  -2.000 &   0.000 &   2.000  \\
Bias    &  -0.001 &   0.004 &  -0.002 &   0.019 &   0.000 &  -0.018 &  -0.010 &  -0.003 &   0.000 &   0.009  \\
IQR     &   0.112 &   0.068 &   0.066 &   0.249 &   0.000 &   0.232 &   0.252 &   0.135 &   0.000 &   0.144  \\
MAE     &   0.056 &   0.034 &   0.033 &   0.124 &   0.000 &   0.117 &   0.123 &   0.067 &   0.000 &   0.072  \\
 \\
MAE ratio &   0.832 &   0.833 &   0.830 &   0.780 & \multicolumn{1}{c}{---} &   0.787 &   0.845 &   0.760 & \multicolumn{1}{c}{---} &   0.835  \\
 \\\multicolumn{11}{c}{Correlated Random Effects.  Sample Size: 1000. (nrep = 400)} \\ 
& \multicolumn{1}{c}{$\beta _{1}$} & \multicolumn{1}{c}{$\beta _{2}$} &
  \multicolumn{1}{c}{$\beta _{3}$} & \multicolumn{1}{c}{$\gamma _{1}$} &
  \multicolumn{1}{c}{$\gamma _{2}$} & \multicolumn{1}{c}{$\gamma _{3}$} &
  \multicolumn{1}{c}{$\gamma _{4}$} & \multicolumn{1}{c}{$\lambda _{1}$} &
  \multicolumn{1}{c}{$\lambda _{2}$} & \multicolumn{1}{c}{$\lambda _{3}$} \\ 
True   &   1.000 &   0.000 &   0.000 &  -1.000 &   0.000 &   0.000 &   1.000 &  -2.000 &   0.000 &   2.000  \\
Bias   &   0.006 &   0.000 &   0.002 &   0.010 &   0.000 &  -0.014 &  -0.023 &  -0.013 &   0.000 &   0.013  \\
IRQ    &   0.066 &   0.038 &   0.039 &   0.118 &   0.000 &   0.112 &   0.139 &   0.082 &   0.000 &   0.082  \\
MAE    &   0.032 &   0.019 &   0.020 &   0.057 &   0.000 &   0.060 &   0.072 &   0.038 &   0.000 &   0.041  \\
 \\\multicolumn{11}{c}{Method of Moments.  Sample Size: 1000. (nrep = 400)} \\ 
& \multicolumn{1}{c}{$\beta _{1}$} & \multicolumn{1}{c}{$\beta _{2}$} &
  \multicolumn{1}{c}{$\beta _{3}$} & \multicolumn{1}{c}{$\gamma _{1}$} &
  \multicolumn{1}{c}{$\gamma _{2}$} & \multicolumn{1}{c}{$\gamma _{3}$} &
  \multicolumn{1}{c}{$\gamma _{4}$} & \multicolumn{1}{c}{$\lambda _{1}$} &
  \multicolumn{1}{c}{$\lambda _{2}$} & \multicolumn{1}{c}{$\lambda _{3}$} \\ 
True   &   1.000 &   0.000 &   0.000 &  -1.000 &   0.000 &   0.000 &   1.000 &  -2.000 &   0.000 &   2.000  \\
Bias    &  -0.001 &  -0.001 &   0.002 &   0.005 &   0.000 &  -0.009 &  -0.005 &   0.008 &   0.000 &   0.005  \\
IQR     &   0.076 &   0.046 &   0.049 &   0.155 &   0.000 &   0.159 &   0.195 &   0.107 &   0.000 &   0.107  \\
MAE     &   0.037 &   0.023 &   0.023 &   0.077 &   0.000 &   0.081 &   0.097 &   0.055 &   0.000 &   0.053  \\
 \\
MAE ratio &   0.876 &   0.814 &   0.859 &   0.737 & \multicolumn{1}{c}{---} &   0.736 &   0.741 &   0.701 & \multicolumn{1}{c}{---} &   0.775  \\
 \\\multicolumn{11}{c}{Correlated Random Effects.  Sample Size: 2000. (nrep = 400)} \\ 
& \multicolumn{1}{c}{$\beta _{1}$} & \multicolumn{1}{c}{$\beta _{2}$} &
  \multicolumn{1}{c}{$\beta _{3}$} & \multicolumn{1}{c}{$\gamma _{1}$} &
  \multicolumn{1}{c}{$\gamma _{2}$} & \multicolumn{1}{c}{$\gamma _{3}$} &
  \multicolumn{1}{c}{$\gamma _{4}$} & \multicolumn{1}{c}{$\lambda _{1}$} &
  \multicolumn{1}{c}{$\lambda _{2}$} & \multicolumn{1}{c}{$\lambda _{3}$} \\ 
True   &   1.000 &   0.000 &   0.000 &  -1.000 &   0.000 &   0.000 &   1.000 &  -2.000 &   0.000 &   2.000  \\
Bias   &   0.002 &  -0.000 &  -0.001 &   0.013 &   0.000 &  -0.010 &  -0.011 &  -0.010 &   0.000 &   0.012  \\
IRQ    &   0.046 &   0.032 &   0.025 &   0.100 &   0.000 &   0.084 &   0.098 &   0.049 &   0.000 &   0.059  \\
MAE    &   0.022 &   0.016 &   0.013 &   0.052 &   0.000 &   0.041 &   0.051 &   0.025 &   0.000 &   0.032  \\
 \\\multicolumn{11}{c}{Method of Moments.  Sample Size: 2000. (nrep = 400)} \\ 
& \multicolumn{1}{c}{$\beta _{1}$} & \multicolumn{1}{c}{$\beta _{2}$} &
  \multicolumn{1}{c}{$\beta _{3}$} & \multicolumn{1}{c}{$\gamma _{1}$} &
  \multicolumn{1}{c}{$\gamma _{2}$} & \multicolumn{1}{c}{$\gamma _{3}$} &
  \multicolumn{1}{c}{$\gamma _{4}$} & \multicolumn{1}{c}{$\lambda _{1}$} &
  \multicolumn{1}{c}{$\lambda _{2}$} & \multicolumn{1}{c}{$\lambda _{3}$} \\ 
True   &   1.000 &   0.000 &   0.000 &  -1.000 &   0.000 &   0.000 &   1.000 &  -2.000 &   0.000 &   2.000  \\
Bias    &  -0.001 &   0.001 &  -0.002 &   0.005 &   0.000 &  -0.001 &   0.008 &  -0.004 &   0.000 &   0.001  \\
IQR     &   0.046 &   0.033 &   0.030 &   0.123 &   0.000 &   0.101 &   0.129 &   0.067 &   0.000 &   0.072  \\
MAE     &   0.023 &   0.016 &   0.015 &   0.061 &   0.000 &   0.050 &   0.066 &   0.033 &   0.000 &   0.036  \\
 \\
MAE ratio &   0.955 &   0.979 &   0.842 &   0.839 & \multicolumn{1}{c}{---} &   0.810 &   0.775 &   0.768 & \multicolumn{1}{c}{---} &   0.898  \\
 \\\end{tabular}
 }
\par
{\footnotesize MAE is the median absolute error, IQR is the interquartile
range. $\gamma_2$ and $\lambda_2$ are normalized.}
\end{center}
\end{table}

\begin{table}[tbp]
\caption{Design B}
\label{Design B}
\begin{center}
{\footnotesize \begin{tabular}{lrrrrrrrrrr} 
\multicolumn{11}{c}{Correlated Random Effects.  Sample Size: 500. (nrep = 400)} \\ 
& \multicolumn{1}{c}{$\beta _{1}$} & \multicolumn{1}{c}{$\beta _{2}$} &
  \multicolumn{1}{c}{$\beta _{3}$} & \multicolumn{1}{c}{$\gamma _{1}$} &
  \multicolumn{1}{c}{$\gamma _{2}$} & \multicolumn{1}{c}{$\gamma _{3}$} &
  \multicolumn{1}{c}{$\gamma _{4}$} & \multicolumn{1}{c}{$\lambda _{1}$} &
  \multicolumn{1}{c}{$\lambda _{2}$} & \multicolumn{1}{c}{$\lambda _{3}$} \\ 
True   &   1.000 &   0.000 &   0.000 &  -1.000 &   0.000 &   0.000 &   1.000 &  -2.000 &   0.000 &   2.000  \\
Bias   &   0.003 &  -0.003 &  -0.003 &  -0.007 &   0.000 &  -0.010 &   0.007 &  -0.007 &   0.000 &   0.009  \\
IRQ    &   0.102 &   0.057 &   0.064 &   0.224 &   0.000 &   0.203 &   0.262 &   0.144 &   0.000 &   0.131  \\
MAE    &   0.051 &   0.028 &   0.032 &   0.112 &   0.000 &   0.100 &   0.130 &   0.073 &   0.000 &   0.062  \\
 \\\multicolumn{11}{c}{Method of Moments.  Sample Size: 500. (nrep = 400)} \\ 
& \multicolumn{1}{c}{$\beta _{1}$} & \multicolumn{1}{c}{$\beta _{2}$} &
  \multicolumn{1}{c}{$\beta _{3}$} & \multicolumn{1}{c}{$\gamma _{1}$} &
  \multicolumn{1}{c}{$\gamma _{2}$} & \multicolumn{1}{c}{$\gamma _{3}$} &
  \multicolumn{1}{c}{$\gamma _{4}$} & \multicolumn{1}{c}{$\lambda _{1}$} &
  \multicolumn{1}{c}{$\lambda _{2}$} & \multicolumn{1}{c}{$\lambda _{3}$} \\ 
True   &   1.000 &   0.000 &   0.000 &  -1.000 &   0.000 &   0.000 &   1.000 &  -2.000 &   0.000 &   2.000  \\
Bias    &  -0.001 &  -0.003 &  -0.004 &   0.010 &   0.000 &  -0.007 &  -0.006 &  -0.000 &   0.000 &   0.002  \\
IQR     &   0.118 &   0.078 &   0.084 &   0.323 &   0.000 &   0.300 &   0.323 &   0.187 &   0.000 &   0.172  \\
MAE     &   0.059 &   0.039 &   0.043 &   0.163 &   0.000 &   0.154 &   0.162 &   0.096 &   0.000 &   0.087  \\
 \\
MAE ratio &   0.867 &   0.718 &   0.742 &   0.688 & \multicolumn{1}{c}{---} &   0.650 &   0.802 &   0.754 & \multicolumn{1}{c}{---} &   0.715  \\
 \\\multicolumn{11}{c}{Correlated Random Effects.  Sample Size: 1000. (nrep = 400)} \\ 
& \multicolumn{1}{c}{$\beta _{1}$} & \multicolumn{1}{c}{$\beta _{2}$} &
  \multicolumn{1}{c}{$\beta _{3}$} & \multicolumn{1}{c}{$\gamma _{1}$} &
  \multicolumn{1}{c}{$\gamma _{2}$} & \multicolumn{1}{c}{$\gamma _{3}$} &
  \multicolumn{1}{c}{$\gamma _{4}$} & \multicolumn{1}{c}{$\lambda _{1}$} &
  \multicolumn{1}{c}{$\lambda _{2}$} & \multicolumn{1}{c}{$\lambda _{3}$} \\ 
True   &   1.000 &   0.000 &   0.000 &  -1.000 &   0.000 &   0.000 &   1.000 &  -2.000 &   0.000 &   2.000  \\
Bias   &  -0.001 &  -0.000 &  -0.000 &  -0.008 &   0.000 &   0.006 &   0.006 &  -0.005 &   0.000 &   0.000  \\
IRQ    &   0.073 &   0.047 &   0.048 &   0.173 &   0.000 &   0.141 &   0.183 &   0.106 &   0.000 &   0.098  \\
MAE    &   0.037 &   0.023 &   0.023 &   0.086 &   0.000 &   0.070 &   0.089 &   0.054 &   0.000 &   0.049  \\
 \\\multicolumn{11}{c}{Method of Moments.  Sample Size: 1000. (nrep = 400)} \\ 
& \multicolumn{1}{c}{$\beta _{1}$} & \multicolumn{1}{c}{$\beta _{2}$} &
  \multicolumn{1}{c}{$\beta _{3}$} & \multicolumn{1}{c}{$\gamma _{1}$} &
  \multicolumn{1}{c}{$\gamma _{2}$} & \multicolumn{1}{c}{$\gamma _{3}$} &
  \multicolumn{1}{c}{$\gamma _{4}$} & \multicolumn{1}{c}{$\lambda _{1}$} &
  \multicolumn{1}{c}{$\lambda _{2}$} & \multicolumn{1}{c}{$\lambda _{3}$} \\ 
True   &   1.000 &   0.000 &   0.000 &  -1.000 &   0.000 &   0.000 &   1.000 &  -2.000 &   0.000 &   2.000  \\
Bias    &   0.001 &  -0.002 &   0.001 &   0.004 &   0.000 &  -0.007 &   0.006 &  -0.002 &   0.000 &   0.004  \\
IQR     &   0.086 &   0.060 &   0.055 &   0.205 &   0.000 &   0.196 &   0.231 &   0.126 &   0.000 &   0.128  \\
MAE     &   0.043 &   0.030 &   0.028 &   0.104 &   0.000 &   0.098 &   0.119 &   0.062 &   0.000 &   0.063  \\
 \\
MAE ratio &   0.867 &   0.782 &   0.852 &   0.829 & \multicolumn{1}{c}{---} &   0.720 &   0.751 &   0.883 & \multicolumn{1}{c}{---} &   0.775  \\
 \\\multicolumn{11}{c}{Correlated Random Effects.  Sample Size: 2000. (nrep = 400)} \\ 
& \multicolumn{1}{c}{$\beta _{1}$} & \multicolumn{1}{c}{$\beta _{2}$} &
  \multicolumn{1}{c}{$\beta _{3}$} & \multicolumn{1}{c}{$\gamma _{1}$} &
  \multicolumn{1}{c}{$\gamma _{2}$} & \multicolumn{1}{c}{$\gamma _{3}$} &
  \multicolumn{1}{c}{$\gamma _{4}$} & \multicolumn{1}{c}{$\lambda _{1}$} &
  \multicolumn{1}{c}{$\lambda _{2}$} & \multicolumn{1}{c}{$\lambda _{3}$} \\ 
True   &   1.000 &   0.000 &   0.000 &  -1.000 &   0.000 &   0.000 &   1.000 &  -2.000 &   0.000 &   2.000  \\
Bias   &   0.004 &  -0.000 &   0.000 &   0.010 &   0.000 &   0.011 &   0.005 &  -0.004 &   0.000 &  -0.001  \\
IRQ    &   0.049 &   0.034 &   0.032 &   0.121 &   0.000 &   0.096 &   0.119 &   0.066 &   0.000 &   0.065  \\
MAE    &   0.024 &   0.017 &   0.016 &   0.059 &   0.000 &   0.049 &   0.062 &   0.034 &   0.000 &   0.033  \\
 \\\multicolumn{11}{c}{Method of Moments.  Sample Size: 2000. (nrep = 400)} \\ 
& \multicolumn{1}{c}{$\beta _{1}$} & \multicolumn{1}{c}{$\beta _{2}$} &
  \multicolumn{1}{c}{$\beta _{3}$} & \multicolumn{1}{c}{$\gamma _{1}$} &
  \multicolumn{1}{c}{$\gamma _{2}$} & \multicolumn{1}{c}{$\gamma _{3}$} &
  \multicolumn{1}{c}{$\gamma _{4}$} & \multicolumn{1}{c}{$\lambda _{1}$} &
  \multicolumn{1}{c}{$\lambda _{2}$} & \multicolumn{1}{c}{$\lambda _{3}$} \\ 
True   &   1.000 &   0.000 &   0.000 &  -1.000 &   0.000 &   0.000 &   1.000 &  -2.000 &   0.000 &   2.000  \\
Bias    &   0.003 &  -0.002 &   0.001 &   0.016 &   0.000 &   0.008 &   0.004 &  -0.002 &   0.000 &   0.001  \\
IQR     &   0.058 &   0.040 &   0.042 &   0.145 &   0.000 &   0.122 &   0.156 &   0.084 &   0.000 &   0.077  \\
MAE     &   0.029 &   0.020 &   0.021 &   0.073 &   0.000 &   0.061 &   0.078 &   0.042 &   0.000 &   0.039  \\
 \\
MAE ratio &   0.853 &   0.847 &   0.753 &   0.807 & \multicolumn{1}{c}{---} &   0.799 &   0.793 &   0.802 & \multicolumn{1}{c}{---} &   0.854  \\
 \\\end{tabular}
 }
\par
{\footnotesize MAE is the median absolute error, IQR is the interquartile
range. $\gamma_2$ and $\lambda_2$ are normalized.}
\end{center}
\end{table}

\begin{table}[tbp]
\caption{Design C}
\label{Design C}
\begin{center}
{\footnotesize \begin{tabular}{lrrrrrrrrrr} 
\multicolumn{11}{c}{Correlated Random Effects.  Sample Size: 500. (nrep = 400)} \\ 
& \multicolumn{1}{c}{$\beta _{1}$} & \multicolumn{1}{c}{$\beta _{2}$} &
  \multicolumn{1}{c}{$\beta _{3}$} & \multicolumn{1}{c}{$\gamma _{1}$} &
  \multicolumn{1}{c}{$\gamma _{2}$} & \multicolumn{1}{c}{$\gamma _{3}$} &
  \multicolumn{1}{c}{$\gamma _{4}$} & \multicolumn{1}{c}{$\lambda _{1}$} &
  \multicolumn{1}{c}{$\lambda _{2}$} & \multicolumn{1}{c}{$\lambda _{3}$} \\ 
True   &   1.000 &   0.000 &   0.000 &  -1.000 &   0.000 &   0.000 &   1.000 &  -2.000 &   0.000 &   2.000  \\
Bias   &   0.006 &   0.007 &   0.003 &   0.168 &   0.000 &   0.127 &   0.674 &  -0.049 &   0.000 &  -0.308  \\
IRQ    &   0.108 &   0.073 &   0.070 &   0.216 &   0.000 &   0.260 &   0.365 &   0.156 &   0.000 &   0.173  \\
MAE    &   0.054 &   0.037 &   0.037 &   0.170 &   0.000 &   0.156 &   0.674 &   0.081 &   0.000 &   0.308  \\
 \\\multicolumn{11}{c}{Method of Moments.  Sample Size: 500. (nrep = 400)} \\ 
& \multicolumn{1}{c}{$\beta _{1}$} & \multicolumn{1}{c}{$\beta _{2}$} &
  \multicolumn{1}{c}{$\beta _{3}$} & \multicolumn{1}{c}{$\gamma _{1}$} &
  \multicolumn{1}{c}{$\gamma _{2}$} & \multicolumn{1}{c}{$\gamma _{3}$} &
  \multicolumn{1}{c}{$\gamma _{4}$} & \multicolumn{1}{c}{$\lambda _{1}$} &
  \multicolumn{1}{c}{$\lambda _{2}$} & \multicolumn{1}{c}{$\lambda _{3}$} \\ 
True   &   1.000 &   0.000 &   0.000 &  -1.000 &   0.000 &   0.000 &   1.000 &  -2.000 &   0.000 &   2.000  \\
Bias    &  -0.018 &   0.005 &   0.009 &   0.074 &   0.000 &   0.007 &   0.303 &   0.026 &   0.000 &  -0.146  \\
IQR     &   0.185 &   0.112 &   0.106 &   0.327 &   0.000 &   0.479 &   0.702 &   0.271 &   0.000 &   0.376  \\
MAE     &   0.097 &   0.058 &   0.054 &   0.173 &   0.000 &   0.232 &   0.445 &   0.139 &   0.000 &   0.227  \\
 \\
MAE ratio &   0.556 &   0.638 &   0.677 &   0.983 & \multicolumn{1}{c}{---} &   0.671 &   1.515 &   0.586 & \multicolumn{1}{c}{---} &   1.355  \\
 \\\multicolumn{11}{c}{Correlated Random Effects.  Sample Size: 1000. (nrep = 400)} \\ 
& \multicolumn{1}{c}{$\beta _{1}$} & \multicolumn{1}{c}{$\beta _{2}$} &
  \multicolumn{1}{c}{$\beta _{3}$} & \multicolumn{1}{c}{$\gamma _{1}$} &
  \multicolumn{1}{c}{$\gamma _{2}$} & \multicolumn{1}{c}{$\gamma _{3}$} &
  \multicolumn{1}{c}{$\gamma _{4}$} & \multicolumn{1}{c}{$\lambda _{1}$} &
  \multicolumn{1}{c}{$\lambda _{2}$} & \multicolumn{1}{c}{$\lambda _{3}$} \\ 
True   &   1.000 &   0.000 &   0.000 &  -1.000 &   0.000 &   0.000 &   1.000 &  -2.000 &   0.000 &   2.000  \\
Bias   &   0.004 &  -0.004 &  -0.000 &   0.189 &   0.000 &   0.137 &   0.691 &  -0.045 &   0.000 &  -0.308  \\
IRQ    &   0.081 &   0.052 &   0.055 &   0.151 &   0.000 &   0.192 &   0.272 &   0.098 &   0.000 &   0.110  \\
MAE    &   0.040 &   0.026 &   0.028 &   0.189 &   0.000 &   0.156 &   0.691 &   0.058 &   0.000 &   0.308  \\
 \\\multicolumn{11}{c}{Method of Moments.  Sample Size: 1000. (nrep = 400)} \\ 
& \multicolumn{1}{c}{$\beta _{1}$} & \multicolumn{1}{c}{$\beta _{2}$} &
  \multicolumn{1}{c}{$\beta _{3}$} & \multicolumn{1}{c}{$\gamma _{1}$} &
  \multicolumn{1}{c}{$\gamma _{2}$} & \multicolumn{1}{c}{$\gamma _{3}$} &
  \multicolumn{1}{c}{$\gamma _{4}$} & \multicolumn{1}{c}{$\lambda _{1}$} &
  \multicolumn{1}{c}{$\lambda _{2}$} & \multicolumn{1}{c}{$\lambda _{3}$} \\ 
True   &   1.000 &   0.000 &   0.000 &  -1.000 &   0.000 &   0.000 &   1.000 &  -2.000 &   0.000 &   2.000  \\
Bias    &  -0.013 &  -0.006 &  -0.000 &   0.017 &   0.000 &   0.035 &   0.202 &   0.025 &   0.000 &  -0.075  \\
IQR     &   0.144 &   0.084 &   0.092 &   0.290 &   0.000 &   0.457 &   0.638 &   0.176 &   0.000 &   0.322  \\
MAE     &   0.070 &   0.044 &   0.045 &   0.148 &   0.000 &   0.227 &   0.354 &   0.098 &   0.000 &   0.169  \\
 \\
MAE ratio &   0.574 &   0.594 &   0.626 &   1.277 & \multicolumn{1}{c}{---} &   0.687 &   1.950 &   0.597 & \multicolumn{1}{c}{---} &   1.822  \\
 \\\multicolumn{11}{c}{Correlated Random Effects.  Sample Size: 2000. (nrep = 400)} \\ 
& \multicolumn{1}{c}{$\beta _{1}$} & \multicolumn{1}{c}{$\beta _{2}$} &
  \multicolumn{1}{c}{$\beta _{3}$} & \multicolumn{1}{c}{$\gamma _{1}$} &
  \multicolumn{1}{c}{$\gamma _{2}$} & \multicolumn{1}{c}{$\gamma _{3}$} &
  \multicolumn{1}{c}{$\gamma _{4}$} & \multicolumn{1}{c}{$\lambda _{1}$} &
  \multicolumn{1}{c}{$\lambda _{2}$} & \multicolumn{1}{c}{$\lambda _{3}$} \\ 
True   &   1.000 &   0.000 &   0.000 &  -1.000 &   0.000 &   0.000 &   1.000 &  -2.000 &   0.000 &   2.000  \\
Bias   &   0.007 &  -0.003 &  -0.002 &   0.179 &   0.000 &   0.133 &   0.712 &  -0.043 &   0.000 &  -0.315  \\
IRQ    &   0.052 &   0.037 &   0.034 &   0.099 &   0.000 &   0.125 &   0.171 &   0.080 &   0.000 &   0.079  \\
MAE    &   0.025 &   0.019 &   0.017 &   0.179 &   0.000 &   0.133 &   0.712 &   0.051 &   0.000 &   0.315  \\
 \\\multicolumn{11}{c}{Method of Moments.  Sample Size: 2000. (nrep = 400)} \\ 
& \multicolumn{1}{c}{$\beta _{1}$} & \multicolumn{1}{c}{$\beta _{2}$} &
  \multicolumn{1}{c}{$\beta _{3}$} & \multicolumn{1}{c}{$\gamma _{1}$} &
  \multicolumn{1}{c}{$\gamma _{2}$} & \multicolumn{1}{c}{$\gamma _{3}$} &
  \multicolumn{1}{c}{$\gamma _{4}$} & \multicolumn{1}{c}{$\lambda _{1}$} &
  \multicolumn{1}{c}{$\lambda _{2}$} & \multicolumn{1}{c}{$\lambda _{3}$} \\ 
True   &   1.000 &   0.000 &   0.000 &  -1.000 &   0.000 &   0.000 &   1.000 &  -2.000 &   0.000 &   2.000  \\
Bias    &  -0.003 &  -0.003 &  -0.008 &   0.006 &   0.000 &   0.026 &   0.090 &   0.024 &   0.000 &  -0.028  \\
IQR     &   0.107 &   0.066 &   0.058 &   0.200 &   0.000 &   0.336 &   0.559 &   0.143 &   0.000 &   0.226  \\
MAE     &   0.052 &   0.033 &   0.030 &   0.101 &   0.000 &   0.163 &   0.276 &   0.075 &   0.000 &   0.122  \\
 \\
MAE ratio &   0.483 &   0.561 &   0.560 &   1.781 & \multicolumn{1}{c}{---} &   0.819 &   2.577 &   0.686 & \multicolumn{1}{c}{---} &   2.589  \\
 \\\end{tabular}
 }
\par
{\footnotesize MAE is the median absolute error, IQR is the interquartile
range. $\gamma_2$ and $\lambda_2$ are normalized.}
\end{center}
\end{table}

Tables \ref{Design A CI}-\ref{Design C CI} illustrate that the smaller bias of the method of
moments estimator can make it more reliable for inference than the
correlated random effects estimator when the assumptions underpinning the
correlated random effects estimator are violated. Specifically, the table
presents the fraction of times that 80, 90 and 95 percent confidence
intervals based on the correlated random effects estimator and on the method
of moments estimator cover the true unknown parameter. As one would expect
from the results in Table \ref{Design C}, the bias in the correlated random
effects estimator combined with its low variability can make it unlikely
that a confidence interval covers the true parameter value. Tables \ref%
{Design A CI} and \ref{Design B CI} show the same results for Design A and
Design B. In these case, the confidence intervals based on the 
correlated random effects estimator do very well. For Design B, this is not
surprising as this is a correctly specified maximum likelihood setting. It is
interesting that the correlated random effects estimator also does well in
Design A. Although it is the maximum likelihood estimator of a correctly
specified model, the estimation is made non-standard by the fact that the
true value of one of the parameters (the variance of the error in the
specification for the random effect) is on the boundary of the parameter
space and asymptotic normality would therefore not follow from textbook
asymptotic theory. As a general statement, Tables \ref{Design A CI}, \ref%
{Design B CI} and \ref{Design C CI} also illustrate that the confidence
interval based on the method of moments estimator can be somewhat erratic even
with relatively large sample sizes.\footnote{%
On the other hand, in simulations not reported here, we have found that the
natural estimator of the variance of the method of moments estimator can
perform poorly. The results reported here therefore use bootstrap
standard errors based on the interquartile range of 1000 bootstrap
replications. The standard errors reported for the correlated random effects
are based on the \textquotedblleft robust\textquotedblright\ expression for
the asymptotic variance of extremum estimators.}

\begin{table}[tbp]
\caption{Design A}
\label{Design A CI}
\begin{center}
{\footnotesize \begin{tabular}{lrrrrrrrrrr} 
\multicolumn{11}{c}{Correlated Random Effects.  Sample Size: 500. (nrep = 400)} \\ 
& \multicolumn{1}{c}{$\beta _{1}$} & \multicolumn{1}{c}{$\beta _{2}$} &
  \multicolumn{1}{c}{$\beta _{3}$} & \multicolumn{1}{c}{$\gamma _{1}$} &
  \multicolumn{1}{c}{$\gamma _{2}$} & \multicolumn{1}{c}{$\gamma _{3}$} &
  \multicolumn{1}{c}{$\gamma _{4}$} & \multicolumn{1}{c}{$\lambda _{1}$} &
  \multicolumn{1}{c}{$\lambda _{2}$} & \multicolumn{1}{c}{$\lambda _{3}$} \\ 
80\% CI &   0.797 &   0.812 &   0.807 &   0.790 & \multicolumn{1}{c}{---} &   0.770 &   0.780 &   0.807 & \multicolumn{1}{c}{---} &   0.787  \\
90\% CI &   0.912 &   0.897 &   0.902 &   0.892 & \multicolumn{1}{c}{---} &   0.892 &   0.875 &   0.912 & \multicolumn{1}{c}{---} &   0.900  \\
95\% CI &   0.948 &   0.953 &   0.950 &   0.938 & \multicolumn{1}{c}{---} &   0.940 &   0.938 &   0.960 & \multicolumn{1}{c}{---} &   0.968  \\
 \\ 
\multicolumn{11}{c}{Method of Moments.  Sample Size: 500. (nrep = 400)} \\ 
& \multicolumn{1}{c}{$\beta _{1}$} & \multicolumn{1}{c}{$\beta _{2}$} &
  \multicolumn{1}{c}{$\beta _{3}$} & \multicolumn{1}{c}{$\gamma _{1}$} &
  \multicolumn{1}{c}{$\gamma _{2}$} & \multicolumn{1}{c}{$\gamma _{3}$} &
  \multicolumn{1}{c}{$\gamma _{4}$} & \multicolumn{1}{c}{$\lambda _{1}$} &
  \multicolumn{1}{c}{$\lambda _{2}$} & \multicolumn{1}{c}{$\lambda _{3}$} \\ 
80\% CI &   0.870 &   0.850 &   0.850 &   0.902 & \multicolumn{1}{c}{---} &   0.870 &   0.875 &   0.930 & \multicolumn{1}{c}{---} &   0.953  \\
90\% CI &   0.935 &   0.932 &   0.950 &   0.950 & \multicolumn{1}{c}{---} &   0.945 &   0.930 &   0.973 & \multicolumn{1}{c}{---} &   0.988  \\
95\% CI &   0.965 &   0.975 &   0.983 &   0.970 & \multicolumn{1}{c}{---} &   0.978 &   0.968 &   0.995 & \multicolumn{1}{c}{---} &   0.995  \\
 \\ \multicolumn{11}{c}{Correlated Random Effects.  Sample Size: 1000. (nrep = 400)} \\ 
& \multicolumn{1}{c}{$\beta _{1}$} & \multicolumn{1}{c}{$\beta _{2}$} &
  \multicolumn{1}{c}{$\beta _{3}$} & \multicolumn{1}{c}{$\gamma _{1}$} &
  \multicolumn{1}{c}{$\gamma _{2}$} & \multicolumn{1}{c}{$\gamma _{3}$} &
  \multicolumn{1}{c}{$\gamma _{4}$} & \multicolumn{1}{c}{$\lambda _{1}$} &
  \multicolumn{1}{c}{$\lambda _{2}$} & \multicolumn{1}{c}{$\lambda _{3}$} \\ 
80\% CI &   0.785 &   0.815 &   0.800 &   0.818 & \multicolumn{1}{c}{---} &   0.775 &   0.795 &   0.772 & \multicolumn{1}{c}{---} &   0.782  \\
90\% CI &   0.912 &   0.915 &   0.887 &   0.895 & \multicolumn{1}{c}{---} &   0.883 &   0.875 &   0.873 & \multicolumn{1}{c}{---} &   0.900  \\
95\% CI &   0.953 &   0.950 &   0.940 &   0.938 & \multicolumn{1}{c}{---} &   0.935 &   0.932 &   0.935 & \multicolumn{1}{c}{---} &   0.950  \\
 \\ 
\multicolumn{11}{c}{Method of Moments.  Sample Size: 1000. (nrep = 400)} \\ 
& \multicolumn{1}{c}{$\beta _{1}$} & \multicolumn{1}{c}{$\beta _{2}$} &
  \multicolumn{1}{c}{$\beta _{3}$} & \multicolumn{1}{c}{$\gamma _{1}$} &
  \multicolumn{1}{c}{$\gamma _{2}$} & \multicolumn{1}{c}{$\gamma _{3}$} &
  \multicolumn{1}{c}{$\gamma _{4}$} & \multicolumn{1}{c}{$\lambda _{1}$} &
  \multicolumn{1}{c}{$\lambda _{2}$} & \multicolumn{1}{c}{$\lambda _{3}$} \\ 
80\% CI &   0.890 &   0.897 &   0.883 &   0.900 & \multicolumn{1}{c}{---} &   0.885 &   0.905 &   0.910 & \multicolumn{1}{c}{---} &   0.940  \\
90\% CI &   0.958 &   0.953 &   0.955 &   0.960 & \multicolumn{1}{c}{---} &   0.960 &   0.963 &   0.988 & \multicolumn{1}{c}{---} &   0.978  \\
95\% CI &   0.980 &   0.973 &   0.978 &   0.978 & \multicolumn{1}{c}{---} &   0.975 &   0.990 &   0.998 & \multicolumn{1}{c}{---} &   0.988  \\
 \\ \multicolumn{11}{c}{Correlated Random Effects.  Sample Size: 2000. (nrep = 400)} \\ 
& \multicolumn{1}{c}{$\beta _{1}$} & \multicolumn{1}{c}{$\beta _{2}$} &
  \multicolumn{1}{c}{$\beta _{3}$} & \multicolumn{1}{c}{$\gamma _{1}$} &
  \multicolumn{1}{c}{$\gamma _{2}$} & \multicolumn{1}{c}{$\gamma _{3}$} &
  \multicolumn{1}{c}{$\gamma _{4}$} & \multicolumn{1}{c}{$\lambda _{1}$} &
  \multicolumn{1}{c}{$\lambda _{2}$} & \multicolumn{1}{c}{$\lambda _{3}$} \\ 
80\% CI &   0.765 &   0.772 &   0.850 &   0.777 & \multicolumn{1}{c}{---} &   0.815 &   0.805 &   0.782 & \multicolumn{1}{c}{---} &   0.807  \\
90\% CI &   0.863 &   0.892 &   0.917 &   0.890 & \multicolumn{1}{c}{---} &   0.927 &   0.897 &   0.883 & \multicolumn{1}{c}{---} &   0.895  \\
95\% CI &   0.927 &   0.940 &   0.945 &   0.940 & \multicolumn{1}{c}{---} &   0.963 &   0.927 &   0.932 & \multicolumn{1}{c}{---} &   0.953  \\
 \\ 
\multicolumn{11}{c}{Method of Moments.  Sample Size: 2000. (nrep = 400)} \\ 
& \multicolumn{1}{c}{$\beta _{1}$} & \multicolumn{1}{c}{$\beta _{2}$} &
  \multicolumn{1}{c}{$\beta _{3}$} & \multicolumn{1}{c}{$\gamma _{1}$} &
  \multicolumn{1}{c}{$\gamma _{2}$} & \multicolumn{1}{c}{$\gamma _{3}$} &
  \multicolumn{1}{c}{$\gamma _{4}$} & \multicolumn{1}{c}{$\lambda _{1}$} &
  \multicolumn{1}{c}{$\lambda _{2}$} & \multicolumn{1}{c}{$\lambda _{3}$} \\ 
80\% CI &   0.863 &   0.865 &   0.892 &   0.892 & \multicolumn{1}{c}{---} &   0.920 &   0.885 &   0.905 & \multicolumn{1}{c}{---} &   0.890  \\
90\% CI &   0.915 &   0.935 &   0.958 &   0.950 & \multicolumn{1}{c}{---} &   0.973 &   0.955 &   0.965 & \multicolumn{1}{c}{---} &   0.958  \\
95\% CI &   0.958 &   0.968 &   0.985 &   0.980 & \multicolumn{1}{c}{---} &   0.990 &   0.985 &   0.975 & \multicolumn{1}{c}{---} &   0.988  \\
 \\ \end{tabular}
 }
\par
{\footnotesize Based on 1000 bootstrap replications. $\gamma_2$ and $%
\lambda_2$ are normalized.}
\end{center}
\end{table}

\begin{table}[tbp]
\caption{Design B}
\label{Design B CI}
\begin{center}
{\footnotesize \begin{tabular}{lrrrrrrrrrr} 
\multicolumn{11}{c}{Correlated Random Effects.  Sample Size: 500. (nrep = 400)} \\ 
& \multicolumn{1}{c}{$\beta _{1}$} & \multicolumn{1}{c}{$\beta _{2}$} &
  \multicolumn{1}{c}{$\beta _{3}$} & \multicolumn{1}{c}{$\gamma _{1}$} &
  \multicolumn{1}{c}{$\gamma _{2}$} & \multicolumn{1}{c}{$\gamma _{3}$} &
  \multicolumn{1}{c}{$\gamma _{4}$} & \multicolumn{1}{c}{$\lambda _{1}$} &
  \multicolumn{1}{c}{$\lambda _{2}$} & \multicolumn{1}{c}{$\lambda _{3}$} \\ 
80\% CI &   0.828 &   0.843 &   0.815 &   0.777 & \multicolumn{1}{c}{---} &   0.792 &   0.795 &   0.785 & \multicolumn{1}{c}{---} &   0.820  \\
90\% CI &   0.900 &   0.925 &   0.915 &   0.900 & \multicolumn{1}{c}{---} &   0.892 &   0.895 &   0.880 & \multicolumn{1}{c}{---} &   0.925  \\
95\% CI &   0.943 &   0.955 &   0.955 &   0.945 & \multicolumn{1}{c}{---} &   0.938 &   0.945 &   0.935 & \multicolumn{1}{c}{---} &   0.968  \\
 \\ 
\multicolumn{11}{c}{Method of Moments.  Sample Size: 500. (nrep = 400)} \\ 
& \multicolumn{1}{c}{$\beta _{1}$} & \multicolumn{1}{c}{$\beta _{2}$} &
  \multicolumn{1}{c}{$\beta _{3}$} & \multicolumn{1}{c}{$\gamma _{1}$} &
  \multicolumn{1}{c}{$\gamma _{2}$} & \multicolumn{1}{c}{$\gamma _{3}$} &
  \multicolumn{1}{c}{$\gamma _{4}$} & \multicolumn{1}{c}{$\lambda _{1}$} &
  \multicolumn{1}{c}{$\lambda _{2}$} & \multicolumn{1}{c}{$\lambda _{3}$} \\ 
80\% CI &   0.887 &   0.880 &   0.860 &   0.875 & \multicolumn{1}{c}{---} &   0.870 &   0.873 &   0.883 & \multicolumn{1}{c}{---} &   0.943  \\
90\% CI &   0.958 &   0.945 &   0.943 &   0.935 & \multicolumn{1}{c}{---} &   0.938 &   0.943 &   0.940 & \multicolumn{1}{c}{---} &   0.983  \\
95\% CI &   0.985 &   0.965 &   0.983 &   0.965 & \multicolumn{1}{c}{---} &   0.965 &   0.970 &   0.973 & \multicolumn{1}{c}{---} &   0.995  \\
 \\ \multicolumn{11}{c}{Correlated Random Effects.  Sample Size: 1000. (nrep = 400)} \\ 
& \multicolumn{1}{c}{$\beta _{1}$} & \multicolumn{1}{c}{$\beta _{2}$} &
  \multicolumn{1}{c}{$\beta _{3}$} & \multicolumn{1}{c}{$\gamma _{1}$} &
  \multicolumn{1}{c}{$\gamma _{2}$} & \multicolumn{1}{c}{$\gamma _{3}$} &
  \multicolumn{1}{c}{$\gamma _{4}$} & \multicolumn{1}{c}{$\lambda _{1}$} &
  \multicolumn{1}{c}{$\lambda _{2}$} & \multicolumn{1}{c}{$\lambda _{3}$} \\ 
80\% CI &   0.785 &   0.780 &   0.800 &   0.792 & \multicolumn{1}{c}{---} &   0.795 &   0.782 &   0.772 & \multicolumn{1}{c}{---} &   0.795  \\
90\% CI &   0.907 &   0.897 &   0.895 &   0.885 & \multicolumn{1}{c}{---} &   0.905 &   0.897 &   0.902 & \multicolumn{1}{c}{---} &   0.885  \\
95\% CI &   0.953 &   0.948 &   0.955 &   0.943 & \multicolumn{1}{c}{---} &   0.955 &   0.935 &   0.965 & \multicolumn{1}{c}{---} &   0.943  \\
 \\ 
\multicolumn{11}{c}{Method of Moments.  Sample Size: 1000. (nrep = 400)} \\ 
& \multicolumn{1}{c}{$\beta _{1}$} & \multicolumn{1}{c}{$\beta _{2}$} &
  \multicolumn{1}{c}{$\beta _{3}$} & \multicolumn{1}{c}{$\gamma _{1}$} &
  \multicolumn{1}{c}{$\gamma _{2}$} & \multicolumn{1}{c}{$\gamma _{3}$} &
  \multicolumn{1}{c}{$\gamma _{4}$} & \multicolumn{1}{c}{$\lambda _{1}$} &
  \multicolumn{1}{c}{$\lambda _{2}$} & \multicolumn{1}{c}{$\lambda _{3}$} \\ 
80\% CI &   0.892 &   0.890 &   0.895 &   0.900 & \multicolumn{1}{c}{---} &   0.875 &   0.897 &   0.943 & \multicolumn{1}{c}{---} &   0.920  \\
90\% CI &   0.965 &   0.950 &   0.955 &   0.948 & \multicolumn{1}{c}{---} &   0.968 &   0.968 &   0.973 & \multicolumn{1}{c}{---} &   0.960  \\
95\% CI &   0.985 &   0.960 &   0.983 &   0.975 & \multicolumn{1}{c}{---} &   0.983 &   0.985 &   0.990 & \multicolumn{1}{c}{---} &   0.985  \\
 \\ \multicolumn{11}{c}{Correlated Random Effects.  Sample Size: 2000. (nrep = 400)} \\ 
& \multicolumn{1}{c}{$\beta _{1}$} & \multicolumn{1}{c}{$\beta _{2}$} &
  \multicolumn{1}{c}{$\beta _{3}$} & \multicolumn{1}{c}{$\gamma _{1}$} &
  \multicolumn{1}{c}{$\gamma _{2}$} & \multicolumn{1}{c}{$\gamma _{3}$} &
  \multicolumn{1}{c}{$\gamma _{4}$} & \multicolumn{1}{c}{$\lambda _{1}$} &
  \multicolumn{1}{c}{$\lambda _{2}$} & \multicolumn{1}{c}{$\lambda _{3}$} \\ 
80\% CI &   0.800 &   0.810 &   0.802 &   0.805 & \multicolumn{1}{c}{---} &   0.820 &   0.785 &   0.787 & \multicolumn{1}{c}{---} &   0.802  \\
90\% CI &   0.880 &   0.897 &   0.905 &   0.920 & \multicolumn{1}{c}{---} &   0.922 &   0.887 &   0.887 & \multicolumn{1}{c}{---} &   0.907  \\
95\% CI &   0.945 &   0.948 &   0.950 &   0.973 & \multicolumn{1}{c}{---} &   0.960 &   0.935 &   0.932 & \multicolumn{1}{c}{---} &   0.943  \\
 \\ 
\multicolumn{11}{c}{Method of Moments.  Sample Size: 2000. (nrep = 400)} \\ 
& \multicolumn{1}{c}{$\beta _{1}$} & \multicolumn{1}{c}{$\beta _{2}$} &
  \multicolumn{1}{c}{$\beta _{3}$} & \multicolumn{1}{c}{$\gamma _{1}$} &
  \multicolumn{1}{c}{$\gamma _{2}$} & \multicolumn{1}{c}{$\gamma _{3}$} &
  \multicolumn{1}{c}{$\gamma _{4}$} & \multicolumn{1}{c}{$\lambda _{1}$} &
  \multicolumn{1}{c}{$\lambda _{2}$} & \multicolumn{1}{c}{$\lambda _{3}$} \\ 
80\% CI &   0.860 &   0.902 &   0.880 &   0.895 & \multicolumn{1}{c}{---} &   0.883 &   0.883 &   0.910 & \multicolumn{1}{c}{---} &   0.932  \\
90\% CI &   0.948 &   0.965 &   0.960 &   0.950 & \multicolumn{1}{c}{---} &   0.953 &   0.948 &   0.963 & \multicolumn{1}{c}{---} &   0.983  \\
95\% CI &   0.968 &   0.988 &   0.978 &   0.978 & \multicolumn{1}{c}{---} &   0.983 &   0.985 &   0.988 & \multicolumn{1}{c}{---} &   0.995  \\
 \\ \end{tabular}
 }
\par
{\footnotesize Based on 1000 bootstrap replications. $\gamma_2$ and $%
\lambda_2$ are normalized.}
\end{center}
\end{table}

\begin{table}[tbp]
\caption{Design C}
\label{Design C CI}
\begin{center}
{\footnotesize \begin{tabular}{lrrrrrrrrrr} 
\multicolumn{11}{c}{Correlated Random Effects.  Sample Size: 500. (nrep = 400)} \\ 
& \multicolumn{1}{c}{$\beta _{1}$} & \multicolumn{1}{c}{$\beta _{2}$} &
  \multicolumn{1}{c}{$\beta _{3}$} & \multicolumn{1}{c}{$\gamma _{1}$} &
  \multicolumn{1}{c}{$\gamma _{2}$} & \multicolumn{1}{c}{$\gamma _{3}$} &
  \multicolumn{1}{c}{$\gamma _{4}$} & \multicolumn{1}{c}{$\lambda _{1}$} &
  \multicolumn{1}{c}{$\lambda _{2}$} & \multicolumn{1}{c}{$\lambda _{3}$} \\ 
80\% CI &   0.780 &   0.825 &   0.787 &   0.593 & \multicolumn{1}{c}{---} &   0.728 &   0.103 &   0.742 & \multicolumn{1}{c}{---} &   0.113  \\
90\% CI &   0.892 &   0.922 &   0.897 &   0.695 & \multicolumn{1}{c}{---} &   0.835 &   0.200 &   0.850 & \multicolumn{1}{c}{---} &   0.218  \\
95\% CI &   0.960 &   0.963 &   0.953 &   0.787 & \multicolumn{1}{c}{---} &   0.900 &   0.328 &   0.940 & \multicolumn{1}{c}{---} &   0.295  \\
 \\ 
\multicolumn{11}{c}{Method of Moments.  Sample Size: 500. (nrep = 400)} \\ 
& \multicolumn{1}{c}{$\beta _{1}$} & \multicolumn{1}{c}{$\beta _{2}$} &
  \multicolumn{1}{c}{$\beta _{3}$} & \multicolumn{1}{c}{$\gamma _{1}$} &
  \multicolumn{1}{c}{$\gamma _{2}$} & \multicolumn{1}{c}{$\gamma _{3}$} &
  \multicolumn{1}{c}{$\gamma _{4}$} & \multicolumn{1}{c}{$\lambda _{1}$} &
  \multicolumn{1}{c}{$\lambda _{2}$} & \multicolumn{1}{c}{$\lambda _{3}$} \\ 
80\% CI &   0.792 &   0.820 &   0.792 &   0.797 & \multicolumn{1}{c}{---} &   0.762 &   0.682 &   0.828 & \multicolumn{1}{c}{---} &   0.730  \\
90\% CI &   0.915 &   0.935 &   0.915 &   0.895 & \multicolumn{1}{c}{---} &   0.867 &   0.792 &   0.922 & \multicolumn{1}{c}{---} &   0.830  \\
95\% CI &   0.955 &   0.958 &   0.958 &   0.940 & \multicolumn{1}{c}{---} &   0.915 &   0.883 &   0.963 & \multicolumn{1}{c}{---} &   0.877  \\
 \\ \multicolumn{11}{c}{Correlated Random Effects.  Sample Size: 1000. (nrep = 400)} \\ 
& \multicolumn{1}{c}{$\beta _{1}$} & \multicolumn{1}{c}{$\beta _{2}$} &
  \multicolumn{1}{c}{$\beta _{3}$} & \multicolumn{1}{c}{$\gamma _{1}$} &
  \multicolumn{1}{c}{$\gamma _{2}$} & \multicolumn{1}{c}{$\gamma _{3}$} &
  \multicolumn{1}{c}{$\gamma _{4}$} & \multicolumn{1}{c}{$\lambda _{1}$} &
  \multicolumn{1}{c}{$\lambda _{2}$} & \multicolumn{1}{c}{$\lambda _{3}$} \\ 
80\% CI &   0.792 &   0.802 &   0.775 &   0.345 & \multicolumn{1}{c}{---} &   0.560 &   0.005 &   0.720 & \multicolumn{1}{c}{---} &   0.007  \\
90\% CI &   0.910 &   0.895 &   0.887 &   0.478 & \multicolumn{1}{c}{---} &   0.718 &   0.015 &   0.863 & \multicolumn{1}{c}{---} &   0.030  \\
95\% CI &   0.963 &   0.963 &   0.943 &   0.590 & \multicolumn{1}{c}{---} &   0.823 &   0.045 &   0.925 & \multicolumn{1}{c}{---} &   0.060  \\
 \\ 
\multicolumn{11}{c}{Method of Moments.  Sample Size: 1000. (nrep = 400)} \\ 
& \multicolumn{1}{c}{$\beta _{1}$} & \multicolumn{1}{c}{$\beta _{2}$} &
  \multicolumn{1}{c}{$\beta _{3}$} & \multicolumn{1}{c}{$\gamma _{1}$} &
  \multicolumn{1}{c}{$\gamma _{2}$} & \multicolumn{1}{c}{$\gamma _{3}$} &
  \multicolumn{1}{c}{$\gamma _{4}$} & \multicolumn{1}{c}{$\lambda _{1}$} &
  \multicolumn{1}{c}{$\lambda _{2}$} & \multicolumn{1}{c}{$\lambda _{3}$} \\ 
80\% CI &   0.845 &   0.815 &   0.792 &   0.810 & \multicolumn{1}{c}{---} &   0.785 &   0.720 &   0.833 & \multicolumn{1}{c}{---} &   0.758  \\
90\% CI &   0.915 &   0.920 &   0.877 &   0.907 & \multicolumn{1}{c}{---} &   0.873 &   0.833 &   0.917 & \multicolumn{1}{c}{---} &   0.833  \\
95\% CI &   0.950 &   0.950 &   0.932 &   0.950 & \multicolumn{1}{c}{---} &   0.900 &   0.900 &   0.963 & \multicolumn{1}{c}{---} &   0.917  \\
 \\ \multicolumn{11}{c}{Correlated Random Effects.  Sample Size: 2000. (nrep = 400)} \\ 
& \multicolumn{1}{c}{$\beta _{1}$} & \multicolumn{1}{c}{$\beta _{2}$} &
  \multicolumn{1}{c}{$\beta _{3}$} & \multicolumn{1}{c}{$\gamma _{1}$} &
  \multicolumn{1}{c}{$\gamma _{2}$} & \multicolumn{1}{c}{$\gamma _{3}$} &
  \multicolumn{1}{c}{$\gamma _{4}$} & \multicolumn{1}{c}{$\lambda _{1}$} &
  \multicolumn{1}{c}{$\lambda _{2}$} & \multicolumn{1}{c}{$\lambda _{3}$} \\ 
80\% CI &   0.780 &   0.802 &   0.795 &   0.135 & \multicolumn{1}{c}{---} &   0.480 &   0.000 &   0.645 & \multicolumn{1}{c}{---} &   0.000  \\
90\% CI &   0.883 &   0.885 &   0.900 &   0.230 & \multicolumn{1}{c}{---} &   0.637 &   0.000 &   0.765 & \multicolumn{1}{c}{---} &   0.000  \\
95\% CI &   0.940 &   0.948 &   0.945 &   0.353 & \multicolumn{1}{c}{---} &   0.735 &   0.000 &   0.875 & \multicolumn{1}{c}{---} &   0.002  \\
 \\ 
\multicolumn{11}{c}{Method of Moments.  Sample Size: 2000. (nrep = 400)} \\ 
& \multicolumn{1}{c}{$\beta _{1}$} & \multicolumn{1}{c}{$\beta _{2}$} &
  \multicolumn{1}{c}{$\beta _{3}$} & \multicolumn{1}{c}{$\gamma _{1}$} &
  \multicolumn{1}{c}{$\gamma _{2}$} & \multicolumn{1}{c}{$\gamma _{3}$} &
  \multicolumn{1}{c}{$\gamma _{4}$} & \multicolumn{1}{c}{$\lambda _{1}$} &
  \multicolumn{1}{c}{$\lambda _{2}$} & \multicolumn{1}{c}{$\lambda _{3}$} \\ 
80\% CI &   0.805 &   0.807 &   0.843 &   0.843 & \multicolumn{1}{c}{---} &   0.815 &   0.745 &   0.860 & \multicolumn{1}{c}{---} &   0.762  \\
90\% CI &   0.915 &   0.927 &   0.945 &   0.930 & \multicolumn{1}{c}{---} &   0.920 &   0.830 &   0.932 & \multicolumn{1}{c}{---} &   0.885  \\
95\% CI &   0.965 &   0.973 &   0.965 &   0.958 & \multicolumn{1}{c}{---} &   0.960 &   0.877 &   0.960 & \multicolumn{1}{c}{---} &   0.922  \\
 \\ \end{tabular}
 }
\par
{\footnotesize Based on 1000 bootstrap replications. $\gamma_2$ and $%
\lambda_2$ are normalized.}
\end{center}
\end{table}

\subsection{Empirical illustration\label{Section: Empirical Illustration}}

In this section, we illustrate the value of the moment conditions derived in
this paper in an empirical illustration inspired by \cite%
{contoyannis_dynamics_2004}. The dependent variable is self-reported health
status. We use data from the first five waves of the British Household
Panel Survey, and we restrict the sample to individuals
who are between 26 and 70 years old in the first wave. This yields a data set with $5093$ individuals observed in 5
time periods, including the initial observation (so $T=4$). In the original
data set, the dependent variable can take five values. We aggregate these
into \textquotedblleft Poor or Very Poor\textquotedblright\ ($8.1\%$ of the
observations), \textquotedblleft Fair\textquotedblright\ ($18.6\%$),
\textquotedblleft Good\textquotedblright\ ($47.6\%$), and \textquotedblleft
Excellent\textquotedblright\ ($25.7\%$). We also consider specifications
where the first two are merged into one outcome.

We use two sets of explanatory variables. In the first, we use age and
age-squared (measured as $Age/10$ and $(Age-45)^{2}/{100}$, respectively,
where $Age$ is measured in years). In the second, we also include log-income.

The results are presented in Table \ref{Table: Empirical Results Version2}, which
also presents the estimates from a correlated random effects specification.
We have normalized the $\gamma$-coefficient associated with
\textquotedblleft Good Health\textquotedblright\  and the threshold ($%
\lambda $) just below \textquotedblleft Good Health\textquotedblright\ to be zero.

The most consistent result presented in Table \ref{Table: Empirical Results Version2} is that the
coefficient on age is negative across all specifications and that the coefficient on age-squared is
insignificantly different from 0 in all specifications. 
The point estimates for the effect of income on self-reported health are
positive and statistically significant for all the specifications. The most puzzling aspect of Table 
\ref {Table: Empirical Results Version2} is that the standard error of the correlated random effects estimator of
the coefficient on age gets much more precise when log-income is included. The method of moments estimator does not display this pattern. 
A comparison of the other standard errors reveals that the correlated random effects estimator is less variable 
than the method of moments estimator. This is in line with the interquartile ranges reported in Tables \ref{Design A}, \ref{Design B}\ and \ref{Design C}.

\begin{table}[h!]
\caption{Empirical Results: Application to Self-Reported Health Status}
\label{Table: Empirical Results Version2}
\begin{center}
{\footnotesize
\begin{tabular}{lrrrrcrrrr}
& \multicolumn{4}{c}{Three Outcomes} &  & \multicolumn{4}{c}{Four Outcomes}
\\
& \multicolumn{1}{c}{MoM} & \multicolumn{1}{c}{CRE} & \multicolumn{1}{c}{MoM}
& \multicolumn{1}{c}{CREs} & \multicolumn{1}{c}{} & \multicolumn{1}{c}{MoM}
& \multicolumn{1}{c}{CRE} & \multicolumn{1}{c}{MoM} & \multicolumn{1}{c}{CRE}
\\
       $Age/10$ & $ -1.070 $ & $ -1.110 $ & $ -1.271 $ & $ -0.179 $ &  & $ -1.136 $ & $ -0.961 $ & $ -1.078 $ & $ -0.147 $ \\
                 & $( 0.296\rlap{)}$ & $( 0.145\rlap{)}$ & $( 0.470\rlap{)}$ & $( 0.035\rlap{)}$ &  & $( 0.214\rlap{)}$ & $( 0.137\rlap{)}$ & $( 0.331\rlap{)}$ & $( 0.032\rlap{)}$ \vspace{5pt} \\
{\tiny ($Age-45)^2/100$} & $  0.396 $ & $ -0.251 $ & $  0.804 $ & $  0.109 $ &  & $  0.278 $ & $ -0.226 $ & $ -0.293 $ & $  0.025 $ \\
                 & $( 0.925\rlap{)}$ & $( 0.548\rlap{)}$ & $( 2.047\rlap{)}$ & $( 0.544\rlap{)}$ &  & $( 0.749\rlap{)}$ & $( 0.503\rlap{)}$ & $( 1.037\rlap{)}$ & $( 0.499\rlap{)}$ \vspace{5pt} \\
      log-income &          &          & $  0.054 $ & $  0.283 $ &  &          &          & $  0.130 $ & $  0.217 $ \\
                 &          &          & $( 0.198\rlap{)}$ & $( 0.048\rlap{)}$ &  &          &          & $( 0.089\rlap{)}$ & $( 0.045\rlap{)}$ \vspace{25pt} \\
      $\gamma_1$ &          &          &          &          &  & $ -1.477 $ & $ -1.594 $ & $ -1.427 $ & $ -1.674 $ \\
                 &          &          &          &          &  & $( 0.179\rlap{)}$ & $( 0.109\rlap{)}$ & $( 0.247\rlap{)}$ & $( 0.111\rlap{)}$ \vspace{5pt} \\
      $\gamma_2$ & $  0.269 $ & $ -0.810 $ & $ -0.753 $ & $ -0.824 $ &  & $ -0.759 $ & $ -0.618 $ & $ -0.691 $ & $ -0.657 $ \\
                 & $( 0.165\rlap{)}$ & $( 0.067\rlap{)}$ & $( 0.354\rlap{)}$ & $( 0.068\rlap{)}$ &  & $( 0.100\rlap{)}$ & $( 0.063\rlap{)}$ & $( 0.116\rlap{)}$ & $( 0.064\rlap{)}$ \vspace{5pt} \\
      $\gamma_3$ & $  0.000 $ & $  0.000 $ & $  0.000 $ & $  0.000 $ &  & $  0.000 $ & $  0.000 $ & $  0.000 $ & $  0.000 $ \\
                 & $( 0.000\rlap{)}$ & $( 0.000\rlap{)}$ & $( 0.000\rlap{)}$ & $( 0.000\rlap{)}$ &  & $( 0.000\rlap{)}$ & $( 0.000\rlap{)}$ & $( 0.000\rlap{)}$ & $( 0.000\rlap{)}$ \vspace{5pt} \\
      $\gamma_4$ & $  0.147 $ & $  0.484 $ & $  0.559 $ & $  0.590 $ &  & $  0.395 $ & $  0.557 $ & $  0.635 $ & $  0.619 $ \\
                 & $( 0.133\rlap{)}$ & $( 0.063\rlap{)}$ & $( 0.619\rlap{)}$ & $( 0.064\rlap{)}$ &  & $( 0.105\rlap{)}$ & $( 0.064\rlap{)}$ & $( 0.236\rlap{)}$ & $( 0.065\rlap{)}$ \vspace{25pt} \\
     $\lambda_1$ &          &          &          &          &  & $ -2.464 $ & $ -2.560 $ & $ -2.510 $ & $ -2.555 $ \\
                 &          &          &          &          &  & $( 0.057\rlap{)}$ & $( 0.050\rlap{)}$ & $( 0.080\rlap{)}$ & $( 0.050\rlap{)}$ \vspace{5pt} \\
     $\lambda_2$ & $  0.000 $ & $  0.000 $ & $  0.000 $ & $  0.000 $ &  & $  0.000 $ & $  0.000 $ & $  0.000 $ & $  0.000 $ \\
                 & $( 0.000\rlap{)}$ & $( 0.000\rlap{)}$ & $( 0.000\rlap{)}$ & $( 0.000\rlap{)}$ &  & $( 0.000\rlap{)}$ & $( 0.000\rlap{)}$ & $( 0.000\rlap{)}$ & $( 0.000\rlap{)}$ \vspace{5pt} \\
     $\lambda_3$ & $  3.707 $ & $  3.818 $ & $  3.304 $ & $  3.780 $ &  & $  3.374 $ & $  3.758 $ & $  3.340 $ & $  3.728 $ \\
                 & $( 0.085\rlap{)}$ & $( 0.054\rlap{)}$ & $( 0.203\rlap{)}$ & $( 0.054\rlap{)}$ &  & $( 0.060\rlap{)}$ & $( 0.053\rlap{)}$ & $( 0.082\rlap{)}$ & $( 0.054\rlap{)}$ \vspace{5pt} \\
\\ \hline
\end{tabular}
}
\end{center}
\par
{\footnotesize \singlespacing
The dependent variable is self-reported health status and the data is a
balanced panel from the first five waves of the British Household Panel
Survey. We restrict the sample to individuals who are between 26 and 70 years old in the first wave. In columns 1-4, the dependent variable takes the values ``Poor, Very Poor, or Fair'', ``Good'', and
``Excellent''.
The dependent variable in columns 5-8
takes values ``Poor or Very Poor'',  ``Fair'',
``Good'' and ``Excellent''. MoM refers to the method of moments estimator and CRE is the
correlated random effects estimator.}
\end{table}

\IgnoreThisText{
\begin{table}[h!]
\caption{Empirical Results}
\label{Table: Empirical Results}\footnotesize \centering
\begin{tabular}{lrrcrr}
&\multicolumn{2}{c}{Four Outcomes}&\qquad &\multicolumn{2}{c}{Three Outcomes}\\
$Age/10$ & -1.214 & -1.829 & & -0.980 & -1.261 \\
\vspace{2pt} & (0.306\rlap{)} & (0.309\rlap{)} & & (0.488\rlap{)} & (0.475%
\rlap{)}\vspace{1pt} \\
$(Age-45)^2/1000$\qquad & -0.200 & -0.261 & & -0.304 & -0.236 \\
\vspace{2pt} & (0.074\rlap{)} & (0.080\rlap{)} & & (0.108\rlap{)} & (1.411%
\rlap{)}\vspace{1pt} \\
log-income &  & 0.063 & & & 0.196 \\
\vspace{2pt} &  & (0.070\rlap{)} & & & (0.117\rlap{)} \\
$\delta _{1}$ & -0.310 & -0.339 &  &  \\
\vspace{2pt} & (0.166\rlap{)} & (0.146\rlap{)} &  & \vspace{1pt} \\
$\delta _{2}$ & -0.313 & -0.338 & & -0.535 & -0.489 \\
\vspace{2pt} & (0.130\rlap{)} & (0.129\rlap{)} & & (0.338\rlap{)} & (0.239%
\rlap{)}\vspace{1pt} \\
$\delta _{3}$ & 0.000 & 0.000 & & 0.000 & 0.000 \\
\vspace{2pt} & (0.000\rlap{)} &  (0.000\rlap{)} & & (0.000\rlap{)} & (0.000%
\rlap{)}\vspace{1pt} \\
$\delta _{4}$ & -0.180 & -0.263 & & 0.127 & 0.021 \\
\vspace{2pt} & (0.128\rlap{)} & (0.113\rlap{)} & & (0.146\rlap{)} & (0.115%
\rlap{)}\vspace{1pt} \\
$\lambda _{1}$ & -2.756 & -3.050 &  &  \\
\vspace{2pt} & (0.086\rlap{)} & (0.101\rlap{)} &  & \vspace{1pt} \\
$\lambda _{2}$ & 0.000 & 0.000 & & 0.000 & 0.000 \\
\vspace{2pt} & (0.000\rlap{)} & (0.000\rlap{)} & & (0.000\rlap{)} & (0.000%
\rlap{)}\vspace{1pt} \\
$\lambda _{3}$ & 3.693 & 4.137 & & 3.395 & 3.440 \\
& (0.126\rlap{)} & (0.128\rlap{)} & & (0.155\rlap{)} & (0.120\rlap{)}%
\end{tabular}
\end{table}  
}

\section{Conclusions}
\label{Conclusions}

This paper has extended the analysis in \cite{honore2020dynamic} to provide conditional moment conditions for panel data fixed effects versions of the dynamic ordered logit models like the one considered in \cite{muris2020dynamic}. The moment conditions are interesting in their own right, and the paper also illustrates the potential for systematically deriving moment conditions for nonlinear panel models. The moment conditions presented here can be used for estimation as well as for testing more parametric specifications of the individual-specific effects in dynamic ordered logits. For point-identification, it is important to investigate whether the moment conditions are \textit{uniquely} satisfied at the true parameter values. The paper presents conditions under which this is the case. 
The paper also proposes a practical strategy for turning the derived conditional moment conditions into unconditional moment conditions that can be used for GMM estimation, and it illustrates the use of the resulting estimator in a small Monte Carlo study as well as in an empirical application.
 
More broadly, this paper  contributes to the literature on panel data estimation of nonlinear models with fixed effects. In this context, the main contribution is to illustrate the potential for applying the functional differencing insights of \cite{bonhomme2012functional} to logit-type models.

\ifx\undefined\BySame
\newcommand{\BySame}{\leavevmode\rule[.5ex]{3em}{.5pt}\ }
\fi
\ifx\undefined\textsc
\newcommand{\textsc}[1]{{\sc #1}}
\newcommand{\emph}[1]{{\em #1\/}}
\let\tmpsmall\small
\renewcommand{\small}{\tmpsmall\sc}
\fi

\newpage
\appendix
\section{Appendix}
\allowdisplaybreaks

\subsection{Proof of Theorem~\ref{th:momentsT3} and \ref{th:momentsTgeneralA}}

We first want to establish Lemma~\ref{lemma:MASTER} below, which is key to proving the main text theorems.
In order to state the lemma, we require some additional notation.
Recall that $Q \in \{2,3,\ldots\}$ is the number of values that the observed outcomes $Y_{it}$  can take.
Let  $  \widetilde Y_1, \widetilde  Y_3 \in \{0,1\}   $,  $\widetilde  Y_2 \in \{1,2,3\} $, and $W \in \{1,2,\ldots,Q\}$
be random variables, and let $\widetilde Y = (\widetilde Y_1,\widetilde Y_2,\widetilde Y_3)$.
For the joint distribution of $\widetilde Y$ and $W$ we write
\begin{align*}
    p(\widetilde y, w ) := {\rm Pr}\left(\widetilde Y = \widetilde y \; \; \& \; \;  W = w \right) 
\end{align*}
and we assume that
\begin{align}
  p(\widetilde y, w ) &=  p_3(\widetilde y_3 \,| \,\widetilde y_2, w) \; \,  p_2(\widetilde y_2 \,| \,w) \;\,  f(w \,|\, \widetilde y_1)   \;\,  p_1(\widetilde y_1)   ,
     \label{AssMarkov}
\end{align}
where $\widetilde y = (\widetilde y_1,\widetilde y_2,\widetilde y_3)$, and
\begin{align*}
    p_1(\widetilde y_1) &:=  {\rm Pr}\left( \widetilde Y_1 = \widetilde y_1  \right) ,
 \\   
     f(w \, |\, \widetilde y_1 ) & := \Pr \left(W=w \, \Big| \,  \widetilde Y_1 = \widetilde y_1 \right) ,
\\
   p_2( \widetilde y_2 \, |\, w ) &:=  {\rm Pr}\left( \widetilde Y_2 =  \widetilde y_2  \, \Big| \,   W  = w   \right) ,
\\
   p_3( \widetilde y_3 \, |\, \widetilde y_2, w ) &:=  {\rm Pr}\left( \widetilde Y_3 =  \widetilde y_3  \, \Big| \,    \widetilde Y_2 =  \widetilde y_2 , \, W=w  \right) .
\end{align*}
We do not impose any  assumptions on the 
transition probabilities $f(w \, |\, \widetilde y_1 ) $,  $ p_3(\widetilde y_3 \,| \,1,w) $,
and $ p_3(\widetilde y_3 \,| \,3,w) $. All the other transition  probabilities are assumed to follow an (ordered) logit model:
\begin{align}
     p_1(\widetilde y_1  ) &=  \left\{
    \begin{array}{ll}    
      1 -  \Lambda(  \pi_{1}  )   & \text{if $\widetilde y_1=0$,} 
      \\
      \Lambda(  \pi_{1}  )    & \text{if $\widetilde y_1=1$,}
    \end{array}
    \right.  
\nonumber \\[10pt] 
    p_2(\widetilde y_2 \,|\, w) &=  \left\{
    \begin{array}{ll}    
      1 -  \Lambda[    \pi_{2,1}(w)  ]   & \text{if $\widetilde y_2=1$,} 
      \\
      \Lambda[   \pi_{2,1}(w)  ] -  \Lambda[  \pi_{2,2}(w)  ]   & \text{if $\widetilde y_2=2$,}
      \\
      \Lambda[   \pi_{2,2}(w)  ]    & \text{if $\widetilde y_2=3$,}
    \end{array}
    \right.  
\nonumber \\[10pt] 
     p_3(\widetilde y_3 \,| \, 2 , \, w ) &=  \left\{
    \begin{array}{ll}    
      1 -  \Lambda(   \pi_{3}  )   & \text{if $\widetilde y_3=0$,} 
      \\
      \Lambda(   \pi_{3}  )    & \text{if $\widetilde y_3=1$,}
    \end{array}
    \right.  
  \label{AssLogit}  
\end{align}
where  $ \Lambda(\xi) :=  [1+\exp(-\xi)]^{-1}$ is the cumulative distribution function of the logistic distribution,
$\pi_1, \pi_3 \in \mathbb{R}$ are constants,
and  $ \pi_{2,1}, \pi_{2,2} \,: \, \{1,2,\ldots,Q\} \rightarrow \mathbb{R}$ are functions
such that $\pi_{2,1}(w) \geq \pi_{2,2}(w)$ for all $w \in  \{1,2,\ldots,Q\}$.
Notice that $ p_3(\widetilde y_3 \,| \, 2 , \, w ) $ does not depend on $w$.
Finally, we define
$m : \{0,1\} \times \{1,2,3\} \times \{0,1\} \times \{1,2,\ldots,Q\}  \rightarrow \mathbb{R}$ 
by
 \begin{align}
  m(\widetilde y, w)   &:= 
    \left\{  \begin{array}{ll}
           \exp\left(   \pi_1  - \pi_3  \right)     \frac{ \displaystyle    \exp[\pi_3 - \pi_{2,2}(w)] -  1   }
             {\displaystyle \exp\left[  \pi_{2,1}(w) - \pi_{2,2}(w)      \right]  - 1  }  & \text{if }  \widetilde y_1 = 0, \, \widetilde y_2=2 , \, \widetilde y_3=0, 
            \\[10pt]
         \exp\left(   \pi_1  -  \pi_3  \right)    \frac{ \displaystyle     1 - \exp[\pi_{2,2}(w) - \pi_3]  }
             {\displaystyle 1 -  \exp\left[  \pi_{2,2}(w)  -  \pi_{2,1}(w)  \right]  }  & \text{if }  \widetilde y_1 = 0, \, \widetilde y_2=2 , \, \widetilde y_3=1,            
            \\[10pt]
             \exp\left(  \pi_1 -  \pi_3   \right) & \text{if }  \widetilde y_1 = 0, \,  \widetilde y_2=3 ,
            \\[10pt]
         -1  & \text{if }  \widetilde y_1 = 1, \, \widetilde y_2=1 ,                        
            \\[10pt]
 -   \frac{ \displaystyle     1 - \exp[\pi_3 - \pi_{2,1}(w)  ]   }
             {\displaystyle 1 -  \exp\left[  \pi_{2,2}(w)  -  \pi_{2,1}(w)  \right]  }  & \text{if }  \widetilde y_1 = 1, \, \widetilde y_2=2 , \, \widetilde y_3=0, 
            \\[10pt]
 -   \frac{\displaystyle     \exp[  \pi_{2,1}(w) - \pi_3 ] - 1    }
             {\displaystyle \exp\left[    \pi_{2,1}(w) -  \pi_{2,2}(w) \right] - 1   }
               & \text{if }  \widetilde y_1 = 1, \, \widetilde y_2=2 , \, \widetilde y_3=1, 
            \\[10pt]
                      0 & \text{otherwise}.
       \end{array} \right.
     \label{DefMgeneral}  
\end{align}

\begin{lemma}
    \label{lemma:MASTER}
    Let $\pi_1, \pi_3 \in \mathbb{R}$, and $ \pi_{2,1}, \pi_{2,2} \,: \, \{1,2,\ldots,Q\} \rightarrow \mathbb{R}$
   be such that $\pi_{2,1}(w) \geq \pi_{2,2}(w)$, for all $w \in  \{1,2,\ldots,Q\}$.
    Let the random variables $\widetilde Y \in \{0,1\} \times \{1,2,3\} \times \{0,1\}$ and $W \in \{1,2,\ldots,Q\}$
      be such that their distributions satisfy  \eqref{AssMarkov} and \eqref{AssLogit},
      and let $m : \{0,1\} \times \{1,2,3\} \times \{0,1\} \times \{1,2,\ldots,Q\}  \rightarrow \mathbb{R}$  be defined by \eqref{DefMgeneral}.      
       Then  we have
     \begin{align*}
          \mathbb{E}\left[ m(\widetilde Y, W)   \right] = 0.
     \end{align*}
\end{lemma}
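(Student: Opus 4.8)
The plan is to evaluate $\mathbb{E}[m(\widetilde Y,W)]$ directly, peeling off the four random variables one at a time using the chain factorization \eqref{AssMarkov}, in the order $\widetilde Y_3$, then $\widetilde Y_2$, then $W$, then $\widetilde Y_1$. A first observation removes the unrestricted transitions from the picture: whenever $\widetilde y_2\in\{1,3\}$ the value $m(\widetilde y,w)$ does not depend on $\widetilde y_3$, so summing $p_3(\cdot\mid\widetilde y_2,w)$ is trivial; and $m$ vanishes unless $(\widetilde y_1,\widetilde y_2)$ lies in $\{(0,2),(0,3),(1,1),(1,2)\}$. Hence only the logit pieces \eqref{AssLogit} and the $w$-free transition $p_3(\cdot\mid 2,w)$ are actually used, while $f(w\mid\widetilde y_1)$, $p_3(\cdot\mid 1,w)$, $p_3(\cdot\mid 3,w)$ enter only through the normalization $\sum p_3=\sum f=1$.

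\textbf{Step 1 (integrate out $\widetilde Y_3$).} Abbreviate $a=\pi_{2,1}(w)$, $b=\pi_{2,2}(w)$, $c=\pi_3$, and set
\[
R(w):=\frac{e^{c}-e^{b}}{e^{a}-e^{b}},\qquad S(w):=\frac{e^{a}-e^{c}}{e^{a}-e^{b}},\qquad\text{so that } R(w)+S(w)=1.
\]
Rewriting each $\exp[\cdot]$‑ratio in \eqref{DefMgeneral} in terms of $e^{a},e^{b},e^{c}$ and using $\Lambda(c)=e^{c}/(1+e^{c})$, a short computation gives, for $g(\widetilde y_1,\widetilde y_2,w):=\sum_{\widetilde y_3}p_3(\widetilde y_3\mid\widetilde y_2,w)\,m(\widetilde y_1,\widetilde y_2,\widetilde y_3,w)$,
\[
g(0,3,w)=e^{\pi_1-c},\quad g(1,1,w)=-1,\quad g(0,1,w)=g(1,3,w)=0,
\]
\[
g(0,2,w)=e^{\pi_1-c}\,R(w)\,\frac{1+e^{a}}{1+e^{c}},\qquad g(1,2,w)=-\,S(w)\,\frac{1+e^{b}}{1+e^{c}}.
\]
(The products $p_2(2\mid w)\,g(\widetilde y_1,2,w)$ stay finite even when $a=b$, since then $p_2(2\mid w)=0$ and the factor $e^{a}-e^{b}$ in $p_2(2\mid w)$ clears the denominator of $R(w),S(w)$; so the degenerate case $\pi_{2,1}(w)=\pi_{2,2}(w)$ requires no separate argument.)

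\textbf{Step 2 (integrate out $\widetilde Y_2$ — the crux).} Using $p_2(1\mid w)=\tfrac{1}{1+e^{a}}$, $p_2(3\mid w)=\tfrac{e^{b}}{1+e^{b}}$, $p_2(2\mid w)=\Lambda(a)-\Lambda(b)=\tfrac{e^{a}-e^{b}}{(1+e^{a})(1+e^{b})}$, together with $(e^{a}-e^{b})R(w)=e^{c}-e^{b}$ and $(e^{a}-e^{b})S(w)=e^{a}-e^{c}$, one finds after cancellation that $h(\widetilde y_1,w):=\sum_{\widetilde y_2=1}^{3}p_2(\widetilde y_2\mid w)\,g(\widetilde y_1,\widetilde y_2,w)$ is \emph{independent of $w$}:
\[
h(0,w)=e^{\pi_1-\pi_3}\,\Lambda(\pi_3),\qquad h(1,w)=-\bigl(1-\Lambda(\pi_3)\bigr).
\]
The point is that the ordered‑logit increment $p_2(2\mid w)$ carries exactly the factor $(e^{a}-e^{b})$ that clears the denominator inside $R(w),S(w)$, and the residual terms combine with $p_2(1\mid w)$ or $p_2(3\mid w)$ to produce a $w$‑free expression; this cancellation is precisely what the particular form of $m$ in \eqref{DefMgeneral} is built to achieve. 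This is the step that carries all the content — the rest is bookkeeping.

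\textbf{Steps 3–4 (integrate out $W$, then $\widetilde Y_1$).} Since $h(\widetilde y_1,w)$ does not depend on $w$ and $\sum_w f(w\mid\widetilde y_1)=1$, summing over $w$ leaves $h(\widetilde y_1,\cdot)$ unchanged. Finally, with $p_1(0)=\tfrac{1}{1+e^{\pi_1}}$, $p_1(1)=\tfrac{e^{\pi_1}}{1+e^{\pi_1}}$, and $\Lambda(\pi_3)=\tfrac{e^{\pi_3}}{1+e^{\pi_3}}$, $1-\Lambda(\pi_3)=\tfrac{1}{1+e^{\pi_3}}$,
\[
\mathbb{E}[m(\widetilde Y,W)]=\frac{1}{1+e^{\pi_1}}\,e^{\pi_1-\pi_3}\,\frac{e^{\pi_3}}{1+e^{\pi_3}}-\frac{e^{\pi_1}}{1+e^{\pi_1}}\cdot\frac{1}{1+e^{\pi_3}}=0,
\]
which is the claim. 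The main obstacle is Step 2; the cleanest route there is to clear every exponential ratio into polynomials in $e^{\pi_{2,1}(w)},e^{\pi_{2,2}(w)},e^{\pi_3}$ before combining, so that the telescoping against the weights $p_2(\cdot\mid w)$ is transparent, rather than manipulating the $\Lambda(\cdot)$'s directly.
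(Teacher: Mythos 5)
Your proof is correct and follows essentially the same route as the paper's: both compute the conditional expectation of $m$ given $\widetilde Y_1$ and $W$ (the paper's $g(\widetilde y_1,w)$, your $h(\widetilde y_1,w)$ obtained by first summing out $\widetilde Y_3$ and then $\widetilde Y_2$), verify that it is independent of $w$ and equals $e^{\pi_1-\pi_3}\Lambda(\pi_3)$ resp.\ $-(1-\Lambda(\pi_3))$, and then finish by summing against $f(\cdot\mid\widetilde y_1)$ and the logit $p_1$. Your intermediate formulas all check out; the only difference is bookkeeping (two explicit marginalization steps instead of one), plus your helpful side remark on the degenerate case $\pi_{2,1}(w)=\pi_{2,2}(w)$, which the paper does not discuss.
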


\begin{proof}
     Define
          \begin{align*}
           g(\widetilde y_1,w)
         &  :=    \mathbb{E}\left[ m(\widetilde Y, W) \, \Big| \, \widetilde Y_1 = \widetilde y_1, \, W=w  \right] 
         \\
           &=  \sum_{\widetilde y_2 \in \{1,2,3\}} 
            \,  \sum_{\widetilde y_3 \in \{0,1\}} \, m(\widetilde y, w)  \; \, p_3(\widetilde y_3 \,| \,\widetilde y_2,w) \; \,  p_2(\widetilde y_2 \,| \,w ) ,
     \end{align*}
     where $\widetilde y = ( \widetilde y_1, \widetilde y_2, \widetilde y_3)$.
      Using  the expressions for  $ p_2(\widetilde y_2 \,| \,w) $, 
        $p_3(\widetilde y_3 \,| \,\widetilde y_2,w)$, and $m(\widetilde y, w) $
      in \eqref{AssLogit} and \eqref{DefMgeneral}      
      one finds that  for $ \widetilde y_1 = 1$ we have 
      \begin{align}
             g(1,w)&=
           -  \left\{ 1 -  \Lambda[    \pi_{2,1}(w)  ]  \right\}
          -  \left\{  \Lambda[   \pi_{2,1}(w)  ] -  \Lambda[  \pi_{2,2}(w)  ]   \right\}  \times
       \nonumber   \\ & \qquad  \qquad \times
         \underbrace{
           \left(
              \frac{ \left[  1 -  \Lambda(   \pi_{3}  ) \right] \left\{    1 - \exp[\pi_3 - \pi_{2,1}(w)  ] \right\}   }
             { 1 -  \exp\left[  \pi_{2,2}(w)  -  \pi_{2,1}(w)  \right]  }  
           +    \frac{ \Lambda(   \pi_{3}  )  \left\{    \exp[  \pi_{2,1}(w) - \pi_3 ] - 1  \right\}   }
             {  \exp\left[    \pi_{2,1}(w) -  \pi_{2,2}(w) \right] - 1   }   
             \right) }_{\displaystyle = \frac{\Lambda[\pi_{2,1}(w)] - \Lambda(\pi_3)} {  \Lambda[   \pi_{2,1}(w)  ] -  \Lambda[  \pi_{2,2}(w)  ]   } }
           \nonumber   \\
          &=      -  \left[ 1  -  \Lambda(\pi_3)  \right] , 
           \label{ResultForG1}
      \end{align}
      and analogously one calculates for $ \widetilde y_1 = 0$ that
      \begin{align}
             g(0,w ) &= 
           \exp\left(   \pi_1  - \pi_3  \right)  \Lambda(\pi_3)   .
                      \label{ResultForG0} 
      \end{align}
     Notice that $g(\widetilde y_1,w) $ therefore does not depend on $w$, so we can simply write 
      $g(\widetilde y_1) := g(\widetilde y_1,w) $ in the following.
      Using \eqref{ResultForG1}, \eqref{ResultForG0}, and the expression for $p_1(\widetilde y_1 ) $ in \eqref{AssLogit} 
      we obtain that 
     $$
         \sum_{\widetilde y_1 \in \{0,1\}}   g(\widetilde y_1 ) \, p_1(\widetilde y_1 )  = 0 .
     $$
     Together with $\sum_{w \in \{1,\ldots,Q\}} f(w \,|\, \widetilde y_1 ) = 1$, this gives
     \begin{align*}
           \mathbb{E}\left[ m(\widetilde Y, W)   \right]
      &=      
            \sum_{\widetilde y_1 \in \{0,1\}} 
          \,  \sum_{w \in \{1,\ldots,Q\}}
           \,  \sum_{\widetilde y_2 \in \{1,2,3\}} 
            \,  \sum_{\widetilde y_3 \in \{0,1\}} \, m(\widetilde y, w)  \, p(\widetilde y, w )  
       \\
        &=       \sum_{\widetilde y_1 \in \{0,1\}} 
          \,  \sum_{w \in \{1,\ldots,Q\}}    g(\widetilde y_1 ) \,   f(w \,|\, \widetilde y_1 )   \;  p_1(\widetilde y_1 ) 
       \\
         &=     \sum_{\widetilde y_1 \in \{0,1\}} 
          \,    g(\widetilde y_1 )  \underbrace{\left[ \sum_{w \in \{1,\ldots,Q\}}   f(w \,|\, \widetilde y_1 )  \right]}_{=1}      p_1(\widetilde y_1 )           
        \\
         &=     \sum_{\widetilde y_1 \in \{0,1\}}   g(\widetilde y_1 ) \, p_1(\widetilde y_1  )  = 0 ,
     \end{align*}
     which is what we wanted to show.
\end{proof}

The following lemma is similar to Lemma~\ref{lemma:MASTER} above, but the random variables 
$\widetilde Y_1,\widetilde Y_2,\widetilde Y_3$ and their distributional assumptions are now different, and the lemma should be 
understood independently from any notation established above.
 
\begin{lemma}
\label{lemma:MASTER2}
Let  $  \widetilde Y_1,   \widetilde Y_2,   \widetilde Y_3 \in \{0,1\}   $ and  $W,V \in \{1,\ldots,Q\}$ 
be random variables such that the joint distribution of $\widetilde Y = (\widetilde Y_1,\widetilde Y_2,\widetilde Y_3)$, $W$, and $V$
satisfies 
\begin{align*}
    {\rm Pr}\left( \widetilde Y  = \widetilde y  \; \; \& \; \;   W = w  \; \; \& \; \;   V =v \right) 
     &=  p_3(\widetilde y_3 \,| \,v) \; \,  g(v \,|\, \widetilde y_2,w) \; \, p_2(\widetilde y_2 \,| \,w) \;\,  f(w \,|\, \widetilde y_1)   \;\,  p_1(\widetilde y_1)   ,
\end{align*}
where $\widetilde y = (\widetilde y_1,\widetilde y_2,\widetilde y_3)$, and the functions
$p_3$, $g$, $p_2$, and $f$ are appropriate conditional probabilities,
while $p_1(\widetilde y_1)$  is the marginal distribution of $ \widetilde Y_1$.
For $p_1(\widetilde y_1)$,  $p_2(\widetilde y_2 \,| \,w)$, and $p_3(\widetilde y_3 \,| \,v) $ we assume logistic binary choice models:
\begin{align*}
     p_1(\widetilde y_1  ) &=   \Lambda\left[ (2 \,\widetilde y_1-1) \, \pi_{1}    \right]  ,
&
    p_2(\widetilde y_2 \,|\, w) &=  \Lambda\left[ (2 \,\widetilde y_2-1) \, \pi_2(w)    \right] ,
&
     p_3(\widetilde y_3 \,| \, v ) &=    \Lambda\left[ (2 \,\widetilde y_3-1) \, \pi_3(v)    \right] ,
\end{align*}
where  
$\pi_1 \in \mathbb{R}$ is a constant,
and  $ \pi_{2}, \pi_{3} : \,\{1,\ldots,Q\}\rightarrow \mathbb{R}$  are functions.
The only assumption that we impose on $f(w \,|\, \widetilde y_1) $ and $g(v \,|\, \widetilde y_2,w)$ is that
$
      g(v \,|\, 1,w) = g(v \,|\, 1) 
$;
that is, conditional on $\widetilde Y_2=1$ the distribution of $V$ is independent of $W$.
Furthermore, let
$m : \{0,1\}^3 \times \{1,\ldots,Q\}^2  \rightarrow \mathbb{R}$ 
be given by
 \begin{align*}
  m(\widetilde y, w,v)   &:= 
    \left\{  \begin{array}{ll}
           \exp\left[   \pi_1  - \pi_2(w)  \right]     & \text{if }   \widetilde y =(0,1,0), 
            \\
         \exp\left[   \pi_1  -  \pi_3(v)  \right]     & \text{if }  \widetilde y =(0,1,1),        
            \\
         -1  & \text{if }  (\widetilde y_1, \widetilde y_2)=(1,0) ,                        
            \\
     \exp\left[    \pi_3(v)  - \pi_2(w) \right]  - 1  & \text{if }  \widetilde y =(1,1,0) , 
            \\
                      0 & \text{otherwise}.
       \end{array} \right.
\end{align*}
       We then have
     \begin{align*}
          \mathbb{E}\left[ m(\widetilde Y, W,V)   \right] = 0.
     \end{align*}
\end{lemma}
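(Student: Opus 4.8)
The plan is to prove $\mathbb{E}[m(\widetilde Y,W,V)]=0$ by iterated expectations, integrating out the random variables in the reverse of the order in which the assumed factorization generates them, i.e.\ first $\widetilde Y_3$, then $V$, then $\widetilde Y_2$, then $W$, then $\widetilde Y_1$; this is exactly the strategy used in the proof of Lemma~\ref{lemma:MASTER} above and in the binary choice arguments of \cite{honore2020dynamic}. Throughout I would use the elementary logistic identities $\Lambda(\xi)\,e^{-\xi}=1-\Lambda(\xi)$ and $\bigl(1-\Lambda(\xi)\bigr)e^{\xi}=\Lambda(\xi)$. For the first step, since $\widetilde Y_3$ depends on the other variables only through $V$, with $\Pr(\widetilde Y_3=\widetilde y_3\mid V=v)=\Lambda[(2\widetilde y_3-1)\pi_3(v)]$, I would compute $h(\widetilde y_1,\widetilde y_2,w,v):=\mathbb{E}[m\mid \widetilde Y_1=\widetilde y_1,\widetilde Y_2=\widetilde y_2,W=w,V=v]$: for $\widetilde y_2=0$ one gets $h=-1$ if $\widetilde y_1=1$ and $h=0$ if $\widetilde y_1=0$, while for $\widetilde y_2=1$ the cancellations give $h(0,1,w,v)=e^{\pi_1}\bigl(1-\Lambda[\pi_3(v)]\bigr)/\Lambda[\pi_2(w)]$ and $h(1,1,w,v)=\Lambda[\pi_3(v)]\,e^{-\pi_2(w)}-\bigl(1-\Lambda[\pi_3(v)]\bigr)$.

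Next I would average $h$ over $V$ against $g(v\mid \widetilde y_2,w)$. The key observation is that $m$, and hence $h$, is supported on $\{\widetilde Y_2=1\}\cup\{\widetilde Y_1=1,\widetilde Y_2=0\}$, and on the latter event $h\equiv-1$ does not involve $v$; so the only $V$-integration that matters occurs at $\widetilde Y_2=1$, where the hypothesis $g(v\mid1,w)=g(v\mid1)$ makes the conditional law of $V$ independent of $w$. Writing $\bar\Lambda_{3}:=\sum_{v}\Lambda[\pi_3(v)]\,g(v\mid1)$, this delivers $\mathbb{E}[m\mid \widetilde Y_1=\widetilde y_1,\widetilde Y_2=\widetilde y_2,W=w]$ equal to $-1$ for $(\widetilde y_1,\widetilde y_2)=(1,0)$, to $0$ for $(0,0)$, to $e^{\pi_1}(1-\bar\Lambda_{3})/\Lambda[\pi_2(w)]$ for $(0,1)$, and to $e^{-\pi_2(w)}\bar\Lambda_{3}-(1-\bar\Lambda_{3})$ for $(1,1)$.

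It then remains to integrate out $\widetilde Y_2$, then $W$, then $\widetilde Y_1$. Averaging over $\widetilde Y_2$ against $\Pr(\widetilde Y_2=\widetilde y_2\mid W=w)=\Lambda[(2\widetilde y_2-1)\pi_2(w)]$ and using $\Lambda[\pi_2(w)]\,e^{-\pi_2(w)}=1-\Lambda[\pi_2(w)]$ gives $\mathbb{E}[m\mid \widetilde Y_1=\widetilde y_1,W=w]=e^{\pi_1}(1-\bar\Lambda_{3})$ for $\widetilde y_1=0$ and $=-(1-\bar\Lambda_{3})$ for $\widetilde y_1=1$; crucially this no longer depends on $w$, so integrating out $W$ against the (otherwise unrestricted) distribution $f(\cdot\mid\widetilde y_1)$ leaves these two numbers unchanged. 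Finally, averaging over $\widetilde Y_1$ against $\Lambda[(2\widetilde y_1-1)\pi_1]$ and using $\bigl(1-\Lambda[\pi_1]\bigr)e^{\pi_1}=\Lambda[\pi_1]$ yields $\mathbb{E}[m(\widetilde Y,W,V)]=\Lambda[\pi_1](1-\bar\Lambda_{3})-\Lambda[\pi_1](1-\bar\Lambda_{3})=0$, as claimed.

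The main obstacle is not conceptual but is the bookkeeping in the first two steps: checking that the logistic identities collapse $h$, and then its $V$-average, to the clean forms above. The single genuinely load-bearing hypothesis is $g(v\mid1,w)=g(v\mid1)$, which is invoked exactly once --- when integrating out $V$ --- precisely to guarantee that the resulting integrand at $\widetilde Y_2=1$ depends on $w$ only through $\pi_2(w)$, so that the subsequent cancellation against $p_2$ and the integration against the unrestricted $f$ both go through.
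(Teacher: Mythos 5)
Your proof is correct — I checked each intermediate expression (the forms of $h(0,1,w,v)$ and $h(1,1,w,v)$, the $V$-average using $g(v\mid 1,w)=g(v\mid 1)$, the cancellation of the $w$-dependence after averaging over $\widetilde Y_2$, and the final cancellation over $\widetilde Y_1$) and each step goes through. The paper itself states Lemma~\ref{lemma:MASTER2} without proof, remarking only that it is analogous to Lemma~\ref{lemma:MASTER}; your argument is exactly the iterated-expectations strategy used in the paper's proof of that companion lemma, so it is the intended (and here, the missing) argument.
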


\begin{proof}
    This lemma is a restatement 
    of Lemma 6 in the
    2021 arXiv version of 
    \cite{honore2020dynamic}, 
    and the proof can be found there.
\end{proof}

\bigskip

Using Lemma~\ref{lemma:MASTER} and~\ref{lemma:MASTER2}, we are now ready to prove the two main text theorems.

\bigskip

\begin{proof}[\bf Proof of Theorem~\ref{th:momentsT3}]
We consider the three cases $q_2 \in \{2,\ldots,Q-1\}$, $q_2=Q$, and $q_2=1$ separately.

  \underline{Case $q_2 \in \{2,\ldots,Q-1\}$:} \; 
   In this case, we define
   \begin{align*}
           W &:= Y_1 ,
       &
        \widetilde Y_1 &:=   \mathbbm{1}\left\{  Y_1 > q_1 \right\} ,
       &
        \widetilde Y_3 &:=  \mathbbm{1}\left\{  Y_3 > q_3 \right\} ,
         &
        \widetilde Y_2 &:= \left\{   
              \begin{array}{ll}  1 & \text{if } Y_2 < q_2 ,
                           \\
                                     2 & \text{if } Y_2 = q_2 ,
                           \\
                                     3 & \text{if } Y_2 > q_2 .
           \end{array} \right.
   \end{align*}
   Our ordered logit model  in \eqref{model} then implies that the joint distribution of
   $\widetilde Y = (\widetilde Y_1,\widetilde Y_2,\widetilde Y_3)$ and $W$ conditional on
     $A=\alpha$, $Y_0=y_0$, $X=(x_1,x_2,x_3)$,  and  $\theta=\theta^0$
   satisfies \eqref{AssMarkov} and \eqref{AssLogit}, as long as we choose
\begin{align*}
      f(y_1 \,|\, 1 ) &= {\rm Pr}\left(  Y_1=y_1\, \big| \, Y_1 > q_1 , \, Y_0=y_0, \, X=x, \, A=\alpha \right) ,
     \\
      f(y_1 \,|\, 0 ) &= {\rm Pr}\left(  Y_1=y_1\, \big| \, Y_1 \leq q_1 , \, Y_0=y_0, \, X=x, \, A=\alpha \right) ,
     \\
      p_3(\widetilde y_3 \,| \, 1, \, y_1) &=  \left\{
    \begin{array}{ll}    
       {\rm Pr}\left(  Y_3 \leq q_3  \, \big| \, Y_2 < q_2  , \, Y_1=y_1, \, Y_0=y_0, \, X=x, \, A=\alpha \right)   & \text{if $\widetilde y_3=0$,} 
      \\
     {\rm Pr}\left(  Y_3 > q_3  \, \big| \, Y_2 < q_2  , \, Y_1=y_1, \, Y_0=y_0, \, X=x, \, A=\alpha \right)    & \text{if $\widetilde y_3=1$,}
    \end{array}
    \right.      
     \\
      p_3(\widetilde y_3 \,| \, 3, \, y_1) &=  \left\{
    \begin{array}{ll}    
       {\rm Pr}\left(  Y_3 \leq q_3  \, \big| \, Y_2 > q_2  , \, Y_1=y_1, \, Y_0=y_0, \, X=x, \, A=\alpha \right)   & \text{if $\widetilde y_3=0$,} 
      \\
     {\rm Pr}\left(  Y_3 > q_3  \, \big| \, Y_2 > q_2  , \, Y_1=y_1, \, Y_0=y_0, \, X=x, \, A=\alpha \right)    & \text{if $\widetilde y_3=1$,}
    \end{array}
    \right.      
\end{align*}
and
\begin{align*}       
      \pi_{1} &= \alpha + z(y_0,x_{1},\theta^0)  - \lambda_{q_1} =   \alpha +  x'_1 \, \beta^0 + \gamma^0_{y_0}  - \lambda_{q_1} ,
      \\
      \pi_{2,1}(y_1) &= \alpha + z(y_1,x_{2},\theta^0)  - \lambda_{q_2-1} =  \alpha +   x'_2 \, \beta^0 + \gamma^0_{y_1} -   \lambda_{q_2-1},
      \\
       \pi_{2,2}(y_1) &= \alpha + z(y_1,x_{2},\theta^0) - \lambda_{q_2}  =  \alpha + x'_2 \, \beta^0 + \gamma^0_{y_1} - \lambda_{q_2} ,
     \\
       \pi_3 &=  \alpha +  z(q_2,x_{3},\theta^0) -  \lambda_{q_3} = \alpha +   x'_3 \, \beta^0 + \gamma^0_{q_2} - \lambda_{q_3} ,
\end{align*}
where $w=y_1$, and $z(y_{t-1},x_t,\theta) $ is defined in \eqref{DefSingleIndex}.
Plugging those expressions for $\pi_{1}$, $  \pi_{2,1}(w)$,  $\pi_{2,2}(w)$ and $\pi_3$
into the moment function $m(\widetilde y, w)$   in \eqref{DefMgeneral}, we find that this moment function exactly 
coincides with $m_{y_0,q_1,q_2,q_3}(y,x,\theta^0) $ in equation \eqref{MomentFunctionsT3} of the main text.
Thus, by applying Lemma~\ref{lemma:MASTER} to the distribution of $Y$ conditional on  $A=\alpha$, $Y_0=y_0$, and $X=x$
(the lemma does not feature those conditioning variables, which is why we are applying the lemma to the conditional distribution), we obtain
\begin{align*}
\mathbb{E} \left[ m_{y_0,q_1,q_2,q_3}(Y,X,\theta^0) \, \big| \, Y_0=y_0, \, X=x, \, A=\alpha \right] &= 0 ,
\end{align*}
which concludes the proof for the case $q_2 \in \{2,\ldots,Q-1\}$.

\medskip

   \underline{Case $q_2 =Q $:} \; 
   In this case, we choose
   \begin{align*}
           W &:= Y_1 ,
           &
           V &:= Y_2 ,           
       &
        \widetilde Y_1 &:=   \mathbbm{1}\left\{  Y_1 > q_1 \right\} ,
       &
        \widetilde Y_2 &:=  \mathbbm{1}\left\{  Y_2 = Q \right\} ,
       &
        \widetilde Y_3 &:=  \mathbbm{1}\left\{  Y_3 > q_3 \right\} .
   \end{align*}
   Our ordered logit model  in \eqref{model} then implies that the joint distribution of
   $\widetilde Y = (\widetilde Y_1,\widetilde Y_2,\widetilde Y_3)$, $W$, and $V$ conditional on
     $A=\alpha$, $Y_0=y_0$, $X=(x_1,x_2,x_3)$,  and  $\theta=\theta^0$
   satisfies  the assumptions of Lemma~\ref{lemma:MASTER2}, as long as we choose
\begin{align*}
      f(y_1 \,|\, 1 ) &= {\rm Pr}\left(  Y_1=y_1\, \big| \, Y_1 > q_1 , \, Y_0=y_0, \, X=x, \, A=\alpha \right) ,
     \\
      f(y_1 \,|\, 0 ) &= {\rm Pr}\left(  Y_1=y_1\, \big| \, Y_1 \leq q_1 , \, Y_0=y_0, \, X=x, \, A=\alpha \right) ,
     \\
      g(y_2 \,|\, 1) &=  \mathbbm{1}\left\{  y_2 = Q \right\} ,
     \\
      g(y_2 \,|\, 0, y_1) &= {\rm Pr}\left(  Y_2=y_2\, \big| \, Y_2 < Q , \, Y_1=y_1, \, X=x, \, A=\alpha \right) ,
   \end{align*}
and
\begin{align*}       
      \pi_{1} &= \alpha + z(y_0,x_{1},\theta^0)  - \lambda_{q_1} =   \alpha +  x'_1 \, \beta^0 + \gamma^0_{y_0}  - \lambda_{q_1} ,
      \\
      \pi_{2}(y_1) &= \alpha + z(y_1,x_{2},\theta^0)  - \lambda_{Q-1} =  \alpha +   x'_2 \, \beta^0 + \gamma^0_{y_1} -   \lambda_{Q-1},
     \\
       \pi_3(y_2) &=  \alpha +  z(y_2,x_{3},\theta^0) -  \lambda_{q_3} = \alpha +   x'_3 \, \beta^0 + \gamma^0_{y_2} - \lambda_{q_3} ,
\end{align*}
where $w=y_1$ and $v=y_2$, and $z(y_{t-1},x_t,\theta) $ is defined in \eqref{DefSingleIndex}.
Plugging those expressions for $\pi_{1}$, $  \pi_{2}(y_1)$,  and $\pi_3(y_2)$
into the moment function $m(\widetilde y, w,v)$   in  Lemma~\ref{lemma:MASTER2},  we find that this moment function exactly 
coincides with $m_{y_0,q_1,Q,q_3}(y,x,\theta^0) $ in equation \eqref{MomentFunctionsT3c} of the main text.
Thus, by applying Lemma~\ref{lemma:MASTER2} to the distribution of $Y$ conditional on  $A=\alpha$, $Y_0=y_0$, and $X=x$
(the lemma does not feature those conditioning variables, which is why we are applying the lemma to the conditional distribution), we obtain
\begin{align*}
\mathbb{E} \left[ m_{y_0,q_1,Q,q_3}(Y,X,\theta^0) \, \big| \, Y_0=y_0, \, X=x, \, A=\alpha \right] &= 0 ,
\end{align*}
which concludes the proof for the case $q_2=Q$.

\medskip

   \underline{Case $q_2 =1 $:} \; 
   The result for this case follows from the result for $q_2=Q$ by applying the transformation $Y_t \mapsto Q+1-Y_t $,
   $\lambda_q \mapsto - \lambda_{Q-q}$, $\beta \mapsto - \beta$, $\gamma_q \mapsto - \gamma_{Q+1-Y_t}$, $A_i \mapsto - A_i$.
   This transformation leaves the model probabilities in \eqref{DefProb} unchanged but transforms the moment function in \eqref{MomentFunctionsT3c}
   into the one in \eqref{MomentFunctionsT3b}, implying that this is also a valid moment function.
\end{proof}

\bigskip

\begin{proof}[\bf Proof of Theorem~\ref{th:momentsTgeneralA}]
As was the case in the proof of Theorem~\ref{th:momentsT3}, we consider the three cases $q_2 \in \{2,\ldots,Q-1\}$, $q_2=Q$, and $q_2=1$ separately.

   \underline{Case $q_2 \in \{2,\ldots,Q-1\}$:} \; 
   In this case, we define
   \begin{align*}
           W &:= Y_{s-1} ,
       &
        \widetilde Y_1 &:=   \mathbbm{1}\left\{  Y_t > q_1 \right\} ,
       &
        \widetilde Y_3 &:=  \mathbbm{1}\left\{  Y_{s+1} > q_3 \right\} ,
         &
        \widetilde Y_2 &:= \left\{   
              \begin{array}{ll}  1 & \text{if } Y_s < q_2 ,
                           \\
                                     2 & \text{if } Y_s = q_2 ,
                           \\
                                     3 & \text{if } Y_s > q_2 .
           \end{array} \right.
   \end{align*}
   Let $Y^{t-1}=(Y_{t-1},Y_{t-2},\ldots,Y_0)$.
   Our ordered logit model  in \eqref{model} then implies that the joint distribution of
   $\widetilde Y = (\widetilde Y_1,\widetilde Y_2,\widetilde Y_3)$ and $W$ conditional on
     $A=\alpha$, $Y^{t-1}=y^{t-1}$, $X=(x_1,x_2,x_3)$,  and  $\theta=\theta^0$
   satisfies \eqref{AssMarkov} and \eqref{AssLogit}, as long as we choose
\begin{align*}
      f(y_1 \,|\, 1 ) &= {\rm Pr}\left(  Y_t=y_1\, \big| \, Y_t > q_1 , \, Y^{t-1}=y^{t-1}, \, X=x, \, A=\alpha \right) ,
     \\
      f(y_1 \,|\, 0 ) &= {\rm Pr}\left(  Y_t=y_1\, \big| \, Y_t \leq q_1 , \, Y^{t-1}=y^{t-1}, \, X=x, \, A=\alpha \right) ,
     \\
      p_3(\widetilde y_3 \,| \, 1, \, y_{s-1}) &=  \left\{
    \begin{array}{ll}    
       {\rm Pr}\left(  Y_{s+1} \leq q_3  \, \big| \, Y_s < q_2  , \, Y_{s-1}=y_{s-1}, \, Y^{t-1}=y^{t-1}, \, X=x, \, A=\alpha \right)   & \text{if $\widetilde y_3=0$,} 
      \\
     {\rm Pr}\left(  Y_{s+1} > q_3  \, \big| \, Y_s < q_2  , \, Y_{s-1}=y_{s-1}, \, Y^{t-1}=y^{t-1}, \, X=x, \, A=\alpha \right)    & \text{if $\widetilde y_3=1$,}
    \end{array}
    \right.      
     \\
      p_3(\widetilde y_3 \,| \, 3, \, y_{s-1}) &=  \left\{
    \begin{array}{ll}    
       {\rm Pr}\left(  Y_{s+1} \leq q_3  \, \big| \, Y_s > q_2  , \, Y_{s-1}=y_{s-1}, \, Y^{t-1}=y^{t-1}, \, X=x, \, A=\alpha \right)   & \text{if $\widetilde y_3=0$,} 
      \\
     {\rm Pr}\left(  Y_{s+1} > q_3  \, \big| \, Y_s > q_2  , \,  Y_{s-1}=y_{s-1}, \, Y^{t-1}=y^{t-1}, \, X=x, \, A=\alpha \right)    & \text{if $\widetilde y_3=1$,}
    \end{array}
    \right.      
\end{align*}
and
\begin{align*}       
      \pi_{1} &= \alpha + z(y_{t-1},x_t,\theta^0)  - \lambda_{q_1} =   \alpha +  x'_t \, \beta^0 + \gamma^0_{y_{t-1}}  - \lambda_{q_1} ,
      \\
      \pi_{2,1}(y_1) &= \alpha + z(y_{s-1},x_s,\theta^0)  - \lambda_{q_2-1} =  \alpha +   x'_s \, \beta^0 + \gamma^0_{y_{s-1}} -   \lambda_{q_2-1},
      \\
       \pi_{2,2}(y_1) &= \alpha + z(y_{s-1},x_s,\theta^0) - \lambda_{q_2}  =  \alpha + x'_s \, \beta^0 + \gamma^0_{y_{s-1}} - \lambda_{q_2} ,
     \\
       \pi_3 &=  \alpha +  z(q_s,x_{s+1},\theta^0) -  \lambda_{q_3} = \alpha +   x'_{s+1} \, \beta^0 + \gamma^0_{q_2} - \lambda_{q_3} ,
\end{align*}
where $w=y_{s-1}$, and $z(y_{t-1},x_t,\theta) $ is defined in \eqref{DefSingleIndex}.
Plugging those expressions for $\pi_{1}$, $  \pi_{2,1}(w)$,  $\pi_{2,2}(w)$ and $\pi_3$
into the moment function $m(\widetilde y, w)$   in \eqref{DefMgeneral} we find that this moment function exactly 
coincides with $ m^{(t,s,s+1)}_{y_0,q_1,q_2,q_3}(y,x,\theta^0)$ in equation \eqref{MomentFunctionsTgeneral} of the main text.
Thus, by applying Lemma~\ref{lemma:MASTER} to the distribution of $Y$ conditional on  $A=\alpha$, $Y^{t-1}=y^{t-1}$, and $X=x$
(the lemma does not feature those conditioning variables, which is why we are applying the lemma to the conditional distribution), we obtain
\begin{align*}
\mathbb{E} \left[ m^{(t,s,s+1)}_{y_0,q_1,q_2,q_3}(Y,X,\theta^0) \, \big| \, Y^{t-1}=y^{t-1}, \, X=x, \, A=\alpha \right] &= 0 .
\end{align*}
Applying the law of iterated expectations, we thus also find that
\begin{align*} 
\mathbb{E}\left[ w(Y_{1},\ldots ,Y_{t-1})\,m^{(t,s,s+1)}_{y_0,q_1,q_2,q_3}(Y,X,\theta^0)\,\Big|\,Y_{0}=y_{0},\,X=x,\,A=\alpha \right] & =0 ,
\end{align*}
which concludes the proof for the case $q_2 \in \{2,\ldots,Q-1\}$.

\medskip

   \underline{Case $q_2 =Q $:} \; 
   In this case, we choose
   \begin{align*}
           W &:= Y_{s-1} ,
           &
           V &:= Y_{r-1} ,           
       &
        \widetilde Y_t &:=   \mathbbm{1}\left\{  Y_1 > q_1 \right\} ,
       &
        \widetilde Y_s &:=  \mathbbm{1}\left\{  Y_2 = Q \right\} ,
       &
        \widetilde Y_r &:=  \mathbbm{1}\left\{  Y_3 > q_3 \right\} .
   \end{align*}
   Our ordered logit model  in \eqref{model} then implies that the joint distribution of
   $\widetilde Y = (\widetilde Y_1,\widetilde Y_2,\widetilde Y_3)$, $W$, and $V$  conditional on
     $A=\alpha$, $Y^{t-1}=y^{t-1}$, $X=(x_1,x_2,x_3)$,  and  $\theta=\theta^0$ 
   satisfies  the assumptions of Lemma~\ref{lemma:MASTER2}, as long as we choose
\begin{align*}
      f(y_{s-1} \,|\, 1 ) &= {\rm Pr}\left(  Y_{s-1}=y_{s-1} \, \big| \, Y_t > q_1 , \, Y^{t-1}=y^{t-1}, \, X=x, \, A=\alpha \right) ,
     \\
      f(y_{s-1} \,|\, 0 ) &= {\rm Pr}\left(  Y_{s-1}=y_{s-1} \, \big| \, Y_t \leq q_1 , \, Y^{t-1}=y^{t-1}, \, X=x, \, A=\alpha \right) ,
     \\
      g(y_{r-1}  \,|\, 1) &=  \mathbbm{1}\left\{  y_{r-1} = Q \right\} ,
     \\
      g(y_{r-1} \,|\, 0, y_{s-1}) &= {\rm Pr}\left(  Y_{r-1}=y_{r-1}\, \big| \, Y_s < Q , \, Y_{s-1}=y_{s-1}, \, X=x, \, A=\alpha \right) ,
   \end{align*}
and
\begin{align*}       
      \pi_{1} &= \alpha + z(y_{t-1},x_t,\theta^0)  - \lambda_{q_1} =   \alpha +  x'_t \, \beta^0 + \gamma^0_{y_{t-1}}  - \lambda_{q_1} ,
      \\
      \pi_{2}(y_{s-1}) &= \alpha + z(y_{s-1},x_s,\theta^0)  - \lambda_{Q-1} =  \alpha +   x'_s \, \beta^0 + \gamma^0_{y_{s-1}} -   \lambda_{Q-1},
     \\
       \pi_3(y_{r-1}) &=  \alpha +  z(y_{r-1},x_r,\theta^0) -  \lambda_{q_3} = \alpha +   x'_r \, \beta^0 + \gamma^0_{y_{r-1}} - \lambda_{q_3} ,
\end{align*}
where $w=y_{s-1}$ and $v=y_{r-1}$, and $z(y_{t-1},x_t,\theta) $ is defined in \eqref{DefSingleIndex}.
Plugging those expressions for $\pi_{1}$, $   \pi_{2}(y_{s-1})$,  and $ \pi_3(y_{r-1}) $
into the moment function $m(\widetilde y, w,v)$   in  Lemma~\ref{lemma:MASTER2}  we find that this moment function exactly 
coincides with $m_{y_0,q_1,Q,q_3}(y,x,\theta^0) $ in equation \eqref{MomentFunctionsTgeneralBC} of the main text.
Thus, by applying Lemma~\ref{lemma:MASTER2} to the distribution of $Y$ conditional on  $A=\alpha$, $Y^{t-1}=y^{t-1}$, and $X=x$
(the lemma does not feature those conditioning variables, which is why we are applying the lemma to the conditional distribution) we obtain
\begin{align*}
\mathbb{E} \left[ m^{(t,s,r)}_{y_0,q_1,Q,q_3}(Y,X,\theta^0) \, \Big| \, Y^{t-1}=y^{t-1}, \, X=x, \, A=\alpha \right] &= 0 .
\end{align*}
Applying the law of iterated expectations, we thus also find that
\begin{align*} 
\mathbb{E}\left[ w(Y_{1},\ldots ,Y_{t-1})\,m^{(t,s,r)}_{y_0,q_1,Q,q_3}(Y,X,\theta^0)\,\Big|\,Y_{0}=y_{0},\,X=x,\,A=\alpha \right] & =0 ,
\end{align*}
which concludes the proof for the case $q_2=Q$.

\medskip

   \underline{Case $q_2 =1 $:} \; 
   The result for this case again follows from the result for $q_2=Q$ by applying the transformation $Y_t \mapsto Q+1-Y_t $,
   $\lambda_q \mapsto - \lambda_{Q-q}$, $\beta \mapsto - \beta$, $\gamma_q \mapsto - \gamma_{Q+1-Y_t}$, $A_i \mapsto - A_i$.
\end{proof}

\subsection{Proof of Proposition~\ref{prop:IDgamma}, \ref{prop:IDbeta}, and \ref{prop:IDlambda}}

The following lemma is useful for the proof of Proposition~\ref{prop:IDgamma}.

\begin{lemma}
     \label{lemma:HelpProp1}
     Let $Q\geq2$. Let $B$ be a $Q \times Q$ matrix for which all non-diagonal elements are positive (i.e.\ $B_{q,r}>0$ for $q \neq r$).
     Let $g^0,g \in (0,\infty)^Q$ be two vectors with only positive entries.
     Assume that $B g^0=0$ and $B g = 0$.
     Then there exists $\kappa >0$ such that $g = \kappa \, g^0$.    
\end{lemma}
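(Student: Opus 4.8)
The statement is a linear-algebra fact about the one-dimensional kernel of a matrix whose off-diagonal entries are positive. The plan is to show that under the stated hypotheses the null space of $B$ (restricted to the positive orthant) is exactly the ray $\{\kappa g^0 : \kappa>0\}$. The key observation is that if $Bg^0=0$ with all entries of $g^0$ strictly positive, then for each row $q$ we have $B_{q,q}\,g^0_q = -\sum_{r\neq q} B_{q,r}\,g^0_r < 0$, so every diagonal entry $B_{q,q}$ is strictly negative; hence every entry of $B$ is nonzero and the signs are fixed (negative on the diagonal, positive off it). This rules out degenerate situations and lets me argue that $B$ has rank $Q-1$, but in fact I do not even need the rank statement — I can argue directly.

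\textbf{Main step.} Given the two positive null vectors $g^0$ and $g$, consider the ratios $r_q := g_q/g^0_q > 0$ for $q=1,\dots,Q$, and let $\kappa := \min_q r_q$, attained at some index $q^*$. Set $h := g - \kappa g^0$. Then $h \geq 0$ componentwise, $h_{q^*}=0$, and $Bh = Bg - \kappa B g^0 = 0$. Reading off row $q^*$ of the equation $Bh=0$:
\begin{align*}
0 = (Bh)_{q^*} = B_{q^*,q^*}\,h_{q^*} + \sum_{r\neq q^*} B_{q^*,r}\,h_r = \sum_{r\neq q^*} B_{q^*,r}\,h_r,
\end{align*}
where the last equality uses $h_{q^*}=0$. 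Every term in this sum is nonnegative (since $B_{q^*,r}>0$ and $h_r\geq 0$), so each term must vanish, forcing $h_r=0$ for all $r\neq q^*$. Combined with $h_{q^*}=0$ this gives $h=0$, i.e.\ $g=\kappa g^0$ with $\kappa>0$, which is the claim.

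\textbf{Where the difficulty lies.} There is no serious obstacle here; the argument is short once one sees the ``minimum ratio'' trick, which is the standard device for comparing two positive vectors in the kernel of a Metzler-type matrix (the reasoning is essentially a finite-dimensional Perron--Frobenius / maximum-principle argument). The only thing to be careful about is that $g^0$ has strictly positive entries so that the ratios $r_q$ are well-defined and finite, and that the minimum is actually attained (trivial, since there are finitely many indices); both are guaranteed by $g^0 \in (0,\infty)^Q$. One should also note that the hypothesis $Bg^0=0$ with $g^0>0$ is what pins down the sign pattern of $B$ used implicitly above, but in the displayed computation only the off-diagonal positivity $B_{q^*,r}>0$ is actually invoked, so even that remark can be kept brief.
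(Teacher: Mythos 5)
Your proof is correct and uses essentially the same idea as the paper's: form the nonnegative null vector $h$ by subtracting the minimum-ratio multiple, then read off the row of $Bh=0$ at an index where $h$ vanishes and use off-diagonal positivity to force $h=0$. The only difference is cosmetic — you argue directly from the single minimizing index $q^*$, while the paper frames the same step as a contradiction using the sets $\{q: h_q>0\}$ and $\{q: h_q=0\}$.
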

 
 \begin{proof}[\bf Proof]
     This is a proof by contradiction. Let all assumptions of the lemma be satisfied, 
     and assume that 
     there does \underline{not} exist a $\kappa >0$ such that
      $g = \kappa \, g^0$. Define the vector $h \in [0,\infty)^Q$ 
     and the two  sets ${\cal Q}_+,{\cal Q}_0 \subset \{1,\ldots,Q\}$ by
     \begin{align*}
           h &:=  g^0   -    \left(  \min_{q \in \{1,\ldots,Q\}} \frac{g^0_q} {g_q}  \right) \, g ,
           &
           {\cal Q}_+  &:=   \left\{  q   \, : \,   h_q > 0 \right\} ,
           &
           {\cal Q}_0  &:=   \left\{  q   \, : \,   h_q = 0 \right\}  .
     \end{align*}     
     All elements of $h$ are non-negative by construction,
     and we have $h \neq 0$, because otherwise we would have $g = \kappa \, g^0$ for some $\kappa >0$.
     Therefore, neither $ {\cal Q}_+$ nor ${\cal Q}_0$ are empty sets.
     Furthermore, since
     $h$ is a linear combination of   $g^0$ and $g$, and we have $B g^0= B g = 0$,
     we also have $Bh=0$. This can equivalently be written as
\begin{align*}
      \sum_{r \in  {\cal Q}_+} \,  B_{q,r}  \;    h_r = 0 ,    \qquad \text{for all } q \in \{1,\ldots,Q\},
\end{align*} 
where we dropped the  indices $r$ from the sum for which we have $h_r=0$.

Now, let $q \in {\cal Q}_0$.
We then have  $q \notin  {\cal Q}_+$, and therefore
$B_{q,r} >0$ for all $r \in  {\cal Q}_+$, according to our assumption on $B$.
 We have argued that  ${\cal Q}_+$ is non-empty, and by construction we have $h_q>0$ for $q \in  {\cal Q}_+$. 
We therefore have
\begin{align*}
     \sum_{r \in  {\cal Q}_+} \,  B_{q,r}  \;    h_r > 0.
\end{align*}
The last two displays are the contradiction that we wanted to derive here.
 \end{proof}
 
\bigskip
 
 \begin{proof}[\bf Proof of Proposition~\ref{prop:IDgamma}]
 Let $x_{(1)}=(x_1,x_1,x_1) $.
 For $y_0,q \in \{1,\ldots,Q\}$ we define  
\begin{align*}
     B_{y_0,q}  :=  \left\{  \begin{array}{ll}
                     {\rm Pr}\left( y_1 > 1 \; \& \;  y_2=1 \; \& \;   y_3= 1 \mid Y_0=y_0, \, X=x_{(1)} \right) & \text{if  $y_0 \neq q$ and $q=1$,}
                     \\
                     {\rm Pr}\left( y_1 = q \; \& \;  y_2=1 \; \& \;   y_3> 1 \mid Y_0=y_0, \, X=x_{(1)}  \right) & \text{if  $y_0 \neq q$ and  $q>1$,}
                     \\
                     {\rm Pr}\left( y_1 > 1 \; \& \;  y_2=1 \; \& \;   y_3= 1 \mid Y_0=y_0, \, X=x_{(1)}  \right)
                   \\ \qquad  \qquad  \qquad 
                     -  {\rm Pr}\left( y_1 = 1 \; \& \;  y_2>1   \mid Y_0=y_0, \, X=x_{(1)}  \right)
                      & \text{if  $y_0 = q$ and $q=1$,}
                     \\
                     {\rm Pr}\left( y_1 = q \; \& \;  y_2=1 \; \& \;   y_3> 1 \mid Y_0=y_0, \, X=x_{(1)}  \right) 
                   \\ \qquad  \qquad  \qquad   
                                      -  {\rm Pr}\left( y_1 = 1 \; \& \;  y_2>1   \mid Y_0=y_0, \, X=x_{(1)}  \right)
                     & \text{if  $y_0 = q$ and  $q>1$.}
       \end{array} \right.
\end{align*}
Let $B$ be the $Q \times Q$ matrix with entries $ B_{y_0,q} $.
Our assumptions guarantee that all the conditional probabilities  that enter into the definition of $B_{y_0,q} $ are non-negative,
and we therefore have
\begin{align}
    B_{y_0,q}  &> 0 , \qquad \text{for all $y_0 \neq q$}    .
    \label{Aconditions}
\end{align}
Applying Theorem~\ref{th:momentsT3}
we find that  
the moment function in \eqref{MomentFunctionID}  satisfies
\begin{align}
   \mathbb{E}\left[ m_{y_0}(y,\gamma^0) \mid Y_0=y_0, \, X=(x_1,x_1,x_1)  \right]=0,
        \qquad \text{for all $y_0 \in \{1,\ldots,Q\}$},
    \label{MomentConditionID}
\end{align}    
where $\gamma^0$ is the true parameter that generates the data.
In the proposition, we assume that  $\gamma \in \mathbb{R}^Q$ is an alternative parameter that satisfies the
same moment conditions.
Let $g^0$ and $g$ be the $Q$-vectors with entries
$g^0_q := \exp\left(  \gamma^0_q   \right)>0$
and $g_q := \exp\left(  \gamma_q   \right)$.
Using the definition of the matrix $B $
we can rewrite the two systems of $Q$ equations in \eqref{MomentConditionID} and \eqref{MomentConditionIDprop1} as
\begin{align}
     B \, g^0 &= 0 ,
     &
     B \, g  = 0 .
     \label{SystemIdentificationX0}
\end{align}
Since we have \eqref{Aconditions} and \eqref{SystemIdentificationX0} we can apply
Lemma~\ref{lemma:HelpProp1}
to find that there exists  $\kappa>0$ such that $g = \kappa \, g^0$. Taking logarithms, 
we thus have $\gamma = \gamma^0+ c$, where $c=\log(\kappa) $.
This is what we wanted to show.
\end{proof}

\bigskip

The following lemma is useful for the proof of Proposition~\ref{prop:IDbeta}.

\begin{lemma}
\label{lemma:INVERSION} Let $K \in \mathbb{N}$. For every $s =
(s_1,\ldots,s_{K-1},+) \in \{-,+\}^{K}$, let $g_s : \mathbb{R}^{K}  
\rightarrow \mathbb{R}$ be a continuous function such that for all $%
\beta \in \mathbb{R}^{K} $, we have

\begin{itemize}
\item[(i)] $g_s(\beta)$ is strictly increasing in $\beta_K$.

\item[(ii)] For all $m \in \{1,\ldots,K-1\}$: If $s_m = +$, then $g_s(\beta)$ is strictly increasing in $\beta_m$.

\item[(iii)] For all  $m \in \{1,\ldots,K-1\}$: If $s_k = -$, then $g_s(\beta)$ is strictly decreasing in $\beta_m$.
\end{itemize}
Then, the system of $2^{K-1}$ equations in $K$ variables
\begin{align*}
g_s(\beta)=0    \qquad \text{for all $s \in  \{-,+\}^{K}$ with $s_K=+$,}
\end{align*}
has at most one solution.
\end{lemma}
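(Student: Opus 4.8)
The plan is to argue by contradiction. Suppose $\beta,\beta' \in \mathbb{R}^{K}$ are two \emph{distinct} solutions of the system, and set $d := \beta' - \beta \neq 0$. The key observation is that the sign pattern of $d$ picks out one of the $2^{K-1}$ admissible indices $s$ in such a way that moving from $\beta$ to $\beta'$ can only increase $g_s$, and does so strictly, which contradicts $g_s(\beta) = g_s(\beta') = 0$.

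First I would normalize. If $d_K \neq 0$, then after possibly interchanging the roles of $\beta$ and $\beta'$ we may assume $d_K > 0$; if $d_K = 0$, then since $d \neq 0$ there is an index $j \in \{1,\ldots,K-1\}$ with $d_j \neq 0$, and we fix one such $j$. In either case define $s = (s_1,\ldots,s_{K-1},+) \in \{-,+\}^{K}$ by setting $s_m := +$ when $d_m \geq 0$ and $s_m := -$ when $d_m < 0$, for $m = 1,\ldots,K-1$. Since $s_K = +$, the function $g_s$ is one of those appearing in the system.

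Next I would move from $\beta$ to $\beta'$ one coordinate at a time: put $\beta^{(0)} := \beta$ and, for $m = 1,\ldots,K$, let $\beta^{(m)}$ be obtained from $\beta^{(m-1)}$ by replacing its $m$-th coordinate with $\beta'_m$, so $\beta^{(K)} = \beta'$. I claim $g_s(\beta^{(m)}) \geq g_s(\beta^{(m-1)})$ for every $m$. For $m \leq K-1$ with $s_m = +$ we have $\beta^{(m)}_m \geq \beta^{(m-1)}_m$ and $g_s$ is increasing in $\beta_m$ by (ii); for $m \leq K-1$ with $s_m = -$ we have $\beta^{(m)}_m < \beta^{(m-1)}_m$ and $g_s$ is strictly decreasing in $\beta_m$ by (iii), so that step is in fact strict; for $m = K$ we have $d_K \geq 0$ and $g_s$ strictly increasing in $\beta_K$ by (i). Moreover at least one step is strict: if $d_K > 0$, step $K$ is strict by (i); if $d_K = 0$, the step corresponding to the fixed index $j$ is strict (by (ii) if $d_j > 0$, by (iii) if $d_j < 0$). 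Chaining the inequalities yields $g_s(\beta') = g_s(\beta^{(K)}) > g_s(\beta^{(0)}) = g_s(\beta)$, contradicting $g_s(\beta) = g_s(\beta') = 0$; hence the system has at most one solution.

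I do not expect a genuine obstacle here — the argument uses only the strict monotonicity in (i)–(iii), not continuity. The one point requiring care is the coordinate $\beta_K$: every admissible index has $s_K = +$, so the sign associated with the last coordinate cannot be chosen freely, which is exactly why the normalization $d_K \geq 0$ (and the separate handling of the case $d_K = 0$) is needed to ensure the final step does not move $g_s$ in the wrong direction.
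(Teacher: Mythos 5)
Your proof is correct: choosing the sign vector $s$ to match the sign pattern of $d=\beta'-\beta$ (with the normalization $d_K\geq 0$) and moving one coordinate at a time so that every step weakly increases $g_s$ and at least one step strictly increases it is exactly the standard argument, and you are right that continuity is not needed. The paper itself does not reproduce a proof but defers to Lemma~2 of Honor\'e and Weidner (2020), whose proof proceeds by essentially the same sign-pattern monotonicity argument, so your proposal matches the intended route.
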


\begin{proof}[\bf Proof]
    This is the same as Lemma~2 in  \cite{honore2020dynamic}, only presented using slightly different notation here.
\end{proof}

\bigskip

\begin{proof}[\bf Proof of Proposition~\ref{prop:IDbeta}]
 For  $s =(s_1,\ldots,s_{K-1},+) \in \{-,+\}^{K}$ we define
\begin{equation}
    g_{s}(\beta) =  \mathbb{E}\left[ m_{y_0,1,1,1}(Y,X,\beta,\gamma_0) \,\Big|\,Y_{0}=y_{0},\;X\in \mathcal{X}_s\right],
    \label{eq:prop2_expectation}
\end{equation}
where $m_{y_0,1,1,1}(y,x,\beta,\gamma)$ is the moment function in \eqref{momIdentify}.
Our assumptions guarantee that the  conditioning sets in \eqref{eq:prop2_expectation}
have positive probability, 
which, together with the definition of $m_{y_0,1,1,1}(y,x,\beta,\gamma_0)$ and $\mathcal{X}_s$, 
guarantee that the functions $g_{s}(\beta)$ satisfy the  monotonicity requirements (i), (ii) and (iii) of Lemma~\ref{lemma:INVERSION}. 
Theorem~\ref{th:momentsT3} guarantees that
\begin{align*}
g_s(\beta^0)=0 ,   \qquad \text{for all $s \in  \{-,+\}^{K}$ with $s_K=+$,}
\end{align*}
where $\beta^0$ is the true parameter value that generates the data.
Equation~\eqref{Condition:prop:IDbeta} in the proposition can equivalently be written as
\begin{align*}
g_s(\beta)=0 ,   \qquad \text{for all $s \in  \{-,+\}^{K}$ with $s_K=+$.}
\end{align*}
According to Lemma~\ref{lemma:INVERSION}, the system of equations in the last two displays can have at most one solution, 
and we, therefore, must have $\beta=\beta_0$.
\end{proof}

\bigskip

\begin{proof}[\bf Proof of Proposition~\ref{prop:IDlambda}]
    The 
    definition of $m_{y_0,q_1,1,1}(y,x,\beta,\gamma,\lambda)$ in \eqref{Moments:prop:IDlambda}
    together with
    the    
    assumptions of the proposition guarantee that
    $g(\lambda) :=  \mathbb{E}\left[ m_{y_0,q_1,1,1}(Y,X,\beta^0,\gamma^0,\lambda) \,\Big|\,Y_{0}=y_{0} \right] $
    is strictly increasing in $\lambda_{q_1} - \lambda_1$ for all $q_1  \in \{2,\ldots,Q-1\}$.
    Theorem~\ref{th:momentsT3} guarantees that $g(\lambda^0)=0$ for the true parameter $\lambda^0$ that generates the data.
    For any $\lambda \in \mathbb{R}^{Q-1}$ that satisfies  $g(\lambda)=0$,
    we therefore must have $\lambda_{q_1} - \lambda_1 = \lambda^0_{q_1} - \lambda^0_1$,
    which implies $\lambda = \lambda^0 + c$ for $c = \lambda_1 -  \lambda^0_1$.    
\end{proof}

\subsection{Lower bound on the number of moment conditions}
\label{app:LowerBound}

\cite{dobronyi2021identification} point out that it is sometimes possible to easily
derive a lower bound on the number of linearly independent
valid moment conditions in a given panel data model. 
The key insight (to be verified below) is that one can write the model
probabilities defined in equation \eqref{DefProb}
as
\begin{equation}
p_{y_{0}}(y,x,\theta ,\alpha)
 =\kappa \left( a \right) \,
\sum_{k=1}^{K}a^{k-1}c_{k}\left( y \right) ,
   \label{Polynomial}
\end{equation}%
for some constant $K \in \{1,2,\ldots\}$, some positive function $\kappa$ of $%
a=\exp \left( \alpha \right) $ that does not depend on $y$,
and some functions $c_k$ of $y$ that do not depend on $a$.
Here, the functions $\kappa$
and $c_k$ also depend on
$y_0$, $x$, $\theta$,
but  those arguments are dropped to focus more clearly on the
dependence on $\alpha$ and $y$. In other words,
$y_0$, $x$, $\theta$ are simply assumed fixed here. The dependence
of all functions and constants on $T$
is also not made explicit
(here or anywhere else in the paper).

Once we have shown \eqref{Polynomial}, then a
a valid moment function must satisfy
\begin{equation}
\sum_{y\in \mathcal{Y}}m(y)\,\sum_{k=1}^{K}a^{k-1}c_{k}\left(
y\right) =0,\qquad \text{for all }a\in \left( 0,\infty \right) ,
\label{EQ: Polynomial}
\end{equation}%
which is equivalent to
\begin{equation*}
\sum_{y\in \mathcal{Y}}\,m(y)\,c_{k}(y)=0,\qquad \text{for all $k\in
\{1,\ldots ,K\}$.}
\end{equation*}%
These are $K$ linear conditions in $Q^T$
unknown parameters $m(y)$. We, therefore, have at least $Q^T -K$ linearly independent solutions $m(y)$. In other words, the
model must have at least $Q^T -K$ linearly independent conditional
moment conditions.

What is left to do now is to show that
\eqref{Polynomial} indeed holds for our
the dynamic order choice model, for
some $K<Q^T$.
For that purpose,
remember that $ \Lambda(\varepsilon) =  [1+\exp(-\varepsilon)]^{-1}$, and also define
$
    \omega_{t,q}  =  X_{t}^{\prime }\,\beta  - \lambda_{q}  ,
$ 
and 
\begin{align*}
    a&=\exp (\alpha ) ,
    &
     w_q &= \exp( \omega_{t,q}  ) ,
     & 
     g_q &=  \exp( \gamma_{q}  ) .
\end{align*}    
Also remember that
$\lambda_{0} = -\infty$ and $\lambda_{Q} = \infty$, which implies that
\begin{align*}
    w_0 &= \infty ,
    &
    w_Q &= 0.
\end{align*}  
According to \eqref{DefProb},
the probability of observing $Y=y$,
conditional on $Y_{0}=y_0$, $X=x$, $A=\alpha$, $\theta=\theta_0$, is then given by
\begin{align*}
p_{y_{0}}(y,x,\theta ,\alpha)
  &=
\prod_{t=1}^{T}
\left\{  \Lambda\Big[  \omega_{t,(y_{t}-1)} +  \gamma_{y_{t-1}} + \alpha \Big] -  \Lambda\Big[   \omega_{t,y_{t}} +  \gamma_{y_{t-1}} + \alpha   \Big]
\right\}
   \\ &= 
   \prod_{t=1}^{T} \left\{   \frac { w_{t,(y_{t}-1)} \, g_{y_{t-1}} \, a  } {1+w_{t,(y_{t}-1)} \, g_{y_{t-1}} \, a }
                          -   \frac { w_{t,y_{t}} \, g_{y_{t-1}} \, a  } {1+w_{t,y_{t}} \, g_{y_{t-1}} \, a } \right\}
   \\ &= \kappa \left( a \right) \, \widetilde p(y,a)           ,
\end{align*}                          
where  
\begin{align*}
 \kappa \left( a \right)  &=   
     \prod_{q=1}^{Q-1} \, \frac 1  {1+w_{t,q} \, g_{y_0} \, a }    \,
      \prod_{t=2}^{T}   \prod_{r=1}^{Q}  \frac 1  {1+w_{t,q} \, g_{r} \, a }  ,
\end{align*}
and
 \begin{align*}
    \widetilde p(y,a)  &=     
    \bigg\{  \mathbbm{1}\{y_1 \neq 1\} \,  w_{1,(y_1-1)} \, g_{y_0} \, a   \prod_{q \in \{1,\ldots,Q-1\} \setminus \{y_1 - 1\}} \, \left(1+w_{1,q} \, g_{y_0} \, a \right) 
        \\ & \qquad \qquad\qquad\qquad 
          -    \mathbbm{1}\{y_1 = Q\}  
                          -    \mathbbm{1}\{y_1 \neq Q\} \,    w_{1,y_1} \, g_{y_0}  \, a \,     \prod_{q \in \{1,\ldots,Q-1\} \setminus \{y_1\}} \, \left(1+w_{1,q} \, g_{y_0} \, a \right)  \bigg\}
                          \\ & \quad 
                   \times       \prod_{t=2}^{T}
                             \bigg\{  \mathbbm{1}\{y_t \neq 1\} \,  w_{t,(y_t-1)} \, g_{y_{t-1}} \, a  
                              \prod_{(q,r) \in \{1,\ldots,Q-1\} \times  \{1,\ldots,Q\} \setminus \{(y_t - 1,y_{t-1}\}} 
                       \, \left(1+w_{t,q} \, g_{r} \, a \right) 
        \\ & \quad
          -    \mathbbm{1}\{y_t = Q\}  
                          -    \mathbbm{1}\{y_t \neq Q\} \,    w_{t,y_t} \, g_{y_{t-1}}  \, a \,   
                              \prod_{(q,r) \in \{1,\ldots,Q-1\} \times  \{1,\ldots,Q\} \setminus \{(y_t ,y_{t-1}\}} 
                             \,   \left(1+w_{t,q} \, g_{y_r} \, a \right)  \bigg\}      .
\end{align*}
Here, $\kappa(a)$ does not depend on $y$,   and $   \widetilde p(y,a) $ is a polynomial in $a$ 
 of order
$$K - 1 =  (Q-1)+(T-1) Q (Q-1) . $$
This implies that a lower bound on the number of linearly independent valid moment conditions in this model is given by
\begin{align*}
    Q^T - K  &= Q^T - Q - (T-1) Q (Q-1) 
    \\
    &= Q^T - (T-1) Q^2 + (T-2) Q  ,
\end{align*}
which is exactly the number of linearly independent valid moment conditions we found in the main text.

\subsection{An alternative specification}\label{app:alternative_specification}

The model analyzed in this paper specifies that the latent variable evolves according to
\begin{align*}
    Y^*_{it} = X_{it}^{\prime }\,\beta  +  \sum_{q=1}^{Q} \gamma_q  \, \mathbbm{1}\left\{ Y_{i,t-1} = q \right\} + A_i + \varepsilon_{it},
\end{align*}
see \eqref{ModelYstar}. An alternative specification would have the dynamics in the lagged latent variable, for example
\begin{align}
    Y^*_{it} = X_{it}^{\prime }\,\beta  +  \gamma Y_{i,t-1}^* + A_i + \varepsilon_{it}, \label{eq:dynamics_latent}
\end{align}

In many cases, one is interested in the dynamics in the actual discrete outcome as in our specification \eqref{ModelYstar}. For example, the outcome of a soccer game is clearly ordered (win, draw, defeat) and there may be strategic or physiological effects of past results on actual outcomes. In such a case, the dynamics are likely to be through the actual outcome rather than some underlying index.

A second example is the empirical application in \cite{muris2020dynamic} on self-reported health status: ``While there are multiple levels of self-reported health, being above or below a threshold at which medical care is demanded in period $t-1$ is essential in determining health status in period $t$. \cite{galamakapteyn} formalized this important issue by building on the seminal work by \cite{grossmanJPE}.''
A third example comes from the analysis of government bonds (see also the discussion in \cite{muris2020dynamic}. Government bonds are ordered ratings (20 tiers), which can be grouped into two categories: Investment Grade (IG) and Non-Investment Grade (NIG). IG Governments can easily raise credit, whereas NIG Governments cannot. This has implication for next period's credit rating for these governments. Thus, the actual lagged value of the ordered variable -- not the latent index -- affects a country’s current credit rating, cf. \cite{rigobon2002}. Additional examples may include the analysis of demand for ordered goods. For example, goods could be available in discrete quantities (8 oz, 12 oz, or 16 oz). In that case, it is more reasonable to model today's choice in terms of the actual discrete choice last period. Goods can also be ordered in terms of quality. In that case, one might model this period's choice  (low, medium, high quality variety of a product) in terms of the actual choice in the previous periods. In conclusion, there are many settings of applied interest where interest is in the relationship between the current and lagged actual value of the discrete variable.

However, in some applications, \eqref{eq:dynamics_latent} may be a more natural specification. With that in mind, we performed a numerical experiment to explore what happens when the techniques in our paper are applied to data from a DGP with dynamics that are linear in the lagged latent variable $Y_{i,t-1}^*$.

Specifically, we consider the case of binary choice, which would be the worst case in terms of approximation error. Specifically, we consider the dynamic binary choice model with fixed effects, with $Q=2$ and $T=3$, with $(X_1,X_2,X_3,Y_0) \in \{0,1\}^4$ with equal probabilities, and set $\beta_{1,0} = 1$. We let the autoregressive parameter range over $$\gamma_0 \in \{-0.9,-0.75,\cdots,0.6,0.75,0.9\},$$ and let the fixed effects takes on the values $A \in \{-2,-1,0,1,2\}$.

We generate data from a DGP1 that is consistent with our model \eqref{ModelYstar}, and from a DGP2 with outcome equation \eqref{eq:dynamics_latent}. For DGP2, we draw $Y_0^* \sim \mathcal N(0,1)$. 

We construct optimal instruments under DGP1 evaluated at the true value of the parameters,
$$
Z^*(y_0,x_1,x_2,x_3;\beta_{10},\gamma_0) = D'(y_0,x_1,x_2,x_3,\beta_{10},\gamma_0) \Omega(y_0,x_1,x_2,x_3;\beta_{10},\gamma_0)
$$
and form the unconditional moments
$$ 
h(b,c) = E\left[Z^*(Y_0,X_1,X_2,X_3;\beta_{10},\gamma_0) m(Y_0,X_1,X_2,X_3;b,c)\right].
$$
To explore the sensitivity of our results to DGP2, we approximate $E(h(b,c))$ using 500000 draws for each value of the conditioning variables. We determine the values of $(b,c)$ that set the moments equal to zero under both DGP1 and DGP2. 
     
Figure \ref{fig:approximation_lagged_latent} reports the results for the regression coefficient. Each panel corresponds to a value of $A$. The autoregressive parameter $\gamma_0$ varies along the horizontal axis. Results for DGP1 are covered by the horizontal line at $\beta_{10} = 1$, as expected. Results for DGP2 are given by the dashed line. Even for large values of $\gamma_0$, the pseudo-true values of $b$ are close to $\beta_{10}=1$, deviating by at most 0.25. Thus, the moment conditions seem to be robust against severe misspecification.
\begin{figure}
\centering
\includegraphics[scale = 0.75]{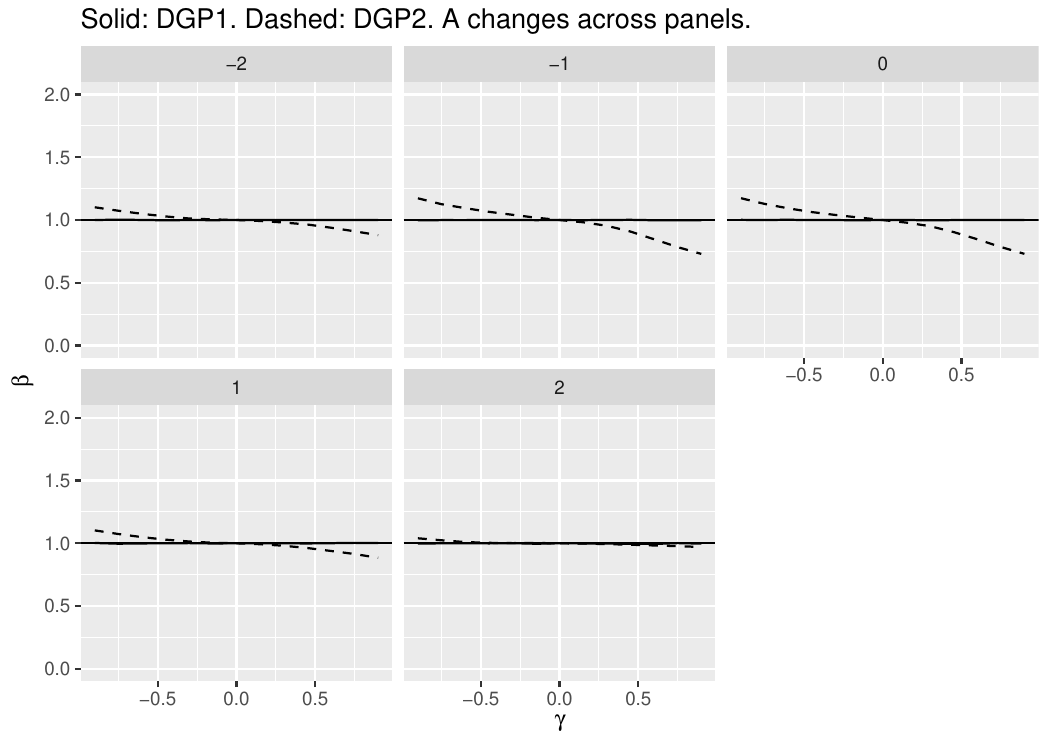}
\caption{Results sensitivity analysis, $\beta$.}
\label{fig:approximation_lagged_latent}
\end{figure}

We also report results for $\gamma$, in Figure \ref{fig:approximation_lagged_latent_gamma}. Some care is required in interpreting these results, because the interpretation of $\gamma$ differs across the two DGPs. When the true $\gamma$ (horizontal axis) is 0, the (pseudo-)true values of $\gamma$ are 0 for both DGPs. Results for DGP1 (solid line) are on the 45 degree line, as expected. Results for DGP2 always have the right sign, and are typically close to the true $\gamma$.
 
\begin{figure}
\centering
\includegraphics[scale = 0.75]{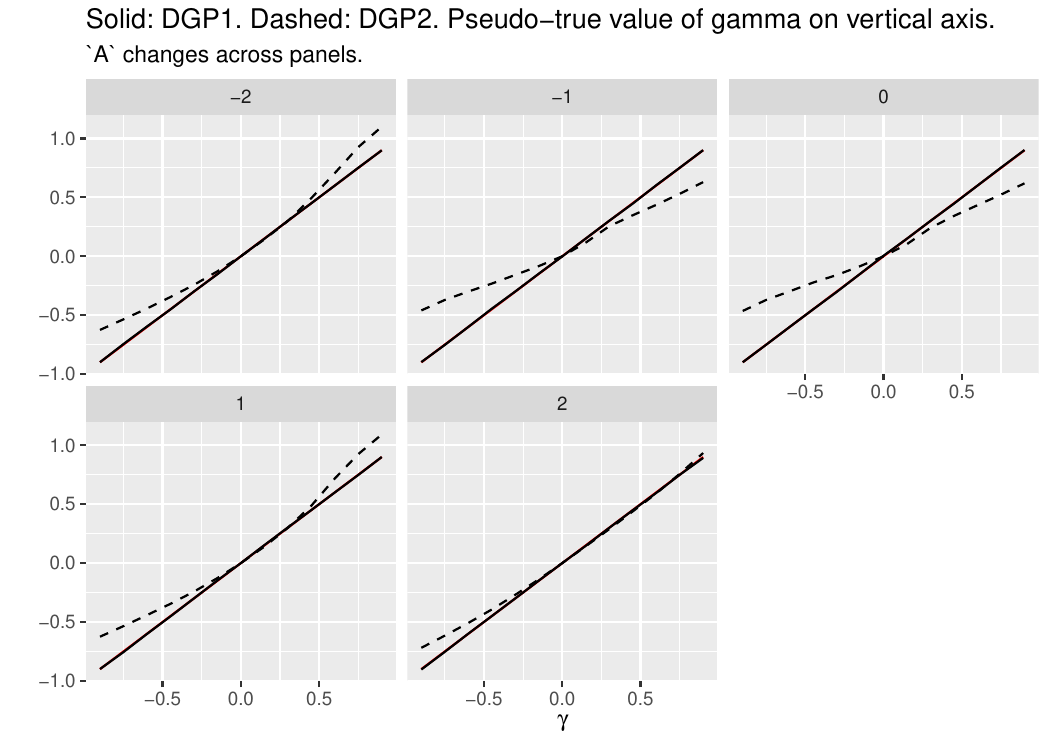}
\caption{Results sensitivity analysis, $\gamma$.}
\label{fig:approximation_lagged_latent_gamma}
\end{figure}

\end{document}